\newtheorem{theorem}{Theorem}
\newtheorem{lemma}{Lemma}
\newtheorem{claim}{Claim}[section]
\newtheorem{corollary}{Corollary}
\newtheorem{definition}{Definition}
\newtheorem{observation}{Observation}
\newtheorem{proposition}{Proposition}
\theoremstyle{definition}
\newtheorem{reduction}{Reduction Rule}[section]
\newcommand{\dg}{{\sf d}}
\newcommand{\mad}{{\sf mad}}
\newcommand{\true}{{\sf true}}
\newcommand{\false}{{\sf false}}
\DeclareMathOperator{\operatorClassNP}{NP}
\newcommand{\classNP}{\ensuremath{\operatorClassNP}}
\DeclareMathOperator{\operatorClassFPT}{FPT\xspace}
\newcommand{\classFPT}{\ensuremath{\operatorClassFPT}\xspace}
\DeclareMathOperator{\operatorClassW}{W}
\newcommand{\classW}[1]{\ensuremath{\operatorClassW[#1]}}
\DeclareMathOperator{\operatorClassParaNP}{Para-NP\xspace}
\newcommand{\classParaNP}{\ensuremath{\operatorClassParaNP}\xspace}
\newlength{\RoundedBoxWidth}
\newsavebox{\GrayRoundedBox}
\newenvironment{GrayBox}[1]%
   {\setlength{\RoundedBoxWidth}{.93\textwidth}
    \def\boxheading{#1}
    \begin{lrbox}{\GrayRoundedBox}
       \begin{minipage}{\RoundedBoxWidth}}%
   {   \end{minipage}
    \end{lrbox}
    \begin{center}
    \begin{tikzpicture}%
       \node(Text)[draw=black!20,fill=white,rounded corners,%
             inner sep=2ex,text width=\RoundedBoxWidth]%
             {\usebox{\GrayRoundedBox}};
        \coordinate(x) at (current bounding box.north west);
        \node [draw=white,rectangle,inner sep=3pt,anchor=north west,fill=white] 
        at ($(x)+(6pt,.75em)$) {\boxheading};
    \end{tikzpicture}
    \end{center}}
\newenvironment{defproblemx}[2][]{\noindent\ignorespaces%
                                \FrameSep=6pt%
                                \parindent=0pt%
                \vspace*{-1.5em}
                \ifthenelse{\isempty{#1}}{%
                  \begin{GrayBox}{\textsc{#2}}%
                }{%
                  \begin{GrayBox}{\textsc{#2} parameterized by~{#1}}%
                }
                \begin{tabular*}{\textwidth}{@{\hspace{.1em}} >{\itshape} p{1.8cm} p{0.8\textwidth} @{}}%
            }{
                \end{tabular*}%
                \end{GrayBox}%
                \ignorespacesafterend
            }
\newcommand{\defproblema}[3]{
  \begin{defproblemx}{#1}
    Input:  & #2 \\
    Task: & #3
  \end{defproblemx}
}%
\newcommand{\cO}{\mathcal{O}}
\newcommand{\Oh}{\mathcal{O}}
\newcommand{\eg}{{\sf \ell}_{EG}}
\newcommand{\ad}{{\sf ad}}
\newcommand{\pname}{\textsc}
\newcommand{\ProblemFormat}[1]{\pname{#1}}
\newcommand{\ProblemIndex}[1]{\index{problem!\ProblemFormat{#1}}}
\newcommand{\ProblemName}[1]{\ProblemFormat{#1}\ProblemIndex{#1}{}\xspace}
\newcommand{\probSTP}{\ProblemName{Longest $(s,t)$-Path}}
\newcommand{\probLPMAD}{\ProblemName{Longest Path Above MAD}}
\newcommand{\probLCMAD}{\ProblemName{Longest Cycle Above MAD}}
\newcommand{\probLCAD}{\ProblemName{Longest Cycle Above AD}}
\newcommand{\probLC}{\ProblemName{Longest Cycle}}
\newcommand{\HamCycle}{\ProblemName{Hamiltonian Cycle}}
\begin{document}

\title{Longest Cycle above Erd{\H{o}}s--Gallai Bound\thanks{The research leading to these results has received funding from the Research Council of Norway via the project  BWCA
(grant no. 314528), Leonhard Euler International Mathematical Institute in Saint Petersburg (agreement no. 075-15-2019-1620),  and  the Austrian Science Fund (FWF) via project Y1329 (Parameterized Analysis in Artificial Intelligence) .}
}

\author{
Fedor V. Fomin\thanks{
Department of Informatics, University of Bergen, Norway.}\\fedor.fomin@uib.no
\and
Petr A. Golovach\addtocounter{footnote}{-1}\footnotemark{}\\petr.golovach@uib.no
\and
Danil Sagunov\thanks{
    St.\ Petersburg Department of V.A.\ Steklov Institute of Mathematics, Russia} \thanks{JetBrains Research, Saint Petersburg, Russia
}\\danilka.pro@gmail.com
\and 
Kirill Simonov\thanks{Algorithms and Complexity Group, TU Wien, Austria}\\kirillsimonov@gmail.com
}

\date{}

\maketitle

\begin{abstract}

In 1959, Erd{\H{o}}s and Gallai proved that  every graph $G$ with average vertex degree $\ad(G)\geq 2$
contains 
 a cycle of length at least $\ad(G)$.
 We provide an algorithm that for $k\geq 0$ in time 
 $2^{\Oh(k)}\cdot n^{\Oh(1)}$ decides whether a $2$-connected $n$-vertex graph $G$ contains a cycle of length at least $\ad(G)+k$. This resolves an open problem  explicitly mentioned in several papers. 
The main ingredients of our algorithm are new graph-theoretical results interesting on their own.

\medskip
\noindent
{\bf Keywords:}  Longest path, longest cycle, fixed-parameter tractability, above guarantee parameterization, average degree, dense graph, Erd{\H{o}}s and Gallai theorem
\end{abstract}

\section{Introduction}\label{sec:intro}
The circumference of a graph is the length of its longest (simple) cycle. 
In 1959,  Erd{\H{o}}s and Gallai~\cite{ErdosG59} gave the following, now classical, lower bound for the circumference of an undirected  graph.

\begin{theorem}[Erd{\H{o}}s and Gallai~\cite{ErdosG59}]\label{thm:EG} 
Every graph with $n$ vertices and more than $\frac{1}{2}(n-1)\ell$ edges ($\ell \geq 2$) contains a cycle of length at least  $\ell +1$. 
\end{theorem}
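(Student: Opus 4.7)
The plan is to argue by induction on $n$. The base cases are easy: for $n \leq \ell$ the hypothesis $|E(G)| > \frac{(n-1)\ell}{2}$ would force $|E(G)| > \binom{n}{2}$ and so is vacuous, while $n = \ell+1$ can be verified by hand (via Ore's theorem when $\ell \geq 4$ and by inspection for $\ell \in \{2,3\}$). One may also assume $G$ is connected: if it were not, summing $|E(G_i)| \leq \frac{(|V(G_i)|-1)\ell}{2}$ over components would contradict the hypothesis, so at least one component already satisfies it on strictly fewer vertices.

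For the inductive step on a connected $G$, I would split on the minimum degree. If there is a vertex $v$ with $\deg_G(v) \leq \ell/2$, then $|E(G-v)| > \frac{(n-1)\ell}{2} - \frac{\ell}{2} = \frac{(n-2)\ell}{2}$, and induction on $G-v$ supplies the desired cycle. Otherwise $\delta(G) > \ell/2$, and I would appeal to the Dirac-type fact that a $2$-connected graph $H$ has circumference at least $\min(2\delta(H), |V(H)|)$. Take $B$ to be $G$ itself when $G$ is $2$-connected, and otherwise a leaf block of $G$ with its unique cut vertex $v$; every vertex of $V(B)\setminus\{v\}$ retains its full $G$-degree in $B$, so $\delta(B) > \ell/2$ as well. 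If $|V(B)| \geq \ell+1$, then $\min(2\delta(B), |V(B)|) \geq \ell+1$ and Dirac gives the cycle already inside $B$. If $|V(B)| \leq \ell$, I would amputate: delete $V(B)\setminus\{v\}$ from $G$. Since $v$ is $B$'s only attachment to the rest of $G$, at most $|E(B)| \leq \binom{|V(B)|}{2}$ edges are destroyed, and the inequality $\binom{|V(B)|}{2} \leq \frac{(|V(B)|-1)\ell}{2}$ (which is exactly equivalent to $|V(B)| \leq \ell$) shows that the remaining graph on $n - |V(B)| + 1$ vertices still strictly exceeds the Erd\H{o}s--Gallai bound, so induction closes the case.

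The main obstacle I anticipate is the razor-thin boundary $|V(B)| = \ell$, and in particular the case $B = K_\ell$: here the Dirac bound falls short by one, and in the amputation step $\binom{\ell}{2}$ matches the allowance $\frac{(\ell-1)\ell}{2}$ exactly, so the argument uses the strict inequality in the hypothesis $|E(G)| > \frac{(n-1)\ell}{2}$ in a non-negotiable way. Everything else in the argument is robust, so the bulk of the care concentrates in correctly preserving this strict inequality through the amputation reduction and in verifying the base case $n = \ell+1$.
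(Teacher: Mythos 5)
Your overall skeleton (induction on $n$; vacuity for $n\le\ell$; the Ore-type base case $n=\ell+1$; discarding sparse components; deleting vertices of degree at most $\ell/2$; finishing with Dirac's bound $\min(2\delta,n)$ when the graph is $2$-connected; and the amputation of a small leaf block, where $|E(B)|\le\binom{|V(B)|}{2}\le\frac{(|V(B)|-1)\ell}{2}$ does preserve the strict inequality) is sound and is essentially the classical argument whose reduction steps this paper retraces in Claim~\ref{cl:reduction}. But there is one genuine gap: in the branch where $G$ has a cut vertex and the chosen leaf block $B$ has $|V(B)|\ge\ell+1$, you assert $\delta(B)>\ell/2$ on the grounds that every vertex of $V(B)\setminus\{v\}$ keeps its full $G$-degree inside $B$. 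That says nothing about the cut vertex $v$ itself: $\dg_G(v)>\ell/2$, but these edges are spread over all the blocks containing $v$, and $\dg_B(v)$ can be as small as $2$. So $\delta(B)$ need not exceed $\ell/2$, and Dirac's theorem in the form you cite (minimum degree over \emph{all} vertices) does not apply to $B$. A ``one exceptional vertex'' strengthening of Dirac's circumference bound would rescue the step, but that is not the statement you invoked and would itself require proof.

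The standard repair --- the one used in the safety proof of Rule~\ref{red:del-block} --- is to handle a cut vertex $v$ by splitting $G$ into $G[X]$ and $G[Y]$ with $X\cup Y=V(G)$, $X\cap Y=\{v\}$, and no edges between $X\setminus Y$ and $Y\setminus X$. Then $(|X|-1)+(|Y|-1)=n-1$ and $E(G)=E(G[X])\sqcup E(G[Y])$, so if both sides obeyed $|E(\cdot)|\le\frac{\ell}{2}(|\cdot|-1)$ the whole graph would too; hence one side still strictly exceeds the Erd\H{o}s--Gallai bound, and since both sides have strictly fewer than $n$ vertices the induction hypothesis applies to it. This single observation replaces both of your block-size subcases (the dichotomy on $|V(B)|$ and the amputation become unnecessary), and after exhausting it together with your component and minimum-degree reductions one is left with a $2$-connected graph with $\delta>\ell/2$ and at least $\ell+1$ vertices, where your application of Dirac is correct.
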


We provide an algorithmic extension of the Erd{\H{o}}s-Gallai theorem: A fixed-parameter tractable  (\classFPT) algorithm with  parameter  $k$,  that decides whether the circumference of a graph is at least $\ell +k$.
 To state our result formally, we need a few definitions. 
 For an undirected  graph $G$ with $n$ vertices and $m$ edges, we define $\eg(G)=\frac{2m}{n-1}$.
 Then by the Erd{\H{o}}s-Gallai theorem, $G$ always has a cycle of length at least  $\eg(G)$. The parameter $\eg(G)$ is closely related to the \emph{average} degree of $G$,  $\ad(G)=\frac{2m}{n}$.   It is easy to see that for every graph $G$ with at least two vertices, $\eg(G)-1\leq \ad(G)< \eg(G)$. 
 
 The \emph{maximum average degree} $\mad(G)$ is the maximum value of $\ad(H)$ taken over all induced subgraphs $H$ of $G$. Note that $\ad(G) \leq \mad(G)$ and $\mad(G)-\ad(G)$ may be arbitrary large.  By Goldberg~\cite{Goldberg84} (see also~\cite{GalloGT89}), $\mad(G) $ can be computed in polynomial time. By Theorem~\ref{thm:EG}, we have that  if $\ad(G)\geq 2$, then $G$ has a cycle of length at least $\ad(G)$ and, furthermore, if $\mad(G)\geq 2$, then there is a cycle of length at least $\mad(G)$.
Based on this guarantee, we define the following problem.

\defproblema{\probLCMAD}%
{A  graph $G$ on $n$ vertices  and an integer $k\geq 0$.}%
{Decide whether $G$ contains a cycle of length at least $\mad(G)+k$.}

Our main result is that this problem is \classFPT  parameterized by $k$. More precisely, we show the following.

\begin{restatable}{theorem}{main}
\label{thm:mad-fpt}
\probLCMAD can be solved in time $2^{\Oh(k)}\cdot n^{\Oh(1)}$ on   $2$-connected graphs.
\end{restatable}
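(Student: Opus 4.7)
The plan is to reduce \probLCMAD to its analog above the ambient average degree, \probLCAD, evaluated on a 2-connected witness subgraph. I will assume, as the main algorithmic content developed earlier in the paper, an FPT algorithm running in $2^{\Oh(k)}\cdot n^{\Oh(1)}$ for \probLCAD on 2-connected inputs. The reduction must arrange that the parameter handed to this subroutine remains $\Oh(k)$, even though $\mad(G) - \ad(G)$ may be arbitrarily large.

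First I would compute, using Goldberg's polynomial-time flow-based algorithm, an induced subgraph $H$ of $G$ attaining $\ad(H) = \mad(G)$. A standard exchange argument shows that every vertex of $H$ has degree at least $\mad(G)/2$ inside $H$: if a vertex $v \in V(H)$ had smaller degree, removing it would strictly increase the average density, contradicting maximality of $H$. In particular, $|V(H)| \geq \mad(G)+1$ and Erd\H{o}s--Gallai already furnishes a cycle of length at least $\mad(G)$ within $H$. If $H$ happens to be $2$-connected, I would simply invoke the \probLCAD subroutine on $H$ with parameter $k$ and return a cycle of length at least $\ad(H)+k = \mad(G)+k$ in $H \subseteq G$.

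The hard case is when $H$ is not 2-connected. I would decompose $H$ into its blocks $B_1,\ldots,B_p$. A cycle of $H$ must lie entirely in a single block, and combining $\eg(H) \leq \max_i \eg(B_i)$ with $\eg(H) \geq \mad(G)$ produces a block $B^\star$ with $\eg(B^\star) \geq \mad(G)$. If $|V(B^\star)|$ is sufficiently large relative to $k$, then $\ad(B^\star)$ differs from $\mad(G)$ by only $\Oh(1)$, so invoking \probLCAD on $B^\star$ with parameter $k+\Oh(1)$ completes the task. If $|V(B^\star)|$ is small, then $B^\star$ is forced to be nearly complete and contains a cycle of length close to $\mad(G)$; I would extend this cycle to length $\mad(G)+k$ by attaching two internally vertex-disjoint paths from $B^\star$ into $V(G) \setminus V(H)$, whose existence is guaranteed by the global $2$-connectivity of $G$.

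The main obstacle I foresee is precisely the interplay between the (possibly fragmented) block structure of $H$ and the ambient 2-connectivity of $G$: the target cycle need not be contained in $H$, so one has to prove a structural lemma asserting that some 2-connected subgraph of $G$ can be found in polynomial time whose average degree is within $\Oh(k)$ of $\mad(G)$, or else construct the long cycle directly by splicing a dense block with detours through $V(G)\setminus V(H)$. Ensuring that the bookkeeping keeps the effective parameter at $\Oh(k)$---rather than $\mad(G)-\ad(G)+k$, which would blow up the running time---will be the most delicate combinatorial ingredient of the proof.
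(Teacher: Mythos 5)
There is a genuine gap, and it is the central one. Your plan assumes ``an FPT algorithm for \probLCAD on 2-connected inputs'' as a black box, but no such standalone result exists in the paper or in prior work --- it is essentially equivalent to the theorem you are asked to prove (indeed, Corollary~\ref{cor:ad-cycle} is \emph{derived from} Theorem~\ref{thm:mad-fpt}, not the other way around). So the proposal is circular at its core: the reduction from $\mad$ to $\ad$ via Goldberg's densest-subgraph computation is the easy first step (the paper does exactly this), and everything that remains is the actual content of the proof.

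Even setting that aside, the reduction as described is only one-sided. Invoking a subroutine on $H$ (or on a dense block $B^\star$) with parameter $k+\Oh(1)$ can certify a yes-answer when the long cycle happens to lie \emph{inside} the dense subgraph, but it cannot certify a no-answer: a cycle of length $\mad(G)+k$ in $G$ may leave and re-enter $H$ many times, picking up its surplus length from vertices of $V(G)\setminus V(H)$. You name this obstacle in your last paragraph but do not resolve it, and resolving it is where all the work lies. The paper's resolution has three ingredients you would need to supply: (1) a structural refinement of the densest subgraph (Lemma~\ref{lem:fid-dense}, built on the Dirac-decomposition machinery of Corollary~\ref{cor:Dirac}) showing that $H$ can be taken to be either ``small and dense'' or ``dense bipartite-like with a small vertex cover'' --- your dichotomy ``$B^\star$ large vs.\ small'' does not yield either structure, since $\ad(B^\star)\geq\mad(G)-1$ holds regardless of $|V(B^\star)|$ and a small $B^\star$ need not be nearly complete; (2) covering lemmas (Lemmas~\ref{lem:dense} and~\ref{lem:dense-bip}) showing that any potentially cyclable set of $\Oh(k)$ pairs in $H$ extends to a long cycle of $H+S$; and (3) an exact characterization (Lemmas~\ref{lem:ext-dense} and~\ref{lem:ext-dense-bip}) of when $G$ has the long cycle in terms of either one long outside path or a system of $\Oh(k)$ $T$-segments with $\Omega(k)$ internal vertices, together with color-coding algorithms (Proposition~\ref{prop:syst}, Lemma~\ref{lem:find-syst}) to detect such systems. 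Your proposed fix of ``attaching two internally vertex-disjoint paths guaranteed by 2-connectivity'' fails because Menger only guarantees such paths exist, not that they are long enough to gain $k$ extra vertices, and their absence in long form does not imply a no-instance.
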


While Theorem~\ref{thm:mad-fpt} concerns the decision variant of the problem, its proof may be easily adapted to produce a desired cycle  if it exists. We underline this because the standard construction of a long cycle that for every $e
\in E(G)$ invokes the decision algorithm on $G-e$, does not work in our case, as edge deletions decrease the average degree of a graph.

Theorem~\ref{thm:mad-fpt} has several corollaries. 
 The following question was explicitly stated in the literature \cite{FominGLPSZ20,FominGSS22}. For a 
$2$-connected graph $G$  and a nonnegative integer $k$, how difficult is it to decide whether $G$ has a cycle   of length at least $\ad(G)+k$? According to \cite{FominGSS22}, it was not known whether the problem   parameterized by $k$ is 
 \classFPT , \classW1-hard, or \classParaNP.   Even the simplest variant of the question: whether a path of length  $\ad(G)+1$ could be computed in polynomial time, was open. 
Theorem~\ref{thm:mad-fpt} resolves this  question becase $\mad(G)\geq\ad(G)$ for every graph $G$. 

\begin{corollary}\label{cor:ad-cycle}
For a 
$2$-connected graph $G$  and a nonnegative integer $k$,  deciding whether $G$ has a cycle   of length at least $\ad(G)+k$ can be done in time $2^{\Oh(k)}\cdot n^{\Oh(1)}$. 
\end{corollary}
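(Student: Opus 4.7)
The plan is to derive the corollary as a short reduction to Theorem~\ref{thm:mad-fpt}, using the two facts already recorded in the introduction: $\mad(G)\geq\ad(G)$ for every graph, and $\mad(G)$ can be computed in polynomial time by Goldberg's algorithm. The whole argument is a case split on the size of the gap $\mad(G)-\ad(G)$ compared to~$k$. First I would compute $\mad(G)$ and set $A=\lceil\ad(G)\rceil$, $M=\lceil\mad(G)\rceil$; since cycle lengths and $k$ are integers, ``length at least $\ad(G)+k$'' is equivalent to ``length at least $A+k$,'' and analogously ``length at least $\mad(G)+k'$'' is equivalent to ``length at least $M+k'$.''

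If $M\geq A+k$, I would answer YES directly: take an induced subgraph $H$ of $G$ with $\ad(H)=\mad(G)$; since $G$ is $2$-connected we have $\ad(G)\geq 2$, hence $\ad(H)\geq 2$, and applying Theorem~\ref{thm:EG} to $H$ produces a cycle in $H\subseteq G$ of length at least $\mad(G)$, therefore at least $M\geq A+k\geq\ad(G)+k$. Otherwise $M<A+k$; set $k'=A+k-M$, so $0<k'\leq k$ (using $A\leq M$). Then ``length at least $A+k=M+k'$'' is precisely the condition decided by Theorem~\ref{thm:mad-fpt} with parameter~$k'$, and invoking that algorithm gives the answer in time $2^{\Oh(k')}\cdot n^{\Oh(1)}\leq 2^{\Oh(k)}\cdot n^{\Oh(1)}$.

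I do not expect any genuine obstacle: the corollary is essentially subsumed by Theorem~\ref{thm:mad-fpt}, and the only thing to watch is that the parameter passed to the main algorithm never exceeds~$k$. The Case~1 shortcut handles precisely the regime where the gap $\mad(G)-\ad(G)$ is too large for a naive substitution, and in that regime Erd\H{o}s--Gallai applied to the densest induced subgraph already supplies the needed cycle without calling the parameterized algorithm at all.
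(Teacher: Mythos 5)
Your proposal is correct and is essentially the argument the paper intends (the paper only remarks that the corollary follows from $\mad(G)\geq\ad(G)$ without spelling it out): when $\mad(G)$ already exceeds $\ad(G)+k$ the Erd\H{o}s--Gallai guarantee on the densest induced subgraph answers YES, and otherwise one calls Theorem~\ref{thm:mad-fpt} with the residual parameter $k'\leq k$. Your explicit handling of the rounding and of the bound $k'\leq k$ fills in exactly the details the paper leaves implicit.
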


Similarly, we have the following corollary. 
\begin{corollary}\label{cor:eg-cycle}
For a 
$2$-connected graph $G$  and a nonnegative integer $k$,  deciding whether $G$ has a cycle   of length at least $\eg(G)+k$ can be done in time $2^{\Oh(k)}\cdot n^{\Oh(1)}$. 
\end{corollary}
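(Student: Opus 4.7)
The plan is to deduce Corollary~\ref{cor:eg-cycle} directly from Theorem~\ref{thm:mad-fpt} by a simple shift of the parameter from $\mad(G)$ to $\eg(G)$. The numerical fact that makes the reduction work is the inequality $\eg(G)\leq \mad(G)+1$, which is immediate from the chain $\eg(G)-1\leq \ad(G)\leq \mad(G)$ recalled in the introduction. Both $\eg(G)=2m/(n-1)$ and $\mad(G)$ (the latter by Goldberg's algorithm~\cite{Goldberg84} cited in the introduction) are computable in polynomial time.

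Since every cycle length is an integer and $k$ is an integer, the requirement ``$G$ has a cycle of length at least $\eg(G)+k$'' is equivalent to ``$G$ has a cycle of length at least $\lceil\eg(G)\rceil+k$'', and analogously for any integer $k'$ the condition of \probLCMAD with parameter $k'$ is equivalent to asking for a cycle of length at least $\lceil\mad(G)\rceil+k'$. I therefore set
\[
 k' \;:=\; k + \lceil\eg(G)\rceil - \lceil\mad(G)\rceil,
\]
which is an integer and, by the inequality above, satisfies $k'\leq k+1$.

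If $k'\leq 0$, then $\lceil\mad(G)\rceil\geq \lceil\eg(G)\rceil+k$ and the unconditional MAD lower bound on the circumference recalled in the introduction (applicable because a $2$-connected graph on at least three vertices has $\mad(G)\geq 2$; the remaining degenerate cases are trivial) already yields a cycle of length at least $\mad(G)\geq \eg(G)+k$, and we return YES. Otherwise $k'\geq 1$, and I invoke Theorem~\ref{thm:mad-fpt} on the instance $(G,k')$: by construction $\lceil\mad(G)\rceil+k' = \lceil\eg(G)\rceil+k$, so its output coincides with the answer sought in Corollary~\ref{cor:eg-cycle}. The total running time is $2^{\Oh(k')}\cdot n^{\Oh(1)} = 2^{\Oh(k)}\cdot n^{\Oh(1)}$. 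The only nontrivial point is to keep the ceiling arithmetic consistent and to peel off the edge case $k'\leq 0$, where the answer is guaranteed by the Erd{\H{o}}s--Gallai bound alone; there is no real obstacle.
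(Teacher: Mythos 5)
Your reduction is correct and is exactly the argument the paper leaves implicit when it derives Corollary~\ref{cor:eg-cycle} from Theorem~\ref{thm:mad-fpt} ``similarly'' to Corollary~\ref{cor:ad-cycle}: shift the parameter using $\eg(G)\leq\mad(G)+1$ (Observation~\ref{obs:ad}), exploit integrality of cycle lengths via ceilings, and dispose of the case $k'\leq 0$ with the Erd{\H{o}}s--Gallai/$\mad$ guarantee. No issues.
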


An undirected graph $G$ is \emph{$d$-degenerate} if every subgraph of $G$ has a vertex of degree at most $d$, and the \emph{degeneracy} of $G$ is defined to be the minimum value of $d$ for which $G$ is $d$-degenerate.
Since a graph of degeneracy $d$ has a subgraph $H$ with at least $d \cdot  |V (H)|/2$ edges, 
 we have that 
$d\leq \ad(H)\leq \mad(G)$. Therefore,  Theorem~\ref{thm:mad-fpt} implies the following corollary, which is the main result 
of  \cite{FominGLPSZ20}.

\begin{corollary}[\cite{FominGLPSZ20}]\label{cor:d-cycle}
For a 
$2$-connected graph $G$  of degeneracy $d$,  deciding whether $G$ has a cycle   of length at least $d+k$ can be done in time $2^{\Oh(k)}\cdot n^{\Oh(1)}$. 
\end{corollary}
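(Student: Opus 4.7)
The plan is to derive Corollary \ref{cor:d-cycle} as an immediate black-box consequence of Theorem \ref{thm:mad-fpt}, using the inequality $d \leq \mad(G)$ that is already flagged in the paragraph preceding the corollary: a graph of degeneracy $d$ contains a subgraph $H$ with $|E(H)| \geq d\,|V(H)|/2$, so $\mad(G) \geq \ad(H) \geq d$.

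Given an instance $(G,k)$, I would first compute $\mad(G)$ in polynomial time via Goldberg's algorithm. If $\mad(G) \geq d+k$, then applying Theorem \ref{thm:EG} to an induced subgraph $H^{*}$ realizing $\ad(H^{*}) = \mad(G) \geq 2$ already produces a cycle of length at least $\mad(G) \geq d+k$ in $G$, and the answer is YES. Otherwise $d \leq \mad(G) < d+k$, so setting $k' := \lfloor d + k - \mad(G) \rfloor$ gives a nonnegative integer with $k' \leq k$ and $\mad(G) + k' \in (d+k-1,\, d+k]$. Since every cycle length is an integer, the question ``does $G$ contain a cycle of length at least $\mad(G) + k'$?'' coincides with ``does $G$ contain a cycle of length at least $d+k$?'', so invoking Theorem \ref{thm:mad-fpt} on $(G, k')$ decides the instance in time $2^{\Oh(k')} \cdot n^{\Oh(1)} = 2^{\Oh(k)} \cdot n^{\Oh(1)}$.

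There is no substantive obstacle here: the corollary is essentially a one-line reduction, and the only bookkeeping is the integer shift $k' = \lfloor d+k - \mad(G) \rfloor$ needed to align the (possibly fractional) $\mad$-based target with the integer degeneracy-based target, together with the inequality $k' \leq k$ that follows from $\mad(G) \geq d$. The two minor points worth a sanity check in a full write-up are (i) the hypothesis $\mad(G) \geq 2$ needed to invoke Erd{\H o}s--Gallai on $H^{*}$, which is automatic for $2$-connected graphs on at least three vertices since they have minimum degree at least $2$, and (ii) that Theorem \ref{thm:mad-fpt} can be used in its search form to actually output the cycle certifying the YES answer, as emphasized in the remark immediately after its statement.
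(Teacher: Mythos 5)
Your reduction is correct and is essentially the paper's own argument: the paper derives this corollary from Theorem~\ref{thm:mad-fpt} precisely via the inequality $d\leq \ad(H)\leq \mad(G)$, with the parameter shift and the trivial YES case (when $\mad(G)$ already exceeds $d+k$, using the Erd{\H o}s--Gallai guarantee) left implicit. Your write-up just makes that bookkeeping explicit, so no further changes are needed.
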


Theorem~\ref{thm:EG} provides the same lower bound on the number of vertices in a longest path. We consider the \probLPMAD problem that, given a graph $G$ and integer $k$, asks whether $G$ has a path with at least $\mad(G)+k$ vertices.  Observe that a graph $G$ has a path with $\ell$ vertices  if and only if the graph $G'$, obtained by adding to $G$ a universal vertex that is adjacent to every vertex of the original graph, has a cycle with $\ell+1$ vertices.  Because $\mad(G')\geq\mad(G)$, Theorem~\ref{thm:mad-fpt} yields the following. 
\begin{corollary}\label{cor:mad-path}
\probLPMAD can be solved in time $2^{\Oh(k)}\cdot n^{\Oh(1)}$ on  connected graphs.
\end{corollary}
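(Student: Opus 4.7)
The plan is to reduce \probLPMAD to \probLCMAD via the classical device of attaching a universal vertex, and then invoke Theorem~\ref{thm:mad-fpt}. Let $G'$ be the graph obtained from $G$ by adding a single new vertex $v$ adjacent to every vertex of $V(G)$; the degenerate cases $|V(G)|\le 1$ are trivial, so assume $|V(G)|\ge 2$. Since $G$ is connected, $G'$ is $2$-connected: deleting $v$ leaves the connected graph $G$, while deleting any $u\in V(G)$ leaves a graph in which $v$ remains adjacent to every vertex of $V(G)\setminus\{u\}$ and is therefore connected; and $|V(G')|\ge 3$.

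The path--cycle correspondence is routine: $G$ contains a path on at least $\ell$ vertices if and only if $G'$ contains a cycle of length at least $\ell+1$. For the forward direction, any path $u_1\cdots u_\ell$ in $G$ closes through $v$ to a cycle of length $\ell+1$. For the converse, a cycle $C$ of length $\ge \ell+1$ in $G'$ either passes through $v$---in which case deleting $v$ from $C$ yields a path on $\ge \ell$ vertices in $G$---or lies entirely in $G$, in which case deleting any one edge of $C$ gives a path on $\ge \ell+1$ vertices in $G$. Hence \probLPMAD on $G$ with parameter $k$ asks precisely whether $G'$ has a cycle of length at least $L:=\lceil\mad(G)\rceil + k + 1$.

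It remains to match parameters and apply Theorem~\ref{thm:mad-fpt} to $G'$. Since $G$ is an induced subgraph of $G'$, $\mad(G)\le \mad(G')$ and so $\lceil\mad(G')\rceil \ge \lceil\mad(G)\rceil$. If $\lceil\mad(G')\rceil \ge L$, the Erd\H{o}s--Gallai bound already guarantees a cycle of length at least $\mad(G') \ge L$ in $G'$, so the answer is YES. Otherwise set $k' := L - \lceil\mad(G')\rceil$, a positive integer satisfying $k' \le L - \lceil\mad(G)\rceil = k+1$. Running the algorithm of Theorem~\ref{thm:mad-fpt} on $(G', k')$ decides in time $2^{\Oh(k')}\cdot n^{\Oh(1)} = 2^{\Oh(k)}\cdot n^{\Oh(1)}$ whether $G'$ has a cycle of length at least $\lceil\mad(G')\rceil + k' = L$, which by the equivalence above resolves \probLPMAD on $G$. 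The corollary is essentially a syntactic consequence of Theorem~\ref{thm:mad-fpt}; the only care required is in the ceiling bookkeeping, which is absorbed into the additive constant in the parameter.
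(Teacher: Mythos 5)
Your proposal is correct and follows essentially the same route as the paper: attach a universal vertex to obtain a $2$-connected graph $G'$, use the standard path--cycle correspondence, note $\mad(G')\geq\mad(G)$, and invoke Theorem~\ref{thm:mad-fpt}. The only (harmless) slip is the intermediate assertion ``$\mad(G')\geq L$'' in the YES-branch, where you actually only have $\lceil\mad(G')\rceil\geq L$; since cycle lengths are integers the guaranteed cycle still has length at least $L$, so the conclusion stands.
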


We complement Theorem~\ref{thm:mad-fpt} by observing that the $2$-connectivity condition is crucial for tractability due to the fact that the considered properties are not closed under taking biconnected components. In particular, it may happen that every long cycle of a graph is in a biconnected component of small average degree. This observation  yields the following theorem.  

\begin{restatable}{theorem}{lowerbound}\label{thm:lower}
It is \classNP-complete to decide whether an $n$-vertex connected graph $G$ has a cycle of length at least $\eg(G)+1$.
\end{restatable}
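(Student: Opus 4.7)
The plan is to reduce from \HamCycle on $3$-connected cubic planar graphs, which is \classNP-complete by the classical theorem of Garey, Johnson, and Tarjan. Given such an instance $H$ on $N$ vertices and $M = 3N/2$ edges, I would construct $G$ by taking $H$ and, for each vertex $v \in V(H)$, attaching a fresh copy of $K_{N-1}$ that shares only the vertex $v$ with the rest of $G$. The result has $n = N(N-1)$ vertices and $m = \tfrac{3N}{2} + N\binom{N-1}{2}$ edges, and is connected because $H$ is. Each attachment point is a cut vertex separating its clique from the rest, so every cycle of $G$ lies entirely in $H$ or entirely in a single $K_{N-1}$; consequently
\[
c(G) \;=\; \max\bigl(c(H),\, N-1\bigr).
\]

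The crux is a direct arithmetic check that $\eg(G) \in (N-2,\, N-1)$. Polynomial division of $2m$ by $n-1$ yields
\[
\eg(G) \;=\; \frac{2M + N^{3} - 3N^{2} + 2N}{N^{2} - N - 1} \;=\; (N-2) + \frac{2M + N - 2}{N^{2} - N - 1},
\]
and the fractional correction lies strictly in $(0,1)$ whenever $2M < (N-1)^{2}$. For $M = 3N/2$ this reduces to $N^{2} - 5N + 1 > 0$, valid for every $N \geq 5$ (in particular, for every cubic instance with $N \geq 6$). Therefore $\eg(G) + 1 \in (N-1,\, N)$, and since cycle lengths are integers, $c(G) \geq \eg(G)+1$ iff $c(G) \geq N$. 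Combined with the formula above for $c(G)$, this is equivalent to $H$ having a Hamiltonian cycle.

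The main---though modest---obstacle is calibrating the gadget: the attached cliques must be large and numerous enough to pull $\eg(G)$ just below the threshold $N-1$, yet of size strictly smaller than $N$ so that they cannot themselves contribute a cycle of length $\geq N$. Choosing exactly $N$ copies of $K_{N-1}$ satisfies both constraints simultaneously. Finally, membership in \classNP is clear, since a cycle of length $\geq \eg(G)+1$ is a polynomial-size certificate and $\eg(G)$ can be computed in polynomial time; this completes the claim of \classNP-completeness.
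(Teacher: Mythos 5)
Your proof is correct and uses essentially the same construction as the paper: attach a clique on $N-1$ vertices at each vertex of the source graph so that $\eg$ of the resulting graph lands strictly between $N-2$ and $N-1$, making the threshold $\eg+1$ equivalent to circumference at least $N$. The only cosmetic difference is that the paper reduces from general \HamCycle and assumes $\eg\leq n-1$ without loss of generality (otherwise Erd\H{o}s--Gallai already yields a Hamiltonian cycle), whereas you enforce the needed inequality $2M<(N-1)^2$ by starting from cubic instances.
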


The single-exponential dependence in $k$ of algorithm in Theorem~\ref{thm:mad-fpt} is asymptotically optimal:  it is unlikely that \probLCMAD can be solved in $2^{o(k)}\cdot n^{\Oh(1)}$ time. This immediately follows from the well-known result (see e.g.~\cite[Chapter~14]{cygan2015parameterized}) that existence of an algorithm for \HamCycle with running time $2^{o(n)}$ would refute the \emph{Exponential Time Hypothesis} (ETH) of Impagliazzo, Paturi, and Zane~\cite{ImpagliazzoPZ01}. Thus \probLCMAD cannot be solved in $2^{o(k)}\cdot n^{\Oh(1)}$ time, unless ETH fails.

 \medskip\noindent\textbf{Comparison with the previous work.} Two of the recent articles on the circumference of a graph above guarantee are most relevant to our work. 
 The first is the paper of  Fomin, Golovach, Lokshtanov, Panolan, Saurabh, and Zehavi~\cite{FominGLPSZ20} who gave an algorithm that in time $2^{\Oh(k)}\cdot n^{\Oh(1)}$ for a 
 $2$-connected graph $G$ of degeneracy $d$, decides whether $G$ has a cycle  of length at least $d+k$. 
 In the heart of their algorithm is the following ``rerouting'' argument: If a cycle hits a sufficiently ``dense'' subgraph $H$ of $G$, then this cycle can be rerouted inside $H$ to cover all vertices of $H$.  
 The main obstacle on the way of generalizing the result of Fomin et al.~\cite{FominGLPSZ20} ``beyond'' the average degree was the lack of rerouting arguments in graphs of large average degree.

The rerouting arguments in the proof of Theorem~\ref{thm:mad-fpt} use the structural properties of dense graphs developed in the recent work of Fomin, Golovach, Sagunov, and Simonov~\cite{FominGSS22}  (see 
 \cite{FominGSS21} for the full version) on parameterized complexity of finding a cycle above Dirac's bound. We remind that by the classical theorem of Dirac~\cite{Dirac52}, every 2-connected graph has a cycle of length at least $\min\{2\delta(G),|V(G)|\}$, where $\delta(G)$ is the minimum degree of $G$.
 Fomin et al.  gave an algorithm that in time
 $2^{\Oh (k+|B|)} \cdot n^{\Oh (1)}$
 decides whether a $2$-connected graph $G$ contains a cycle of length at least $\min\{2\delta(G-B), |V(G)|-|B|\}+k$, where $B$ is a given subset of vertices which may have ``small'' degrees.  The result of Fomin et al. \cite{FominGSS21,FominGSS22} is ``orthogonal'' to ours in the following sense: It does not imply Theorem~\ref{thm:mad-fpt}  and Theorem~\ref{thm:mad-fpt} does not imply the theorem from  \cite{FominGSS21}.  However, 
  the tools developed in \cite{FominGSS21}, in particular  the new type of graph decompositions called \emph{Dirac decompositions}, appear to be useful in our case too.

From a more general perspective, our work belongs to a popular subfield of  Parameterized Complexity concerning parameterization above/below specified guarantees. In addition to \cite{FominGSS22,FominGLPSZ20}, the parameterized complexity of paths and cycles above some guarantees was studied in \cite{BezakovaCDF17,DBLP:conf/wg/Jansen0N19}, and 
\cite{FominGLP0Z20a}.

\section{Overview of the proof of  the main result}

Here we outline the critical technical ideas leading to our main result, Theorem~\ref{thm:mad-fpt}. 
We first explain our techniques for the \probLCAD problem. Let us remind that in this problem, the task is to decide whether a graph $G$ has a cycle of length at least $\ad(G)+k$. (The difference with $\mad$ is that we do not take the maximum over all subgraphs.) 

The nucleus of our proof is a novel structural analysis of dense subgraphs in graphs with large average degrees. Informally, we prove that if there is a cycle of length at least $\ad(G) + k$ in $G$, then $G$ contains a dense subgraph $H$ and a long (of length at least $\ad(G) + k$) cycle $C$ that ``revolves'' around $H$ (see Figure~\ref{fig:revolv}). By that, we mean the following. First, the number of times cycle $C$ enters and leaves $H$ is bounded by $\cO(k)$. Second, $C$ contains  at least $\ad(G)-ck$ vertices of $H$ for some constant $c$. Moreover, we need a way stronger ``routing'' property of $H$. Basically for any possible ``points of entry and departure'' of cycle $C$ in $H$, we show that these pairs of vertices could be connected in $H$ by internally vertex-disjoint paths of total length  at least $\ad(G)-ck$. Furthermore, such paths could be found in polynomial time. Then everything boils down to the following problem. For a given subgraph $H$ of $G$, we are looking for at most  $k$ internally vertex-disjoint paths outside $H$ 
of total length $\Omega(k)$, each path starts and ends in $H$.  This task can be done in time $2^{\cO(k)}\cdot n^{\cO(1)}$ by making use of color-coding. Finally, if we find such paths, then we could complete them to a cycle of length at least $\ad(G) + k$ by augmenting them by the paths inside of $H$. 

\begin{figure}[h]
\centering
\scalebox{0.7}{
\input{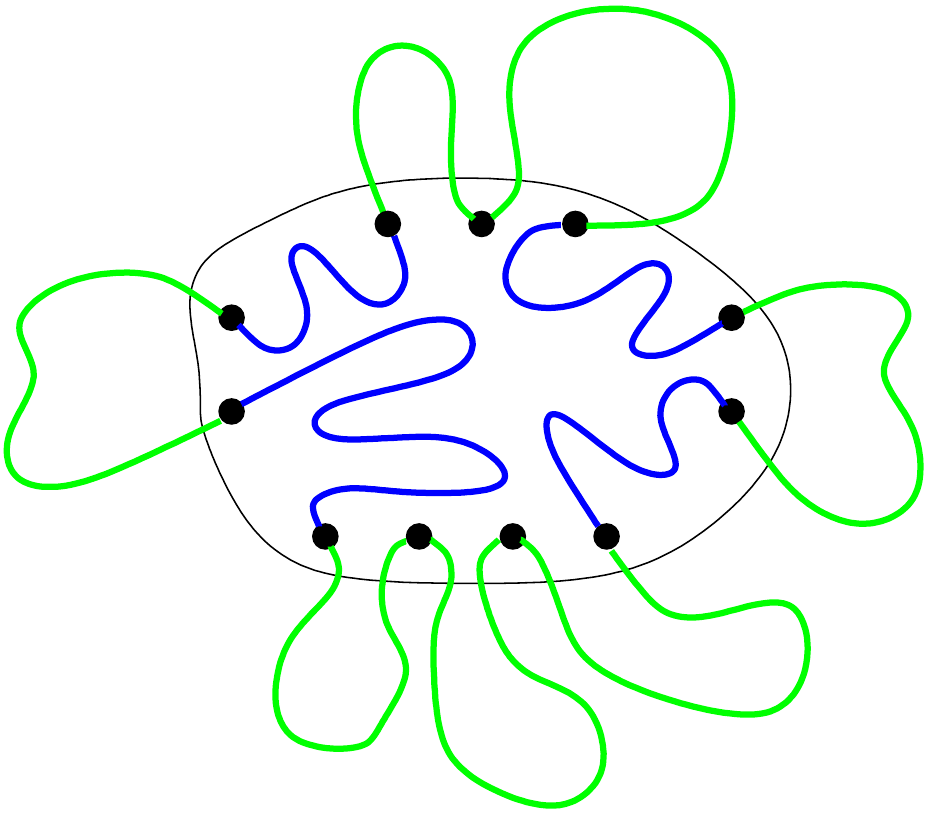_t}}
\caption{A cycle ``revolving'' around $H$. The segments of the cycle outside $H$ are shown in green and the segments inside $H$ are blue.}
\label{fig:revolv}
\end{figure}

\medskip\noindent
\textbf{Identifying dense subgraph $H$.}
Notice that we can assume that $\ad(G)\geq \alpha k$ for a sufficiently big positive constant $\alpha$. Otherwise, we can solve the problem in $2^{\Oh(k)}\cdot n^{\Oh(1)}$ time using the known algorithm for \probLC~\cite{FominLPSZ18}.
We start with preprocessing rules  ``illuminating'' some ``useless'' parts of the graph.  If $G$ contains several connected components, it suffices to keep only the densest of them, as its average degree is at least the average degree of $G$. Similarly, if $G$ is connected but has a cut-vertex, keeping the densest block also suffices. Further, if there is a vertex $v$ of degree less than $ \frac{1}{2} \ad(G)$, then $v$ can be safely removed. By applying these reduction rules exhaustively, we find an induced $2$-connected subgraph $H$ of $G$ whose minimum degree  $\delta(H) \ge \frac{1}{2} \ad(H) \ge \frac{1}{2} \ad(G)$. 
Similarly to removing sparse blocks,  if $G$ contains a vertex separator $X$ of size two such that there is a ``sparse'' component $A$ of $G- X$, then $A$ can be removed.  By applying the last reduction rule we either find a cycle of length at least $\ad(G)+k$ or can conclude that the resulting subgraph $H$ is $3$-connected.

If $(G,k)$ is a yes-instance, that is, graph $G$ contains a cycle of length at least  $\ad(G) +k$, there are two possibilities. Either in   $G$ a cycle of length at least $2\delta(H) + k$ ``lives'' entirely in $H$, or it passes through some other vertices of $G$.  If a long cycle is entirely in $H$, we can employ the recent result of Fomin et al.~\cite{FominGSS21} that finds in time $2^{\Oh(k)} \cdot n^{\Oh(1)}$ in a $2$-connected graph $G$   a cycle of
 length at least $2\delta(G) + k\geq \ad(G) +k$. However, if no long cycle lives entirely in $H$, the result of Fomin et al. is not applicable.  

The next step of constructing $H$ crucially benefits from the graph-theoretical result of Fomin et al.~\cite{FominGSS21}. Specifically, we use the theorem about  the Dirac decomposition from~\cite{FominGSS21}. 
 The definition of the Dirac decomposition is technical and we give it in \Cref{sec:dense}.
  For $2$-connected graphs, the Dirac decomposition imposes a very intricate structure. However, since, thanks to the reduction rules,    $H$ is 3-connected, 
   we bypass most of the technical details from~\cite{FominGSS21}.
Informally,  the Dirac decomposition leads to the following win-win situation.  
By the  Dirac's theorem~\cite{Dirac52}, graph $H$ contains a cycle $S$ of length at least $2 \delta(H) \ge \ad(G)$. Moreover, we could find such a cycle in polynomial time. 
By the result of Fomin et al.~\cite{FominGSS21}, if the length of 
  $S$ is  less than $2\delta(H) + k$,  then either  $S$ can be enlarged in polynomial time, 
  or (a)   $H$ is  small, that is, $|V(H)| < \ad(H) + k$, yielding that $H$ is extremely dense;  or (b) $H$ has a vertex cover of size $\frac{1}{2} \ad(H) - \Oh(k)$. If $S$ got enlarged, we iterate until we achieve cases (a) or (b).
   If we are in case (a), the construction of $H$ is completed. In  case (b), we need to prune the obtained graph a bit more. More specifically, we can delete $\Oh(k)$ vertices in the vertex cover 
   and select a subset of the independent set to achieve the property that (i) each of remaining vertices in  the vertex cover is adjacent to at least  $\ad(H) - \Oh(k)$ vertices in the selected independent subset, and (ii) every vertex of the selected subset of the independent set  sees nearly all vertices of the vertex cover. This mean that the obtained in subgraph  is also ``dense'', albeit in a different sense. Depending on the case, we use different arguments to establish the routing properties of $H$.

\medskip\noindent
\textbf{Routing in $H$.}
 The case (a), when  $|V(H)| < \ad(H) + k$, is easier. In this case,  the degrees of almost all vertices are close to $|V(H)|$. 
 Let $S=\{x_1y_1,\dots, x_\ell y_\ell\}$ an arbitrary set of $\Oh(k)$ pairs of distinct vertices of $H$ forming a linear forest (that is, the union of $x_iy_i$  is a union of disjoint paths). 
 The intuition behind $S$ is that $x_i$ corresponds to the vertex from where the long cycle leaves $H$ and $y_i$ when it enters $H$ again.    
 We show first how to construct 
 a cycle in $H+ S$   (that is, the graph obtained from $H$ by turning the pairs of $S$ into edges) containing every pair $x_iy_i$ from $S$ as an edge. This is done by performing constant-length jumps: any two vertices can be connected either by an edge, or through a common neighbor, or through a sequence of two neighbors.
Then we extend the obtained cycle to a Hamiltonian cycle in $H+S$---every vertex of $H$ that is not yet on a cycle can be inserted due to the high degrees of the vertices. The extension of $S$ into a Hamiltonian cycle is shown in Figure~\ref{fig:appr} (a).

\begin{figure}[ht]
\centering
\scalebox{0.7}{
\input{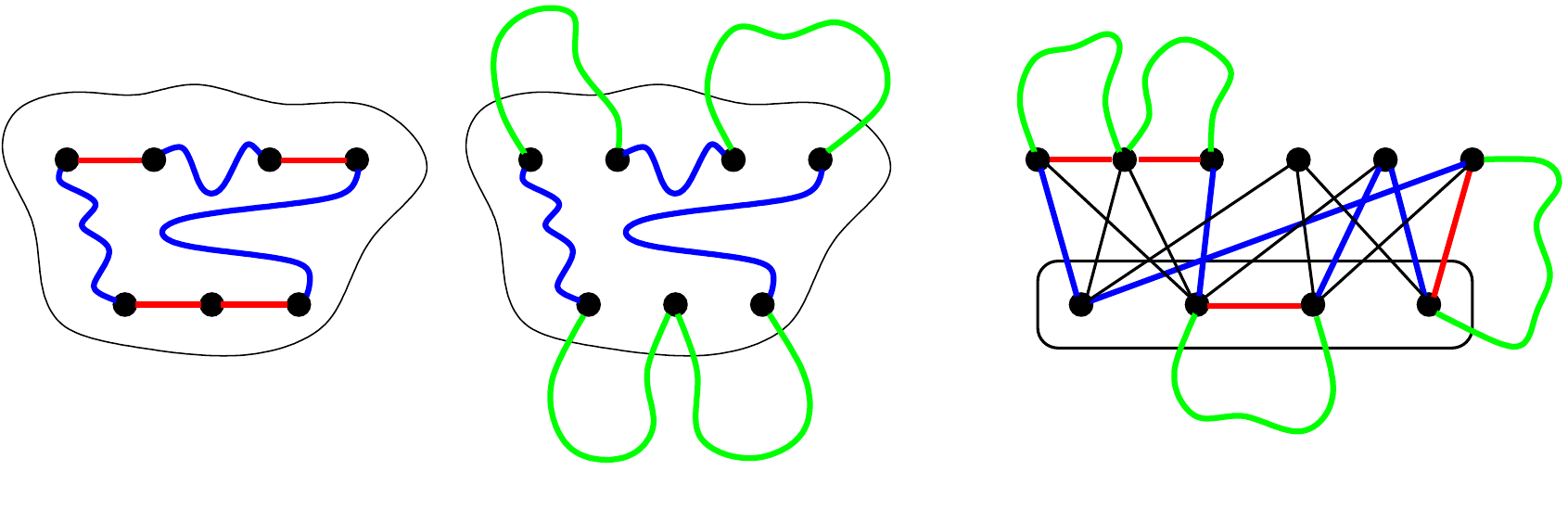_t}}
\caption{Constructing cycles. The set of pairs $S$ that may be both edges and nonedges of $H$ is shown by red lines and the extension of $S$ into a long cycle is blue.  The paths ``revolving'' around $H$ are green. The vertex cover in c) is denoted by $A$.}
\label{fig:appr}
\end{figure}

Therefore, if there is a collection of at most $k$ internally vertex disjoint paths going outside from $H$ and returning back, the high density of $H$ allows collecting all of them in a cycle  containing all the vertices of $H$. Together with all the additional vertices these paths visit outside of $H$ we construct a long cycle in $G$ (see Figure~\ref{fig:appr} (b)). The only condition is that these paths have to form a linear forest. 
Thus, if we find a collection of such paths with enough internal vertices, we immediately obtain a long cycle ``revolving'' around $H$.
The crucial part of the proof is to show that if there is a  \emph{any} cycle of length at least $\ad(H) + k$ in $G$, then it can be assumed to have this form.

Let us remark that a similar ``rerouting'' property was used by Fomin et al.~\cite{FominGLPSZ20} in their above-degeneracy study.
Actually,  for case (a), we need only a minor adjustment of the arguments from ~\cite{FominGLPSZ20}.
However, in the ``bipartite dense'' case (b) the structure of the dense subgraph $H$ is more elaborate and this case requires a new approach.
Contrarily to case (a), the long cycle that we construct in $H+S$ is not  Hamiltonian but visits all the vertices of the vertex cover.
(See Figure~\ref{fig:appr} (c).)
In this case, the behavior of paths depends on which part of $H$ they hit. Because of that, while establishing the routing properties, we have to take into count the difference between paths connecting vertices from the vertex cover, independent set, and both. 
Pushing the ``rerouting'' intuition through, in this case, turns out to be quite challenging.

\medskip\noindent
\textbf{Final steps.}
After finalizing the ``rerouting'' arguments above, it only remains to design an algorithm that checks whether there exists a collection of paths in $G$ that start and end in $H$ and have at least a certain number of internal vertices in total. We do it by a color-coding-style approach. For  case (a), such a subroutine has already been developed in the above-degeneracy case~\cite{FominGLPSZ20}.
On the other hand, for the ``bipartite dense'' case (b) we need to impose an additional restriction on the desired paths, as the length of the final cycle also depends on how the paths' end-vertices are distributed between the two parts and we have to incorporate these kinds of constraints in our path-finding subroutine. 

\medskip
Finally, to solve  \probLCMAD, we use the fact that given a graph $G$, we can find an induced subgraph $F$ with $\ad(F)=\mad(G)$ in polynomial time by the result of Goldberg~\cite{Goldberg84} (see also~\cite{GalloGT89}). Then we find a dense subgraph $H$ of $F$ with the described properties and use $H$ to find a cycle of length at least $\mad(G)+k$.

\section{Preliminaries}\label{sec:prelim} 
In this section, we introduce basic notations, and a series of previously-known results that will be helpful to us.

 We consider only finite undirected graphs.  For  a graph $G$,  $V(G)$ and $E(G)$ denote its vertex and edge sets, respectively. 
 Throughout the paper we use $n=\vert V(G)\vert$ and $m=\vert E(G)\vert$ whenever the considered graph $G$ is clear from the context. 
For a graph $G$ and a subset $X\subseteq V(G)$ of vertices, we write $G[X]$ to denote the subgraph of $G$ induced by $X$. 
We write $G-X$ to denote the graph $G[V(G)\setminus X]$; for a single-element set $X=\{x\}$, we write $G-x$. Similarly, if $Y$ is a set of pairs of distinct vertices, $G-Y=(V(G),E(G)\setminus Y)$. 
For a set $Y$ of pairs of distinct vertices of $G$, $G+Y$ denotes the graph $(V(G), E(G)\cup Y)$, that is, the graph obtained by adding the edges in $Y\setminus E(G)$; slightly abusing notation we may denote the pairs of such a set $Y$ in the same way as edges.  
For a vertex $v$, we denote by $N_G(v)$ the \emph{(open) neighborhood} of $v$, i.e., the set of vertices that are adjacent to $v$ in $G$.
A set of vertices $X$ is a \emph{vertex cover} of $G$ if for every edge $xy$ of $G$, $x\in X$ or $y\in X$.

A \emph{path} $P$ in $G$ is a subgraph of $G$ with $V(P)=\{v_0,\ldots,v_\ell\}$ and $E(P)=\{v_{i-1}v_i\mid1\leq i\leq \ell\}$.  
We write  $v_0v_1\cdots v_\ell$ to denote $P$; the vertices $v_0$ and $v_\ell$ are \emph{end-vertices} of $P$,  the vertices $v_2,\ldots,v_\ell$ are \emph{internal}, and $\ell$ is the \emph{length} of $P$. 
For a path $P$ with end-vertices $s$ and $t$, we say that $P$ is an $(s,t)$-path. 
Two paths $P_1$ and $P_2$ are \emph{internally disjoint} if no internal vertex of one of the paths is a vertex of the other; note that  end-vertices may be the same. 
For two internally disjoint paths $P_1$ and $P_2$ having one common end-vertex, we write $P_1P_2$ to denote the \emph{concatenation} of $P_1$ and $P_2$. 
A graph $F$ is a \emph{linear forest} if every connected component of $F$ is a path. 
Let $S$ be a set of pairs of distinct vertices of $G$; they may be either edges or nonedges.  We say that $S$ is \emph{potentially cyclable} if $(V(G),S)$ is a linear forest. 
A \emph{cycle} is a graph $C$ with $V(C)=\{v_1,\ldots,v_\ell\}$ for $\ell\geq 3$ and $E(C)=\{v_{i-1}v_i\mid1\leq i\leq \ell\}$, where $v_0=v_\ell$. We may write that $C=v_1\cdots v_\ell$. A cycle $C$ (a path $P$, respectively) is \emph{Hamiltonian}  if $V(C)=V(G)$ ($V(P)=V(G)$, respectively).  A graph $G$ is \emph{Hamiltonian} if it has a Hamiltonian cycle. 

A set of vertices $S$ is a \emph{separator} of a connected graph $G$, if $G-S$ is disconnected. 
For a positive integer $k$, $G$ is \emph{$k$-connected} if $|V(G)|> k$ and for every set $S$ of at most $k-1$ vertices, $G-S$ is connected.  
If $S=\{v\}$ is a separator of size one, then $v$ is called a \emph{cut-vertex}. Note, in particular, that a connected graph with at least three vertices is $2$-connected if it has no cut-vertex. 
A \emph{block} of a connected graph with at least two vertices is an inclusion-wise maximal  induced  subgraph without cut-vertices, that is, either a 2-connected graph or $K_2$. 

The \emph{degree} of a vertex $v$ in a graph $G$ is $\dg_G(v)=|N_G(v)|$. The \emph{minimum degree} of $G$ is $\delta(G)=\min\{\dg_G(v)\mid v\in V(G)\}$. For a nonempty set of vertices $X$, the \emph{average degree} of $X$ is 
$\ad_G(X)=\frac{1}{|X|}\sum_{v\in X}\dg_G(v)$, and the \emph{average degree} of $G$ is $\ad(G)=\ad_G(V(G))=\frac{2m}{n}$. The \emph{maximum average degree} is $\mad(G)=\max\{\ad(H)\mid H\text{ is induced subgraph of }G\}$.

The following observation about the circumference lower bound $\eg(G)$ and the average degree of $G$ is useful for us.

\begin{observation}\label{obs:ad}
For every graph $G$ with at least two vertices $\eg(G)-1\leq \ad(G)< \eg(G)$.
\end{observation}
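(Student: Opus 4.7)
The plan is to reduce both inequalities to a single algebraic identity and then invoke the elementary edge-count bound for simple graphs. Writing $n=|V(G)|\geq 2$ and $m=|E(G)|$, I would first compute the difference between the two parameters directly from the definitions:
\[
\eg(G)-\ad(G)=\frac{2m}{n-1}-\frac{2m}{n}=\frac{2m}{n(n-1)}.
\]
This single expression controls both of the claimed inequalities.

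For the upper bound $\ad(G)<\eg(G)$, I would observe that the quantity above is strictly positive whenever $m\geq 1$ (and it equals $0$ in the degenerate case $m=0$, where both parameters vanish, so the statement is understood in the nontrivial range $m\geq 1$ that is implicit throughout the paper). For the lower bound $\eg(G)-1\leq \ad(G)$, the computation above shows that it is equivalent to
\[
\frac{2m}{n(n-1)}\leq 1,\qquad\text{i.e.,}\qquad 2m\leq n(n-1),
\]
which is precisely the well-known maximum edge-count bound $m\leq \binom{n}{2}$ for a simple graph on $n$ vertices. Hence both inequalities follow from the same one-line computation.

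I do not foresee any real obstacle: the whole statement is a direct consequence of unfolding the definitions of $\eg(G)$ and $\ad(G)$ together with the fact that a simple graph has at most $\binom{n}{2}$ edges. The only mild subtlety worth flagging in the write-up is that strictness on the right-hand side requires $m\geq 1$, so the proof should either note this assumption or replace the strict inequality by $\leq$ in the degenerate edgeless case.
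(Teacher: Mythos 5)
Your computation is correct and is exactly the (omitted) argument the paper intends: the paper states this observation without proof, and the identity $\eg(G)-\ad(G)=\frac{2m}{n(n-1)}$ together with $m\leq\binom{n}{2}$ is the natural one-line justification. Your remark about the edgeless case is also a legitimate catch — for $m=0$ both quantities equal $0$ and the strict inequality fails — though this never arises in the paper, where only graphs with $\ad(G)\geq 2$ are relevant.
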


Goldberg~\cite{Goldberg84} proved that, given a graph $G$, an induced subgraph $H$ of maximum \emph{density}, that is, a subgraph with the maximum value $\frac{|E(H)|}{|V(H)|}$, can be found in polynomial time. This result was improved by  
Gallo,  Grigoriadis, and Tarjan~\cite{GalloGT89}. Note that if $H$ is an induced subgraph of maximum density, then $\mad(G)=\ad(H)$.

\begin{proposition}[\cite{GalloGT89}]\label{prop:densest}
An induced subgraph of maximum density of a given graph $G$ can be found in $\Oh(nm\log(n^2/m))$ time.  
\end{proposition}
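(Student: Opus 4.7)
The plan is to reduce the maximum-density induced subgraph problem to a sequence of minimum $s$-$t$ cuts and then invoke the parametric max-flow machinery of Gallo, Grigoriadis, and Tarjan to execute that sequence in the time of a single max-flow computation.

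First, I would recall Goldberg's reduction. For a guess $g \ge 0$ for the maximum density, build a flow network $N_g$ on the vertex set $V(G) \cup \{s,t\}$: add an arc $(s,v)$ of capacity $m$ for every $v \in V(G)$, two antiparallel arcs of capacity $1$ for every edge $uv \in E(G)$, and an arc $(v,t)$ of capacity $m + 2g - \dg_G(v)$ for every $v$. A direct accounting shows that any finite $s$-$t$ cut separating $\{s\}\cup S$ from $\{t\}\cup (V(G)\setminus S)$ has value
\[
mn + 2\bigl(g|S| - |E(G[S])|\bigr).
\]
Hence minimizing the cut over nonempty $S$ is equivalent to maximizing $|E(G[S])|/|S|$, and the optimum $S$ is recovered as the source-side of any minimum cut; the maximum density exceeds $g$ exactly when the minimum cut falls below $mn$.

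Second, I would bound the search over $g$. The density of any induced subgraph is a rational $p/q$ with $p \le m$ and $q \le n$, so two distinct densities differ by at least $1/n^2$. Therefore binary search with $O(\log n^2) = O(\log n)$ evaluations of the min-cut pinpoints the optimal density. Using any $O(nm\log(n^2/m))$ max-flow subroutine (Goldberg--Tarjan push--relabel with dynamic trees) naively yields an $O(nm\log n\log(n^2/m))$ bound, one factor worse than claimed.

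Third, and this is where I would invoke the main technical ingredient of \cite{GalloGT89}: as $g$ grows, only the capacities of arcs entering $t$ change, and they change \emph{monotonically}. This puts the sequence of cut problems into the class of parametric max-flow problems for which Gallo, Grigoriadis, and Tarjan design an augmented push--relabel algorithm that solves all $O(n)$ breakpoint problems for the sequence of candidate densities, while reusing preflows across parameter updates, in total time $O(nm\log(n^2/m))$---the same as a single push--relabel max-flow. Plugging the breakpoints corresponding to the at most $O(n^2)$ candidate rationals (collapsed into $O(n)$ breakpoints by the monotonicity and standard bit-scaling tricks) into this framework gives the stated runtime, and the optimal $S$ is read off from the source-side of the min-cut at the breakpoint where the cut drops below $mn$.

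The main obstacle is the parametric push--relabel analysis itself (correctness of preflow reuse across parameter changes and the amortized bound on relabeling work); I would import this as the black-box theorem of \cite{GalloGT89} rather than re-derive it, since the reduction above is what makes it applicable to the densest subgraph problem.
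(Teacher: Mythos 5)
The paper gives no proof of this proposition: it is imported verbatim as a cited result of Gallo, Grigoriadis, and Tarjan, exactly as you ultimately do by treating the parametric push--relabel theorem as a black box. Your reconstruction of the underlying argument (Goldberg's min-cut reduction with cut value $mn + 2(g|S| - |E(G[S])|)$, the $1/n^2$ density granularity, and the observation that the sink-side capacities vary monotonically in $g$ so the whole parameter sweep costs one $\Oh(nm\log(n^2/m))$ max-flow) is correct and is precisely the content of the cited reference.
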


We use the lower bound on the length of a longest $(s,t)$-path in a $2$-connected graph via the average degree obtained by Fan~\cite{Fan90a}.

\begin{proposition}[{\cite[Theorem~1]{Fan90a}}]\label{prop:Fan}
Let $s$ and $t$ be two distinct vertices in a $2$-connected graph $G$.  Then $G$ has an $(s,t)$-path of length at least $\ad_G(V(G)\setminus \{s,t\})$.
\end{proposition}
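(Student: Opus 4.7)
The plan is to take a longest $(s,t)$-path $P$ of length $\ell$ and prove the degree-sum inequality $\sum_{v \in V(G) \setminus \{s,t\}} \dg_G(v) \leq \ell(n-2)$, which immediately yields $\ad_G(V(G) \setminus \{s,t\}) \leq \ell$. Writing $R = V(G) \setminus V(P)$, I would organize the counting by classifying edges incident to $V(G) \setminus \{s,t\}$ into those lying on $P$, chords of $P$, edges between $R$ and $V(P)$, and edges inside $R$, and bound each class using the maximality of $P$ together with $2$-connectivity. A simple-minded hope of bounding $\dg_G(v) \leq \ell$ vertex by vertex is too strong (a high-degree internal vertex can violate it), so the bound has to be extracted in aggregate.

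The heart of the argument is a rerouting lemma of the following shape: if the degree-sum ever exceeded $\ell(n-2)$, a pigeonhole argument would produce an incriminating configuration, typically an off-path vertex $w \in R$ with many neighbors on $P$, or a chord pattern on $P$ that crosses a vertex with a neighbor in $R$, from which one can assemble an $(s,t)$-path strictly longer than $P$, contradicting the maximality of $\ell$. Constructing this longer path is where $2$-connectivity enters: for any internal $v_i \in V(P) \setminus \{s,t\}$, the graph $G - v_i$ is connected, so any off-path neighbor $w$ of $v_i$ is joined to $V(P) \setminus \{v_i\}$ by some path $Q$ in $G - v_i$, and $Q$ can be spliced into $P$ to replace $v_i$ by a longer detour while keeping both endpoints $s$ and $t$ fixed.

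The main obstacle is carrying out this splicing without disturbing the endpoints. Standard P\'osa-style rotations used in the proofs of Erd\H{o}s--Gallai-type bounds for cycles or free-endpoint paths let the endpoints of the current working path drift; here the two endpoints must remain exactly $s$ and $t$ throughout. Menger's theorem, equivalent to $2$-connectivity, supplies the two internally vertex-disjoint attachments from any single off-path vertex back to $P$, which is precisely what is required to reroute locally while preserving both endpoints. The delicate part is the bookkeeping when detours at different off-path vertices interact, and ruling out degenerate configurations in which one of $s$ or $t$ accidentally becomes internal to the rerouted path.

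As a backup route, if the direct accounting becomes unwieldy, I would try induction on $|V(G)|$: pick a vertex $v \in V(G) \setminus \{s,t\}$ with $G - v$ still $2$-connected (if none exists, $G$ is minimally $2$-connected and has restrictive ear-like structure that can be analyzed by hand), apply the induction hypothesis to $G - v$ to obtain a long $(s,t)$-path there, and argue that reinserting $v$ into this path through one of its path-neighbors—again enabled by the two internally disjoint attachments guaranteed by $2$-connectivity—yields an $(s,t)$-path in $G$ whose length still dominates $\ad_G(V(G) \setminus \{s,t\})$ after accounting for the contribution of $v$.
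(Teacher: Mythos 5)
This proposition is not proved in the paper at all; it is imported verbatim as Theorem~1 of Fan~\cite{Fan90a}, so the only benchmark is Fan's original argument, which is a genuinely involved induction. Your proposal identifies the correct target inequality ($\sum_{v\in V(G)\setminus\{s,t\}}\dg_G(v)\le \ell(n-2)$ for a longest $(s,t)$-path of length $\ell$) and the correct general flavour (maximality plus $2$-connectivity plus rerouting), but it stops exactly where the theorem becomes hard. The ``rerouting lemma'' at the heart of your plan --- that any violation of the degree-sum bound yields, by pigeonhole, a configuration from which a strictly longer $(s,t)$-path can be assembled --- is asserted, not proved: no specific incriminating configuration is exhibited, and none of the four edge classes you list (path edges, chords, $R$--$P$ edges, edges inside $R$) is actually bounded. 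That accounting \emph{is} the content of Fan's theorem; the fixed-endpoint constraint is what makes it strictly harder than the Erd\H{o}s--Gallai path bound, since P\'osa-type rotations are unavailable, and Menger's theorem by itself only supplies two disjoint attachments from a single off-path vertex, which is far from enough to control the aggregate degree sum.

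The backup induction also does not close as stated. To keep the average degree of the surviving vertices from dropping, you must delete a vertex $v$ of roughly below-average degree, but such a vertex need not leave $G-v$ $2$-connected; conversely, a vertex whose deletion preserves $2$-connectivity may have degree close to $n-1$, in which case the quantity $\ad_{G-v}(V(G)\setminus\{s,t,v\})$ can fall by roughly $2\dg_G(v)/(n-3)\approx 2$ (each neighbour loses a degree and the large term $\dg_G(v)$ leaves the sum), while reinserting $v$ lengthens the path by only $1$. The ``minimally $2$-connected'' fallback is likewise a placeholder rather than an analysis. In short, the proposal is a plan whose every load-bearing step is deferred; as written it does not establish the proposition.
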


Notice that the proof of Proposition~\ref{prop:Fan} in~\cite{Fan90a} is constructive and a required path can be found in polynomial time. 

It is well-known that \probLC can be solved in $2^{\Oh(n)}\cdot n^{\Oh(1)}$ time. The currently best deterministic algorithm is due to Fomin et al.~\cite{FominLPSZ18}. 

\begin{proposition}[\cite{FominLPSZ18}]\label{prop:LC-best}
\probLC can be solved in $4.884^k\cdot n^{\Oh(1)}$ time.
\end{proposition}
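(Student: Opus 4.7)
The plan is to give a deterministic algorithm via the framework of representative families over the uniform matroid $U_{k,n}$, following the line of work that produced the best known deterministic bounds for \probLP and \probLC. First, observe that any cycle of length at least $k$ passes through some vertex $v$, and if $uv$ is an edge of this cycle then the cycle decomposes into two internally vertex-disjoint $(v,u)$-paths whose lengths sum to at least $k$. So after trying all $n$ choices of $v$, it suffices to compute, for every other vertex $u$ and every length $i<k$, a small family $\mathcal{F}_u^i$ of $(v,u)$-paths of length exactly $i$ that is \emph{$(k{-}i)$-representative}: for every $Y\subseteq V(G)\setminus\{v,u\}$ with $|Y|\le k-i$, whenever the full collection of length-$i$ $(v,u)$-paths contains one whose internal vertices avoid $Y$, the family $\mathcal{F}_u^i$ already contains such a path. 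A cycle of length at least $k$ through $v$ then exists if and only if for some $u$ and some split $i+j\geq k$ the families $\mathcal{F}_u^i$ and $\mathcal{F}_u^j$ contain two members whose internal vertex sets are disjoint.

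The families are built by dynamic programming on $i$: extend every path in $\mathcal{F}_u^{i-1}$ by one edge to obtain a candidate family at length $i$, then shrink it down to a representative subfamily of size $\binom{k}{i}$ using the fast representative-set computation from the matroid-based toolkit. The total work is controlled by the product of the sizes of the families being combined, $\binom{k}{i}\binom{k}{k-i}$, together with the cost of the matroid operations over a Vandermonde-style truncation of $U_{k,n}$, which is dominated by matrix multiplication. Choosing a balanced split between ``extend by a single vertex'' and ``combine two halves'' and plugging in the current best value of the matrix-multiplication exponent yields a running time of the form $c^k\cdot n^{\Oh(1)}$, and a careful optimization of this split gives the claimed base $4.884$.

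The main obstacle, as in all work in this line, is the representative-set machinery itself. One has to verify that representativeness is preserved under concatenation of paths sharing only an endpoint, that the final cycle-closing step (combining two representative families at the same vertex $u$) correctly detects internal vertex-disjointness, and that the size and computation-time bounds for each step remain $2^{\Oh(k)}\cdot n^{\Oh(1)}$. Setting up the matroid representation and analyzing the representative-product operation is where the technical difficulty lies; once this is in place, the algorithm is exactly the dynamic program above and the running time is obtained by summing the per-step costs over all vertices, lengths, and split points.
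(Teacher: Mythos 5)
This proposition is not proved in the paper at all: it is imported verbatim from Fomin, Lokshtanov, Panolan, Saurabh, and Zehavi~\cite{FominLPSZ18}, so the only ``proof'' here is the citation, and your sketch must be judged on its own. It follows the right general line (deterministic representative families over a uniform matroid, which is indeed where the base $4.884$ comes from), but it has a genuine correctness gap in the very first reduction. You propose to detect a cycle of length at least $k$ through $v$ by finding a vertex $u$ and two internally disjoint $(v,u)$-paths taken from families $\mathcal{F}_u^i$, $\mathcal{F}_u^j$ with $i,j<k$ and $i+j\ge k$. The soundness direction is fine, but completeness fails: a cycle of length $L\ge 2k$ cannot be split at any two of its vertices into two arcs that are both shorter than $k$. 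Concretely, if $G=C_{10k}$, the only pairs of internally disjoint $(v,u)$-paths are the two arcs of the cycle, whose lengths sum to $10k$, so one of them always has length at least $5k$; no family $\mathcal{F}_u^i$ with $i<k$ ever contains the long arc, and your algorithm answers no on a yes-instance. (Note also that the representative-family machinery over $U_{k,n}$, with family sizes $\binom{k}{i}$, is only available for $i\le k$, so you cannot simply extend the tables to longer arcs.)

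This is precisely the ``length at least $k$'' versus ``length exactly $k$'' obstruction that~\cite{FominLPSZ18} was written to overcome: their contribution is a structural lemma showing that a long $(s,t)$-path may be assumed to consist of a prefix of length exactly $\Theta(k)$ together with a carefully constrained completion, which is what makes bounded-size representative families applicable at all. Your sketch does not engage with that step, and it also asserts rather than derives the constant $4.884$. If you want a proof rather than a citation, the clean route is: for each edge $uv$, delete it and solve \probSTP with terminals $u,v$ and parameter $k-1$ using the algorithm of~\cite{FominLPSZ18}; all of the difficulty then lives in their $(s,t)$-path lemma, not in the cycle reduction.
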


The task of \probSTP is, given a graph $G$ with two \emph{terminal} vertices $s$ and $t$, and a positive integer $k$, decide whether $G$ has an $(s,t)$-path with  at least  $k$ vertices. Fomin et al.~\cite{FominLPSZ18} proved that this problem is \classFPT when parameterized by $k$. 

\begin{proposition}[\cite{FominLPSZ18}]\label{prop:st}
\probSTP can be solved in $2^{\Oh(k)}\cdot n^{\Oh(1)}$ time. 
\end{proposition}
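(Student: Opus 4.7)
The plan is to apply the representative families framework of Fomin, Lokshtanov, Panolan, and Saurabh, developed precisely for parameterized long-path problems. I would set up a dynamic program indexed by pairs $(i, v)$ with $i \in \{1, \ldots, k\}$ and $v \in V(G)$, where conceptually $\mathcal{F}^i_v$ is the family of vertex sets of simple $(s,v)$-paths on exactly $i$ vertices, built inductively from the base case $\mathcal{F}^1_s = \{\{s\}\}$ via the transition
\[
\mathcal{F}^i_v \;=\; \bigcup_{u \in N_G(v)} \{\, X \cup \{v\} : X \in \mathcal{F}^{i-1}_u,\ v \notin X \,\}.
\]

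Instead of storing $\mathcal{F}^i_v$, which can be exponentially large, I would keep only a representative subfamily $\widehat{\mathcal{F}^i_v} \subseteq \mathcal{F}^i_v$ of size $2^{\Oh(k)}$ satisfying that for every $Y \subseteq V(G)$ of size at most $k - i$, if some $X \in \mathcal{F}^i_v$ is disjoint from $Y$, then so is some $X' \in \widehat{\mathcal{F}^i_v}$. The representative families theorem furnishes such subfamilies and computes them in $2^{\Oh(k)} \cdot n^{\Oh(1)}$ time via truncation of the uniform matroid. Propagating these representatives through the DP stays within the advertised time budget.

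At the end, for each $v \in V(G)$ and each $X \in \widehat{\mathcal{F}^k_v}$, I would test by ordinary reachability whether $v$ can reach $t$ in $G - (X \setminus \{v, t\})$; concatenating an $(s,v)$-path with vertex set $X$ to any such $(v,t)$-path yields an $(s,t)$-path on at least $k$ vertices. The main obstacle I anticipate is matching the representativeness parameter to the ``arbitrary-length suffix'' requirement: a vanilla $(k, 0)$-representative family at level $i = k$ only guarantees existence of some $X$, not that a stored representative admits a vertex-disjoint extension to $t$. Resolving this either requires strengthening the representation (for instance, combining with an additional color-coding layer that artificially bounds the attempted $(v,t)$-suffix length, or engineering the matroid so that extensibility is captured) or switching to the algebraic narrow-sieves technique of Bj\"orklund, Husfeldt, Kaski, and Koivisto, which delivers the same $2^{\Oh(k)} \cdot n^{\Oh(1)}$ bound via polynomial identity testing and bypasses representation altogether.
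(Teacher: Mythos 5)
You have located exactly the right obstacle, but neither of your proposed repairs closes it, so the proposal has a genuine gap. (For context: the present paper does not prove Proposition~\ref{prop:st} at all --- it imports it from the cited work of Fomin, Lokshtanov, Panolan, Saurabh, and Zehavi, whose entire content is essentially the resolution of the difficulty you flag.) Your first patch, color-coding to bound the length of the attempted $(v,t)$-suffix, cannot work because the suffix genuinely may contain $\Theta(n)$ vertices: if $G$ is a single long $(s,t)$-path, every solution extends the $k$-vertex prefix by $n-k$ further vertices, so there is no bounded suffix length to budget colors against. Your second patch, falling back to narrow sieves, fails for the same underlying reason: the algebraic machinery certifies $(s,t)$-paths of a \emph{prescribed exact} length, and with both endpoints fixed one cannot truncate a long witness to length exactly $k$; sieving over all lengths from $k$ to $n$ costs $2^{\Omega(n)}$, not $2^{\Oh(k)}$. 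Indeed, the fact that \probSTP is FPT was open before the cited paper precisely because neither vanilla representative sets nor vanilla algebraic sieving handles the ``at least $k$ with fixed endpoints'' requirement.

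The missing idea is a combinatorial exchange lemma about minimum solutions, which shows that $q$-representativity with $q=k$ already suffices for your final reachability test. Take a solution $P$ with $|V(P)|$ minimum and split it as $P=P_1P_2$, where $P_1$ has exactly $k$ vertices and ends at $v$. Compute $\widehat{\mathcal{F}^k_v}$ to be $k$-representative and apply representativity to $Y$, the last $\min(k,|V(P_2)|-1)$ vertices of $P_2$ (so $v\notin Y$ and $V(P_1)\cap Y=\emptyset$); this yields a stored set $X\in\widehat{\mathcal{F}^k_v}$ that is the vertex set of a $k$-vertex $(s,v)$-path $Q$ avoiding $Y$. If $Q$ met $V(P_2)\setminus\{v\}$, let $w$ be the first such vertex along $Q$; then $Q[s..w]$ followed by $P_2[w..t]$ is an $(s,t)$-path which, because $w\notin Y$, still has at least $k+1$ vertices, yet has strictly fewer vertices than $P$ (since $w\neq v$, at least one vertex of $P_2$ is dropped) --- contradicting minimality. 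Hence $Q$ is internally disjoint from $P_2$, so $t$ is reachable from $v$ in $G-(X\setminus\{v,t\})$ and your final check succeeds. This rerouting argument is the step your proposal leaves open, and without it the correctness of the algorithm does not go through.
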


\section{Finding a Dense Subgraph}\label{sec:dense}
Here we show that given an instance of \probLCMAD, we can in polynomial time either solve the problem or find a dense induced subgraph of the input graph. 
This part crucially depends on structural and algorithmic results obtained by Fomin et al. in~\cite{FominGSS21}. To describe these results, we have to define the notion of Dirac decomposition introduced in~\cite{FominGSS21} (see Definition~5) even if the only property which we need is that a $3$-connected graph does not admit such a decomposition.
A \emph{leaf-block} of a connected graph having a cut-vertex 
is a block containing exactly one cut-vertex of the original graph.  
A vertex of a leaf-block is \emph{inner} if it is distinct from the cut-vertex in this block. 
  The definition in~\cite{FominGSS21} uses a set $B$ of vertices of small degrees that could be removed from the graph. For our purposes, we adapt the special case of the Dirac's decomposition corresponding to \cite[Definition~5]{FominGSS21} with  $B=\emptyset$.

\begin{definition}[Dirac's decomposition~\cite{FominGSS21}]
Let $G$ be a 2-connected graph and let $C$ be a cycle in $G$ of length at least $2\delta(G)$. Two disjoint paths $P_1$ and $P_2$ in $G$ induce \emph{a Dirac decomposition} for $C$  in $G$ if the following holds.
\begin{itemize}
\item[(i)] The cycle $C$ is of the form $C=P_1 {P'}P_2{P''}$, where  each of the paths ${P'}$ and ${P''}$ has at least $\delta(G)-2$ edges.
\item[(ii)] For  every connected component $H$ of $G-V(P_1  \cup P_2)$,  one of the following holds:
\begin{itemize}
\item $H$ is $2$-connected and the maximum size of a matching in  $G$ between $V(H)$ and $V(P_1)$  is one,  and between $V(H)$ and $V(P_2)$ is also  one;
\item $H$ is not 2-connected and has at least three vertices (i.e., has a cut-vertex),  exactly one vertex of $P_1$ has neighbors in $H$, that is, $|N_{G}(V(H))\cap V(P_1)|=1$, and no inner vertex from a  leaf-block of $H$  has a neighbor in $P_2$;
\item $H$ is not 2-connected and has at least three vertices,  $|N_{G}(V(H))\cap V(P_2)|=1$, and no inner vertex from  a leaf-block of $H$ has a neighbor in $P_1$.			
\end{itemize}
\item[(iii)] There is exactly one connected component $H$ in $G-V(P_1\cup P_2)$ with $V(H)=V(P')\setminus \{s',t'\}$, where $s'$ and $t'$ are the end-vertices of $P'$.
Analogously, there is exactly one connected component $H$ in $G-V(P_1\cup P_2)$ with $V(H)=V(P'')\setminus \{s'',t''\}$, where $s''$ and $t''$ are the end-vertices of $P''$.
\end{itemize}
\end{definition}

Fomin et al.~\cite[Lemma~20]{FominGSS21} proved the following algorithmic 
result.\footnote{We give a simplified variant of~\cite[Lemma~20]{FominGSS21} for $B=\emptyset$.} 

\begin{proposition}[{\cite[Lemma~20]{FominGSS21}}]\label{prop:Dirac}
Let $G$ be a  $2$-connected graph and $k$ be an integer such that  
$0< k \le \frac{1}{24}\delta(G)$ and 
$\delta(G)< \frac{n-k}{2}$.
Then there is an algorithm that, given a cycle $C$ of length less than $2\delta(G)+k$, in polynomial time finds either
\begin{itemize}
\item a longer cycle in $G$, or
\item a vertex cover of $G$ of size at most $\delta(G)+2k$, or
\item two paths $P_1, P_2$ that induce a Dirac decomposition for $C$ in $G$.
\end{itemize}
\end{proposition}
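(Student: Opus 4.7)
The plan is to start from the given cycle $C$ and iteratively try to enlarge it using classical cycle-extension techniques, until an ``irreducible'' configuration is reached. I would then argue that such an irreducible $C$, combined with the hypotheses $|V(C)| < 2\delta(G)+k$ and $\delta(G) < (n-k)/2$, forces either a vertex cover of size at most $\delta(G)+2k$ or two ``short'' arcs and two ``long'' arcs of $C$ that together induce a Dirac decomposition.

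\textbf{Extension phase.} As long as there is a vertex $v \notin V(C)$ with an attachment configuration on $C$ allowing a P\'osa-style rotation or a ``reroute-through-$H$'' operation, I would output a longer cycle and stop. Concretely, for every connected component $H$ of $G - V(C)$, I would look at its attachment vertices $A_H$ on $C$; if two attachments $a,b \in A_H$ admit an $(a,b)$-path through $H$ longer than the shorter of the two $C$-arcs from $a$ to $b$, replacing the arc yields a longer cycle. Similarly, for an off-cycle vertex $v$ with neighbors $u_1,\dots,u_t \in V(C)$ (in cyclic order), a classical Dirac argument forbids certain local patterns: the successor $u_i^+$ of any neighbor of $v$ cannot be adjacent to another $u_j$, for otherwise $C$ together with $v$ is strictly enlargeable. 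Any such detected pattern yields the longer cycle.

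\textbf{Structure phase.} Now assume no such extension applies. The Dirac-type arguments force the neighborhoods of off-cycle components to be packed into short ``blocks'' on $C$. Aggregating these constraints with the slack $k \leq \delta(G)/24$, I would derive a dichotomy: either (i)~the total length of all attachment blocks together with the set of off-cycle vertices is at most $\delta(G)+2k$, yielding the required vertex cover directly, since every edge of $G$ must hit one of the attachment blocks or an off-cycle vertex; or (ii)~two arcs $P'$ and $P''$ of length at least $\delta(G)-2$ each appear on $C$ that are free of off-cycle attachments, flanked by two short arcs $P_1$ and $P_2$ in which all attachments are concentrated. In case~(ii), $C$ naturally decomposes as $C = P_1 P' P_2 P''$, giving condition~(i) of a Dirac decomposition.

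\textbf{Verifying the component conditions.} It remains to verify conditions~(ii) and~(iii) of the decomposition for the components of $G - V(P_1\cup P_2)$. The interiors of $P'$ and $P''$ are such components by construction, giving~(iii). For an arbitrary component $H$ of $G - V(P_1 \cup P_2)$, I would argue as follows: if $H$ is $2$-connected and admits a matching of size $\geq 2$ between $V(H)$ and $V(P_i)$, then by Menger's theorem there are two internally disjoint paths in $H$ between the matched endpoints which, concatenated with the appropriate arc of $C$, would contradict irreducibility; if $H$ is not $2$-connected and an inner vertex of some leaf-block of $H$ has a neighbor in the ``wrong'' short arc, a detour through the leaf-block would again produce a longer cycle. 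The main obstacle is the Structure phase: quantifying precisely how the $k$-slack in the cycle length interacts with the number and length of attachment blocks, so that exactly two long arcs of the required length $\geq \delta(G)-2$ can be extracted. This requires a careful counting argument that uses $k \leq \delta(G)/24$ in an essential way and leverages $\delta(G) < (n-k)/2$ to guarantee that enough vertices live outside $C$ to populate the structure.
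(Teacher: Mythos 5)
First, note that the paper does not prove this proposition at all: it is imported verbatim as \cite[Lemma~20]{FominGSS21}, and the present paper only uses it as a black box (via Corollary~\ref{cor:Dirac}). So there is no in-paper proof to match your argument against; the relevant comparison is with the proof in the cited work, which is a long and delicate structural analysis.

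Your outline captures the right general flavour (extend the cycle by rotations/reroutings until stuck, then read off structure from the irreducible configuration), but it has a genuine gap that you yourself flag: the entire ``Structure phase'' --- the dichotomy between a vertex cover of size $\delta(G)+2k$ and the existence of two attachment-free arcs of length at least $\delta(G)-2$ flanked by two short arcs carrying all attachments --- is asserted, not proved. That dichotomy is precisely the technical core of Lemma~20 in \cite{FominGSS21}; without the counting argument that converts the slack $k\le\frac{1}{24}\delta(G)$ and the bound $\delta(G)<\frac{n-k}{2}$ into this structure, nothing has been established. Moreover, the one concrete mechanism you do propose for the vertex-cover branch is incorrect as stated: the set consisting of ``all attachment blocks together with the set of off-cycle vertices'' is not a vertex cover (edges of $C$ joining two cycle vertices outside the attachment blocks are not covered), and it cannot in general have size $\delta(G)+2k$, since the number of off-cycle vertices is about $n-|V(C)|\ge n-2\delta(G)-k$, which the hypothesis $\delta(G)<\frac{n-k}{2}$ only bounds from \emph{below}, not from above. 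The correct vertex cover in this regime comes from the opposite direction: the off-cycle vertices form (essentially) an independent set whose neighbourhoods are confined to a set of roughly alternating cycle vertices, and it is that set of attachment points, not the off-cycle vertices, that covers all edges. Finally, the verification of condition~(ii) of the Dirac decomposition requires ruling out matchings of size two between a component and \emph{each} of $P_1$ and $P_2$ separately, plus the asymmetric leaf-block conditions for non-$2$-connected components; your sketch gestures at the right contradictions but does not supply the case analysis. As it stands, the proposal is a plausible roadmap rather than a proof.
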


We use the corollary of Proposition~\ref{prop:Dirac} for $3$-connected graphs.

\begin{corollary}\label{cor:Dirac}
Let $G$ be a  $3$-connected graph and $k$ be an integer such that 
$0< k \le \frac{1}{24}\delta(G)$. 
Then there is an algorithm that, given a cycle $C$ of length less than $2\delta(G)+k$, in polynomial time either
\begin{itemize}
\item returns a longer cycle in $G$, or
\item returns a vertex cover of $G$ of size at most $\delta(G)+2k$, or
\item reports that $C$ is Hamiltonian.
\end{itemize}
\end{corollary}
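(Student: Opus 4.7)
The plan is to derive Corollary~\ref{cor:Dirac} from Proposition~\ref{prop:Dirac} by showing that, in a $3$-connected graph, Proposition~\ref{prop:Dirac}'s third alternative (a Dirac decomposition) cannot occur, so its output is always a longer cycle or a small vertex cover. First I would check whether $|V(C)|=n$; if so, report that $C$ is Hamiltonian. Otherwise, provided $\delta(G)<(n-k)/2$ so that Proposition~\ref{prop:Dirac} applies, invoke its algorithm and return its longer-cycle or vertex-cover outputs directly.

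The key step is to rule out the Dirac decomposition output when $G$ is $3$-connected. Suppose for contradiction that Proposition~\ref{prop:Dirac} returns a Dirac decomposition $(P_1,P_2)$ for $C$. By property~(iii), there is a component $H'$ of $G-V(P_1\cup P_2)$ with $V(H')=V(P')\setminus\{s',t'\}$; since $|E(P')|\ge\delta(G)-2$ and $\delta(G)\ge 24$ (using $k\ge 1$ and $k\le\delta(G)/24$), we have $|V(H')|\ge\delta(G)-3\ge 21$. Property~(ii) places $H'$ into one of three sub-cases; in each I would extract a $2$-separator of $G$, contradicting $3$-connectedness. If $H'$ is $2$-connected, K\"onig's theorem applied to the two matching-size-one conditions yields vertices $v_1\in V(H')\cup V(P_1)$ and $v_2\in V(H')\cup V(P_2)$ through which all edges between $V(H')$ and $V(P_i)$ pass, so $\{v_1,v_2\}$ separates $V(H')\setminus\{v_1,v_2\}$ (non-empty since $|V(H')|\ge 21$) from $V(H'')$ (non-empty by property~(iii)). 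If $H'$ is not $2$-connected, the sub-case provides a unique $u\in V(P_1)$ (or $V(P_2)$) with neighbors in $V(H')$, while no inner vertex of any leaf-block of $H'$ has neighbors in the opposite path; taking any leaf-block $B$ of $H'$ with cut-vertex $c$, the block structure forces the inner vertices of $B$ to have neighbors only in $V(B)\cup\{u\}$, so $\{c,u\}$ separates $V(B)\setminus\{c\}$ from $V(H')\setminus V(B)$, the latter being non-empty since a non-$2$-connected $H'$ has at least two blocks.

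It remains to handle the regime $\delta(G)\ge(n-k)/2$, where Proposition~\ref{prop:Dirac} does not apply. Here $n\le 2\delta(G)+k$ and, combined with $k\le\delta(G)/24$, $\delta(G)\ge 24n/49$, so $G$ is extremely dense. If $|V(C)|<n$, I would find a longer cycle by a direct extension: pick $v\notin V(C)$, use $3$-connectedness to obtain three internally disjoint paths from $v$ to $V(C)$, and leverage the high minimum degree together with the bound $|V(C)|<2\delta(G)+k$ to reroute a short arc of $C$ through $v$, producing a strictly longer cycle in polynomial time.

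The main obstacle will be the sub-case analysis for the Dirac decomposition: for each of the three sub-cases of property~(ii) one must identify an explicit $2$-separator and verify that both sides of the separation remain non-empty, which requires careful bookkeeping of the neighborhoods of $V(H')$ in $V(P_1)\cup V(P_2)$ and, in the non-$2$-connected sub-cases, of the block structure of $H'$. A secondary technicality is the dense regime $\delta(G)\ge(n-k)/2$: a naive application of Dirac's theorem only gives a cycle of length $\ge n-k$, which may fail to exceed $|V(C)|$, so the extension argument must exploit $3$-connectedness more carefully.
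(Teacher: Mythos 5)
Your core argument---that a $3$-connected graph cannot admit a Dirac decomposition because each of the three sub-cases of property~(ii) yields a $2$-separator---is essentially the paper's proof. Your K\"onig-style extraction of the two covering vertices in the $2$-connected sub-case and the $\{c,u\}$ separator in the non-$2$-connected sub-cases match the paper's argument (the paper picks $u$ together with the cut-vertex of a leaf-block), and your non-emptiness bookkeeping via the components guaranteed by property~(iii) is fine.

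The genuine gap is in the regime $\delta(G)\ge(n-k)/2$, where you propose to always return a \emph{longer cycle} by rerouting an arc of $C$ through a vertex $v\notin V(C)$. This cannot work in general: when $(n-k)/2\le\delta(G)<n/2$ the graph need not be Hamiltonian, and $C$ may already be a longest cycle of length, say, $n-1$, in which case no longer cycle exists and no local extension argument can succeed. In that regime the algorithm must be able to fall back on the vertex-cover output, which your plan does not produce. The paper's resolution is different and simpler: if $2\delta(G)\ge n$, Dirac's theorem gives a constructible Hamiltonian cycle (which is longer than the non-Hamiltonian $C$); otherwise set $k'=n-2\delta(G)>0$, note that $k'\le k$ (precisely because $\delta(G)\ge(n-k)/2$) and that $|E(C)|\le n-1=2\delta(G)+k'-1$, and re-invoke Proposition~\ref{prop:Dirac} with parameter $k'$, obtaining either a longer cycle or a vertex cover of size at most $\delta(G)+2k'\le\delta(G)+2k$. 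You should replace your rerouting sketch with this parameter-rescaling step; as written, that part of your proof would fail.
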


\begin{proof}
To see the claim, observe that by condition (ii) of the definition of a Dirac decomposition, any graph $G$ admitting such a decomposition has a separator of size $2$. Indeed, following the notation from the definition, let $H$ be a connected component of $G-V(P_1  \cup P_2)$. Note that $|V(H)|\geq 3$. 
If  $H$ is $2$-connected, then the maximum size of a matching in  $G$ between $V(H)$ and $V(P_1)$  is one,  and between $V(H)$ and $V(P_2)$ is also  one. Then one can choose an end-vertex of each edge of the matching  between $V(H)$ and $V(P_1)\cup V(P_2)$ in such a way that these two vertices separate a vertex of $H$ and a vertex of  $V(P_1)\cup V(P_2)$.  Suppose that $H$ is not 2-connected and exactly one vertex $u$ of $P_1$ has neighbors in $H$ and no inner vertex from a  leaf-block of $H$ has a neighbor in $P_2$. Then because $G$ is $2$-connected,  $u$ has a neighbor $v$ in a leaf-block $L$ of $H$ distinct from the unique cut-vertex $w$ of $L$. Then $u$ and $w$ form a separator of size $2$ in $G$. The last case from (ii) is symmetric.  

Observe that it can be easily verified whether $C$ is a Hamiltonian cycle. Suppose that this is not the case. 
Then, by the above, the algorithm from Proposition~\ref{prop:Dirac} cannot return a Dirac decomposition. Therefore, if  $\delta(G)< \frac{n-k}{2}$, it either finds a longer cycle or returns a vertex cover of $G$ of size at most $\delta(G)+2k$. If $\delta(G)\geq \frac{n-k}{2}$, then let $k'=\min\{0,n-2\delta(G)\}$. If  $k'=0$, then $2\delta(G)\geq n$. By the theorem of Dirac~\cite{Dirac52}, $G$ is Hamiltonian and, moreover, a Hamiltonian cycle $C'$ can be constructed in polynomial time (see, e.g.,~\cite{locke1985generalization}).  Then we return $C'$. Let $k'>0$. Note that the length of  $C$ does not exceed $2\delta(G)+k'-1$ in this case. Then we apply the algorithm from Proposition~~\ref{prop:Dirac} using $k'$ instead of $k$, which either finds a longer cycle or returns a  vertex cover of size at most $\delta(G)+2k'\leq \delta(G)+2k$. This completes the proof. 
\end{proof}

\begin{lemma}\label{lem:fid-dense}
There is a polynomial-time algorithm that, given an instance  $(G,k)$ of \probLCMAD, where $0<k\leq \frac{1}{80}\mad(G)-1$, either
\begin{itemize}
 \item[(i)] 
 finds a cycle of length at least $\mad(G)+k$ in $G$, or
 \item[(ii)] finds an induced subgraph $H$ of $G$ with $\ad(H)\geq \mad(G)-1$ such that  $\delta(H)\geq \frac{1}{2}\ad(H)$ and $|V(H)|< \ad(H)+k+1$, or
 \item[(iii)] finds an induced subgraph $H$ of $G$ 
 such that there is a partition $\{A,B\}$ of $V(H)$ with the following properties:
 \begin{itemize}
 \item $B$ is an independent set,
 \item $\frac{1}{2}\mad(G)-4k\leq |A|$,
 \item for every $v\in A$, $|N_H(v)\cap B|\geq 2|A|$,
 \item for every $v\in B$, $\dg_H(v)\geq |A|-2k-2$.
 \end{itemize}
  \end{itemize}
\end{lemma}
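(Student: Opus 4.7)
My plan is to construct $H$ in four stages, following the high-level strategy sketched in Section~2.

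\emph{Stage~1 (densest-subgraph reductions).} First, I would invoke Proposition~\ref{prop:densest} to obtain in polynomial time an induced subgraph $F$ of $G$ with $\ad(F)=\mad(G)$, and then apply density-preserving reductions: replace $F$ by its densest connected component; then by its densest block (using that $\sum_i |E(B_i)| = |E(F)|$ while $\sum_i |V(B_i)| \geq |V(F)|$ over the blocks $B_i$, so some block attains average degree at least $\ad(F)$); and then repeatedly delete any vertex of degree less than $\ad(F)/2$, since such a deletion strictly increases $\ad$ and so cannot occur in a densest subgraph. This yields a 2-connected induced subgraph $F_1$ with $\ad(F_1)=\mad(G)$ and $\delta(F_1)\geq\ad(F_1)/2$.

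\emph{Stage~2 (3-connectivity).} Next, while $F_1$ has a 2-separator $\{u,v\}$ such that some component $C$ of $F_1-\{u,v\}$ satisfies $2|E_C|/|C| \leq \ad(F_1)$, where $E_C$ denotes the edges of $F_1$ with an endpoint in $C$, I would delete $C$; the identity $\ad(F_1-C)-\ad(F_1) = \frac{|C|}{|V(F_1)|-|C|}\bigl(\ad(F_1)-2|E_C|/|C|\bigr)$ shows this does not decrease the average degree, and Stage~1 may be reapplied as needed. Once every remaining 2-separator has only \emph{dense} sides, pick any such separator $\{u,v\}$ together with two components $C_i,C_j$ of $F_1-\{u,v\}$; by Proposition~\ref{prop:Fan} each induced subgraph $F_1[C_i\cup\{u,v\}]$ admits a $(u,v)$-path of length at least $\ad(F_1)-2 \geq \mad(G)-3$, and gluing two such paths yields a cycle of length at least $2\mad(G)-6 \geq \mad(G)+k$ (using $\mad(G)\geq 80(k+1)$), returned as case~(i). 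Otherwise the resulting graph $H$ is 3-connected with $\ad(H)\geq\mad(G)-1$ and $\delta(H)\geq\ad(H)/2$.

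\emph{Stage~3 (Dirac extension).} Apply Dirac's theorem to construct a cycle $C_0$ of length at least $2\delta(H)$ in $H$ in polynomial time. Iteratively: if $|C_i|\geq\mad(G)+k$, return $C_i$ as case~(i); otherwise invoke Corollary~\ref{cor:Dirac} on $H$ and $C_i$ with parameter $k':=k+1$, whose precondition $k'\leq\delta(H)/24$ holds because $k\leq\mad(G)/80-1$ and $\delta(H)\geq(\mad(G)-1)/2$. The corollary returns either a strictly longer cycle $C_{i+1}$ (iterate), or certifies that $C_i$ is Hamiltonian (then $|V(H)|=|C_i|<\mad(G)+k\leq\ad(H)+k+1$, giving case~(ii)), or outputs a vertex cover $A_0$ of $H$ with $|A_0|\leq\delta(H)+2(k+1)$. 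Since every iteration strictly enlarges the cycle, the loop terminates in polynomial time.

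\emph{Stage~4 (pruning for case~(iii)).} Finally, set $B_0:=V(H)\setminus A_0$, initialize $(A,B):=(A_0,B_0)$, and iteratively remove any $v\in A$ with $|N_H(v)\cap B|<2|A|$ or any $v\in B$ with $|N_H(v)\cap A|<|A|-2k-2$ (this equals $\dg_{H[A\cup B]}(v)$ by the independence of $B$), outputting $H[A\cup B]$. Since every $v\in B_0$ has all its neighbors in $A_0$, one has $|A_0|\geq\delta(H)\geq(\mad(G)-1)/2$, and a double counting of $A_0$-$B_0$ edges shows that at most $O(k)$ vertices of $A_0$ start out with fewer than $2|A_0|$ neighbors in $B_0$. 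The hard part will be to show that the two pruning rules do not cascade destructively—removing vertices from $B$ lowers the $B$-neighborhoods of $A$-vertices, which may trigger further $A$-removals, which in turn decreases the threshold $|A|-2k-2$ used for $B$. I plan to control this by an amortized argument that charges each $A$-removal to a deficit of $A_0$-$B_0$ edges, bounding the total number of $A$-removals by $O(k)$ and yielding $|A|\geq\delta(H)-O(k)\geq\mad(G)/2-4k$ at termination.
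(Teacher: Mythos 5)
Your overall architecture matches the paper's: densest subgraph via Proposition~\ref{prop:densest}, density-preserving reductions down to a $3$-connected core, iteration of Corollary~\ref{cor:Dirac}, and a pruning of the vertex cover. However, there are two genuine gaps.

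First, your Stage~1 block reduction is based on a false inequality. From $\sum_i|E(B_i)|=|E(F)|$ and $\sum_i|V(B_i)|\geq|V(F)|$ the mediant inequality only gives that some block $B_i$ satisfies $\ad(B_i)\geq \frac{2|E(F)|}{\sum_j|V(B_j)|}$, and that right-hand side is \emph{at most} $\ad(F)$, not at least. Concretely, for two copies of $K_t$ sharing a vertex the whole graph is its own densest subgraph with $\ad=\frac{2t(t-1)}{2t-1}>t-1$, while every block has $\ad=t-1$; so the densest block can have strictly smaller average degree, and your claim $\ad(F_1)=\mad(G)$ fails. Worse, since block reductions may recur after later vertex and separator deletions, you have no bound on the accumulated loss, and all subsequent density bounds (the Fan argument in Stage~2, the preconditions in Stage~3, the size of $A$ in Stage~4) depend on $\ad$ staying near $\mad(G)$. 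The paper's proof avoids exactly this by measuring density with $\eg(H)=\frac{2m}{n-1}$: the denominators $n_i-1$ of the blocks sum to $n-1$, so some block has $\eg$ at least $\eg(H)$, every reduction rule is monotone in $\eg$, and Observation~\ref{obs:ad} caps the total loss in $\ad$ at $1$ — which is precisely why condition~(ii) allows $\ad(H)\geq\mad(G)-1$ rather than equality.

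Second, Stage~4 replaces a one-shot pruning by a cascading two-sided iteration whose analysis you explicitly defer (``I plan to control this by an amortized argument''). That unproved amortized bound is the crux of case~(iii), so as written the proof is incomplete there. The cascade is also avoidable: keep \emph{all} of $B_0$ and prune only $A_0$. Every $v\in B_0$ has all its neighbors in the vertex cover $A_0$ and $\dg_H(v)\geq\delta(H)\geq|A_0|-2k'$, so after deleting the few bad cover vertices once, each $v\in B$ still has at least $|A|-2k'\geq|A|-2k-2$ neighbors in $A$; and since $B$ is untouched, the quantities $|N(v)\cap B|$ for $v\in A$ never change, so no further removals are triggered. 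Finally, your ``double counting'' bound of $O(k)$ on the initially bad cover vertices is not self-contained: the paper first establishes $|A_0|\geq\frac{1}{2}\ad(H)$ and $|B_0|>12|A_0|$ (using $|A_0|\leq\delta(H)+2k'\leq\frac{1}{2}(\ad(H)+3k+3)$) before the counting argument yields the bound $4k-1$; without these auxiliary inequalities the count does not go through.
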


\begin{proof}
Let $G$ be a graph and let $k\leq  \frac{1}{80}\mad(G)-1$ be a positive integer. First, we apply Proposition~\ref{prop:densest} and find a densest induced subgraph $H$ of $G$.
Then we apply a series of reduction rules to $H$. It is slightly more convenient for us to use $\eg(H)$ as a measure of density. Note that $\eg(H)>\ad(H)=\mad(G)\geq \eg(H)-1$ by Observation~\ref{obs:ad}. Our reduction rules delete some vertices of $H$ without decreasing $\eg(H)$. However, it may happen that the average degree gets smaller, but since we do not decrease $\eg(H)$, the total decrease of the average degree is at most one.  

The first three  rules follow the classical proof of Theorem~\ref{thm:EG}.  

\begin{reduction}\label{red:del-conn}
If $H$ is disconnected, then find a connected component $F$ of $H$ with the maximum value of $\eg(F)$ and set $H:=F$.
\end{reduction}

The following rule is the reason why we switched from the average degree to the Erd{\H{o}}s--Gallai bound.
\begin{reduction}\label{red:del-block}
If $H$ is connected but not $2$-connected, then find a block $F$ of $H$ with maximum value of $\eg(F)$ and set $H:=F$.
\end{reduction}

\begin{reduction}\label{red:del-min}
If $H$  has a vertex $v$ with $\dg_H(v)\leq \frac{1}{2}\eg(H)$, then set $H:=H-v$.
\end{reduction}

The next rule is more complicated.

\begin{reduction}\label{red:del-bicomp}
If $H$ is $2$-connected and 
has a separator $S$ of size two such that there is a component $F$ of $G-S$ with $\ad_H(V(F))\leq\frac{2}{3}\eg(H)$, then delete the vertices of $F$.
\end{reduction}

The Rules~\ref{red:del-conn}--\ref{red:del-bicomp} are applied exhaustively whenever one of them is applicable. 
In the next claim, we show that this does not decrease the density of the graph.

\begin{claim}\label{cl:reduction}
Let $H'$ is the graph obtained by the exhaustive application of Rules~\ref{red:del-conn}--\ref{red:del-bicomp} to $H$.
Then 
$\eg(H')\geq \eg(H)$.
\end{claim}

\begin{proof}[Proof of Claim~\ref{cl:reduction}]
It is sufficient to show the claim for $H'$ obtained by applying either of the rules once. Let $d=\eg(H)$, and we use $n$ and $m$ to denote the number of vertices and edges, respectively, in $H$.
For Rules~\ref{red:del-conn}--\ref{red:del-min}, the proof follows the classical proof of Theorem~\ref{thm:EG}; we provide the arguments here for completeness. 

To see the claim for Rule~\ref{red:del-conn}, assume that $H$ is a disjoint union of $F_1$ and $F_2$. Denote by $n_i$ and $m_i$ the number of vertices and edges, respectively, in $F_i$ for $i\in\{1,2\}$. 
We claim that $\eg(F_1)\geq d$ or $\eg(F_2)\geq d$. To obtain a contradiction, assume that $\eg(F_1)< d$ and $\eg(F_2)< d$. Then $2m_1<d(n_1-1)$ and $2m_2<d(n_2-1)$. We have that
$2m=2m_1+2m_2<d(n_1+n_2-1)- d\leq d(n-1)$ contradicting $\frac{2m}{n-1}=d$. This shows that Rule~\ref{red:del-conn} is safe.

The safety of Rule~\ref{red:del-block} is proved similarly. Suppose that $H$ is connected and let $v$ be a cut-vertex of $H$. Let $\{X,Y\}$ be a separation of $H$ corresponding to $v$, that is, $X\cup Y=V(H)$, $X\cap Y=\{v\}$, and no vertex of $X\setminus Y$ is adjacent to a vertex of $Y\setminus X$. Let $F_1=H[X]$ and $F_2=H[Y]$. As above, we use $n_i$ and $m_i$ to denote the number of vertices and edges, respectively, in $F_i$ for $i\in\{1,2\}$. 
We clam that $\eg(F_1)\geq d$ or $\eg(F_2)\geq d$. The proof is by contradiction. Assume that $\eg(F_1)< d$ and $\eg(F_2)< d$. Then $2m_1<d(n_1-1)$ and $2m_2<d(n_2-1)$. We have that
$2m=2m_1+2m_2<d(n_1+n_2-2)=d(n-1)$. However, this means that $\frac{2m}{n-1}<d$; a contradiction. This proves the claim for Rule~\ref{red:del-block}.

For Rule~\ref{red:del-min}, let $v\in V(H)$ be a vertex with $\dg_H(v)\leq \frac{1}{2}\eg(H)$ and let $H'=H-v$.  Then
\begin{equation*}
\eg(H')=\frac{2m-2\dg_H(v)}{n-2}\geq \frac{2m-d}{n-2}=\frac{2m-2m/(n-1)}{n-2}=\frac{2m}{n-1}=d,
\end{equation*}
as required.  

Finally, we deal with Rule~\ref{red:del-bicomp}.  Suppose that $H$ is $2$-connected and $H$ has a separator $S$ of size two such that there is a component $F$ of $G-S$ with $\ad_H(V(F))\leq\frac{2}{3}\eg(H)$. Let $n_1$ and $m_1$ be the number of vertices and edges in $F$, respectively. We have that  $H'=H-V(F)$. Denote by $n_2$ and $m_2$ the number of vertices and edges, respectively, in $H'$.  We have to prove that $\eg(H')\geq d$. Assume that this is not the case and $\eg(H')<d$. Then $2m_2<d(n_2-1)$.  
Since each vertex of $S$ is adjacent to at most $n_1$ vertices of $V(F)$ in $G$ and $\ad_H(V(F))\leq\frac{2}{3}d$, we have that for the number of edges $m_1'$ of $H[V(F)\cup S]-E(H[S])$, 
\begin{equation*}
2m_1'\leq \frac{2}{3}dn_1+2n_1=dn_1+2n_1-\frac{1}{3}dn_1
\end{equation*}
and, since $m = m_1' + m_2$,
\begin{equation}\label{eq:del-bicomp}
\eg(H)=\frac{2m}{n-1}<\frac{dn_1+2n_1-dn_1/3+d(n_2-1)}{n_1+n_2-1}=d-n_1\frac{d/3-2}{n_1+n_2-1}. 
\end{equation}
As $d\geq \mad(G)>6$, we obtain that $\frac{1}{3}d-2>0$ and
by (\ref{eq:del-bicomp}), $\eg(H)<d$; a contradiction. Therefore, $\eg(H')\geq d$ as required. This concludes the proof.
\end{proof}

For simplicity, let us use the same notation $H$ for the graph obtained by the  exhaustive application of Rules~\ref{red:del-conn}--\ref{red:del-bicomp}. Since the rules do not decrease the value of $\eg(H)$, we have that $\ad(H)\geq \eg(H)-1\geq \mad(G)-1$. 

Because Rules~\ref{red:del-conn} and \ref{red:del-block} are not applicable, we have that $H$ is 2-connected. Suppose that $H$ has a separator $S=\{x,y\}$ of size two.
Let $F_1$ and $F_2$ be two connected components of $H-S$. 
Because of Rule~\ref{red:del-bicomp}, $\ad_H(V(F_i))>\frac{2}{3}\eg(H)$ for $i\in\{1,2\}$. Let $F'_i=H[V(F_i)\cup S]$ for $i\in\{1,2\}$. 
By Proposition~\ref{prop:Fan}, $F_1'$ has an $(x,y)$-path $P_1$ of length at least 
$\frac{2}{3}\eg(H)$. In the same way, $F_2'$ has an $(x,y)$-path $P_2$ of length at least  $\frac{2}{3}\eg(H)$. 
Concatenating these paths we obtain the cycle $C$ whose length is al least $\frac{4}{3}\eg(H)\geq \frac{4}{3}\mad(G)$. Because  $0<k\leq \frac{1}{80}\mad(G)-1$, $C$ is a cycle of length at least $\mad(G)+k$. Then our algorithm returns $C$ and stops as it is required in (i). 

Assume from now on that $H$ has no separator of size two. Because $|V(H)|\geq \ad(H)+1\geq\eg(H)\geq 3$, $H$ is 3-connected.  Because Rule~\ref{red:del-min} is not applicable, $\delta(H)>\frac{1}{2}\eg(H)$. 
Let $k'=\lceil \mad(G)\rceil+k-2\delta(H)\leq k+1$.    Observe that $H$ has a cycle of length at least $\mad(G)+k$ if and only if $H$ has a cycle of length at least $2\delta(H)+k'$. If $k'\leq 0$, then by the theorem of 
 Dirac~\cite{Dirac52}, $G$ has a cycle $C$ of length at least $\min\{|V(H)|,2\delta(H)\}$ and, moreover, $C$ can be constructed in polynomial time (see, e.g.,~\cite{locke1985generalization}). If the length of $C$ is at least 
 $2\delta(H)$, we have that the length of $C$ is at least $\mad(G)+k$ and our algorithm returns $C$ and stops. Otherwise, if the length of $C$ is less than $2\delta(H)$, $C$ is a Hamiltonian cycle in $H$. Thus, we have that 
 $\ad(H)\geq \mad(G)-1$, $\delta(H)>\frac{1}{2}\eg(H)>\frac{1}{2}\ad(H)$ and $|V(H)|<\mad(G)+k\leq\ad(H)+k+1$. This means that $H$ satisfies condition (ii) of the lemma. Then we return $H$ and stop. 
 
 Now we assume that $k'>0$. Recall that $G$ is 3-connected and $k'\leq k+1\leq \frac{1}{80}\mad(G)\leq \frac{1}{24}\delta(H)$.  
This allows us to apply Corollary~\ref{cor:Dirac}. We find an arbitrary cycle in $H$ and apply the algorithm from Corollary~\ref{cor:Dirac} for $H$ and $k'$ iteratively while the algorithm produces a longer cycle. Let $C$ be the cycle of maximum length produced by the algorithm. 

If the length of $C$ is at least $2\delta(H)+k'$, then the length of $C$ is at least $\mad(G)+k$ and we solved the problem. In this case we return $C$ and stop. Assume that the length of $C$ does not exceed  $2\delta(H)+k'-1$.
Suppose that the algorithm constructed a Hamiltonian cycle. This means that $|V(H)|\leq 2\delta(H)+k'-1< \eg(H)+k\leq \ad(H)+k+1$. Since $\ad(H)\geq\mad(G)-1$ and $\delta(H)>\frac{1}{2}\eg(H)>\frac{1}{2}\ad(H)$, $H$ satisfies (ii). Then we return $H$ and stop. It remains to consider the last case when the algorithm from  Corollary~\ref{cor:Dirac} returns  a vertex cover $X$ of $H$ with $|X|\leq \delta(H)+2k'$. 

Because $k'>0$, we have that $\mad(G)+k-2\delta(H)>0$ and, therefore, $\delta(H)<\frac{1}{2}(\ad(H)+k+1)$. Hence, $|X|\leq \frac{1}{2}(\ad(H)+3k+3)$. Consider $B=V(H)\setminus X$. Because $X$ is a vertex cover of $H$, $B$ is an independent set. Let $p=|X|$ and $q=|B|$. We show some properties of $p$ and $q$.

First, we show that $p\geq\frac{1}{2}\ad(H)$, that is, $|X|\geq \frac{1}{2}\ad(H)$.  We have that $|E(H)|\leq \binom{p}{2}+qp$ and
\begin{equation*}
\ad(H)\leq\frac{2\binom{p}{2}+2pq}{p+q}=\frac{2p^2+2pq-p^2-p}{p+q}=2p-\frac{p^2+p}{p+q}\leq 2p. 
\end{equation*}

Next, we show that $q>12p$, i.e., 
$|B|> 12|X|$.
We have that 
\begin{equation*}
\ad(H)\leq \frac{2\binom{p}{2}+2pq}{p+q}=\frac{p(p-1)+2pq}{p+q}<\frac{2p^2+2pq-p^2}{p+q}=2p-\frac{p^2}{p+q}\leq\ad(H)+3k+3-\frac{p^2}{p+q}.
\end{equation*}
Thus, $\frac{p^2}{p+q}<3k+3$ and $(3k+3)q>p^2-(3k+3)p$. Then $q\geq p(\frac{p}{3k+3}-1)$. Recall that $p\geq\frac{1}{2}\ad(H)$ and $k+1\leq\frac{1}{80}\mad(G)\leq\frac{1}{80}(\ad(H)+1)$. Then $\frac{p}{3k+3}>13$ and 
$q>12p$. 

We use the last property and claim that at most $4k-1$ vertices of $X$ have less than $2p$ neighbors in $B$. For the sake of contradiction, assume that this is not the case. Then 
$|E(H)|\leq \binom{p}{2}+4k\cdot 2p+(p-4k)q$ and 
\begin{align*}
\ad(H)\leq&\frac{2\binom{p}{2}+16kp+2(p-4k)q}{p+q}<\frac{2p^2+2pq-8k(q-2p)}{p+q}=2p-\frac{8k(q-2p)}{p+q}\\
\leq &\ad(H)+3k+3-\frac{8k(q-2p)}{p+q}\leq  \ad(H)+6k-\frac{8k(q-2p)}{p+q}.
\end{align*}
Therefore, $6k\geq \frac{8k(q-2p)}{p+q}$ and $11p\geq q$. However, the last inequality contradicts that $q>12p$. This proves our claim.

We use this  property and define $A=\{v\in X\mid |N_H(v)\cap B|\geq 2p\}$. Since $|X\setminus A|\leq 4k-1$ and $|X|\geq \frac{1}{2}\ad(H)\geq \frac{1}{2}\mad(G)-1$, $|A|\geq \frac{1}{2}\mad(G)-4k$.
 Consider $H'=H[A\cup B]$. We have that $\{A,B\}$ is a partition of $V(H')$ with the properties that $B$ is an independent set, $\frac{1}{2}\mad(G)-4k\leq |A|$, $|N_{H'}(v)\cap B|\geq 2p\geq 2|A|$ for all $v\in A$. Also, $\dg_{H'}(v)\geq |A|-2k'\geq |A|-2k-2$ for all $v \in B$ since by construction $\delta(H) \ge |X| - 2k'$. These are exactly the properties that are required in (iii). Then our algorithm returns $H'$. 

To complete the proof of the lemma, we argue that our algorithm is polynomial. For this, note that Rules~\ref{red:del-conn}--\ref{red:del-bicomp} can be applied in polynomial time, because all connected components, blocks, and separators of size two can be listed in polynomial time. Further, the algorithm from Corollary~\ref{cor:Dirac} is polynomial. Since constructing a cycle length at least $\min\{|V(H)|,2\delta(H)\}$ in a 2-connected graph can be done in polynomial time using the proof of Dirac's theorem, we conclude that the overall running time is polynomial. 
\end{proof}

\section{Covering Vertices of Dense Graphs}\label{sec:cover}

In this section, we prove that, given a sufficiently dense graph and a bounded-size set of pairs of distinct vertices $S$ forming a linear forest, we can find a long cycle in $G+S$ containing all edges from $S$. First, we consider the case where there is a small number of vertices in the graph compared to the average degree. Then, we deal with the case where one part in a bipartition of a dense bipartite graph has bounded size.

Recall that for a set $S$ of  pairs of distinct vertices of a graph $G$, we say that $S$ is potentially cyclable if $(V(G),S)$ is a linear forest. 

\begin{lemma}\label{lem:dense}
Let $G$ be a graph and $k$ be an integer such that  (i) $0< k \le \frac{1}{60}\ad(G)$, (ii) $\delta(G)\geq \frac{1}{2}\ad(G)$, and (iii) $\ad(G)+k>n$. Let also $S$ be a potentially cyclable set of at most $k$ pairs of distinct vertices. Then $G+S$ has a Hamiltonian cycle containing every edge of $S$. 
 \end{lemma}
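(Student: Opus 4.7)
The plan is to prove the lemma in two phases. In \emph{Phase 1} I construct an intermediate cycle $C_0$ of $G + S$ that contains every pair of $S$ as an edge while visiting only $V(S)$ together with a small number of ``connector'' vertices from outside $V(S)$. In \emph{Phase 2} I extend $C_0$ to a Hamiltonian cycle by repeatedly inserting a previously-uncovered vertex into the current cycle. Throughout, the two key consequences of the hypotheses that drive the argument are $\delta(G) \geq \frac{1}{2}\ad(G) \geq 30k$ (every vertex has high absolute degree) and $2\delta(G) \geq \ad(G) > n - k$ (so $\delta(G)$ is close to $n/2$).

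For Phase~1, let $P_1,\ldots,P_\ell$ be the path components of the linear forest $(V(G), S)$, with $\ell \leq |S| \leq k$, each oriented with endpoints $a_i$ and $b_i$. Fixing the cyclic arrangement $P_1,\ldots,P_\ell,P_1$, I build connecting paths $Q_i$ in $G$ from $b_i$ to $a_{i+1}$ (indices modulo $\ell$) whose internal vertices avoid $V(S)$ and the $Q_j$'s already built. The engine is the sub-claim that for any distinct $u,v \in V(G)$ and any forbidden set $F \subseteq V(G)\setminus\{u,v\}$ with $|F| \leq 4k$, the graph $G - F$ contains a $u$-$v$ path of length at most~$3$. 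I will prove it by case analysis: if $uv \in E(G - F)$ we are done; if $u, v$ share a common neighbor in $G - F$ we are done; otherwise, letting $A = N_{G-F}(u)$, $B = N_{G-F}(v)$ and $M = V(G-F)\setminus(A\cup B \cup \{u,v\})$, a double-counting argument using $\delta(G) \geq 30k$ to absorb $|F|\leq 4k$ and $\delta(G) > (n-k)/2$ to force the middle set $M$ to be small produces an edge $w_1 w_2$ with $w_1 \in A$, $w_2 \in B$, yielding the desired length-$3$ path $u-w_1-w_2-v$. Since the forbidden set at each step of building $Q_i$ has size at most $|V(S)| + 2(\ell-1) \leq 4k$, each $Q_i$ contributes at most two new internal vertices, and so $|V(C_0)| \leq 4k$.

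For Phase~2, given the current cycle $C$ and a vertex $v \notin V(C)$, I seek a cycle edge $xy \in E(G)\setminus S$ with $x, y \in N_G(v)$, and then replace $xy$ by the path $xvy$. To guarantee that such a splittable insertion edge exists, I will use a P\'osa-style successor argument adapted to respect the forced $S$-edges. Cyclically orient $C$, let $N = N_G(v) \cap V(C)$, and consider the successor set $N^{+}$ on $C$; each element of $N \cap N^{+}$ corresponds to a cycle edge with both endpoints in $N_G(v)$, and we only need one such edge to lie outside $S$. Because $|S \cap E(C)| \leq k$ and $|N| \geq |V(C)| - (n - \delta(G))$, the naive pigeonhole is tight, so I will combine it with short local rotations of $C$ along non-$S$ edges; such rotation partners exist in abundance thanks to $\delta(G) \geq 30k$. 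After a bounded number of rotations, either the cycle absorbs $v$ directly or a splittable insertion edge appears.

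The main obstacle I foresee is the insertion step of Phase~2: the naive successor counting just barely fails when $\delta(G)$ is as small as $(n-k)/2$ and up to $k$ cycle edges sit in $S$, so the delicate interplay between rotations (which must avoid $S$) and the insertion itself must be analyzed to ensure monotone progress at every iteration. The Phase~1 sub-claim about short connecting paths is, by contrast, essentially a direct density computation once $\delta(G) \geq 30k$ is used to swamp the $O(k)$ forbidden vertices. Both phases are constructive, and the resulting Hamiltonian cycle of $G+S$ containing every edge of $S$ can be produced in polynomial time.
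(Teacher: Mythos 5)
Your Phase~1 is essentially sound and close in spirit to the paper's own construction (the paper performs jumps of length at most $5$ routed through high-degree intermediaries; your length-$3$ connector claim also works, but note that producing the edge between $A=N_{G-F}(u)$ and $B=N_{G-F}(v)$ genuinely requires the \emph{average}-degree hypothesis $\ad(G)>n-k$, not merely $\delta(G)>\frac{n-k}{2}$ and a small middle set: if no $A$--$B$ edge existed, then each of the $n-O(k)$ vertices of $A\cup B$ would have degree at most $\frac{n}{2}+O(k)$, forcing $\ad(G)\le\frac{n}{2}+O(k)<n-k$ since $n>60k$; with only the minimum-degree bound there is no contradiction).

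The genuine gap is in Phase~2, and you have correctly located it but not closed it. A vertex $v$ left outside the cycle may have degree as small as $\delta(G)\approx\frac{n-k}{2}$, hence roughly $\frac{n}{2}$ non-neighbours; once the cycle has close to $n$ vertices, the neighbours of $v$ on $C$ can pairwise fail to be consecutive, so no splittable edge $xy\in E(C)\setminus S$ with $x,y\in N_G(v)$ need exist, and the proposed repair by P\'osa-type rotations (which must in addition preserve all $k$ forced $S$-edges) is only announced, not carried out. The paper resolves this differently and without rotations: it first proves that the set $X$ of vertices of degree at most $\frac{4}{5}\ad(G)$ has fewer than $\frac{n}{12}$ elements, and it inserts \emph{all} of $X$ into the cycle already during the constructive phase, while the cycle is still short (at most $5k+4|X|\le\frac{4}{9}\ad(G)$ vertices, so there is always room for constant-length detours). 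Afterwards every vertex outside the current cycle has degree at least $\frac{4}{5}\ad(G)$, hence at most $\frac{1}{5}\ad(G)+k$ non-neighbours, while $E(C)\setminus S$ is a linear forest with at least $\frac{1}{2}\ad(G)-k$ edges, and a short count then yields a splittable edge whenever $|V(C)|>\frac{1}{2}\ad(G)$ (when $|V(C)|\le\frac{1}{2}\ad(G)$ one instead lengthens $C$ by replacing an edge with a short detour through new vertices). To make your argument work you would need either to import this ``insert the low-degree vertices while the cycle is still short'' step into Phase~1, or to actually execute the rotation argument; as written, Phase~2 does not constitute a proof.
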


\begin{proof}
Let $G$ be a graph and  let $k$ be an integer satisfying (i)--(iii). Let $d=\ad(G)$.
Using the property that $k$ is small compared to $d$, we upper bound the number of vertices of degree at most $\frac{4}{5}d$.

\begin{claim}\label{cl:ratio-small}
Less than $\frac{1}{12}n$ vertices of $G$ have degree at most $\frac{4}{5}d$.
\end{claim}
 
\begin{proof}[Proof of Claim~\ref{cl:ratio-small}]
 Suppose that at least  $\frac{1}{12}n$ vertices of $G$ have degree at most $\frac{4}{5}d$. Then 
 \begin{equation*}
d\leq \frac{1}{n}\Big(\frac{4}{60}nd+\frac{11}{12}n(n-1)\Big)=\frac{4}{60}d+\frac{11}{12}(n-1)\leq \frac{4}{60}d+\frac{11}{12}(d+k)=\frac{59}{60}d+\frac{11}{12}k
 \end{equation*}
and, therefore, $d\leq 55k$. However, by (i), $60k\leq d$; a contradiction proving the claim. 
\end{proof} 

Denote by $X$ the set of vertices of $G$ whose degrees are at most  $\frac{4}{5}d$.
Let $S$ be a potentially cyclable set of at most $k$ pairs of distinct vertices of $G$ and let $G'=G+S$. 
We show that  $G'$ has a cycle containing the edges of $S$ and the vertices of $X$.

\begin{claim}\label{cl:cycle}
$G'$ has a cycle $C$  containing every edge of $S$ and every vertex of $X$.
\end{claim}

\begin{proof}[Proof of Claim~\ref{cl:cycle}]
Let  $S=\{x_1y_1,\ldots,x_ry_r\}$. Note that some end-vertices of the edges of $S$ may be the same. However, because $S$ forms a linear forest in $G'$, we can assume without loss of generality that it may only happen that $y_{i-1}=x_i$ for some $i\in\{2,\ldots,r\}$. 
We prove that $G'$ has an $(x_1,y_r)$-path $P$ of length at most $5r-4$.
 
The proof is by induction.  We show that for every $i\in\{1,\ldots,r\}$, $G'$ has an $(x_1,y_i)$-path $P_i$ containing $x_1y_1,\ldots,x_iy_i$ and avoiding the end-vertices of $x_{i+1}y_{i+1},\ldots,x_ry_r$ distinct from $y_i$, such that its length is at most $5i-4$. 

The claim is trivial for $i=1$ as we can set $P_1=x_1y_1$. Assume that $i>1$ and $P_{i-1}$ exists. Consider $x_iy_i$. If $y_{i-1}=x_i$, then we just add $x_iy_i$ to the end of $P_{i-1}$, i.e., set $P_{i}=P_{i-1}x_iy_i$. 
Suppose that $y_{i-1}\neq x_i$. If $y_{i-1}x_i\in E(G')$, we set $P_i=P_{i-1}y_{i-1}x_iy_i$. Similarly, if $y_{i-1}$ and $x_i$ have a common neighbor $z\notin U=\{x_1,\ldots,x_r\}\cup \{y_1,\ldots,y_r\}\cup V(P_{i-1})$, we define $P_i=P_{i-1}y_{i-1}zx_iy_i$.  Assume from now on that these are not the cases.

Recall that $\delta(G)\geq \frac{1}{2}d$ and $k\leq \frac{1}{60}d$. Let $W=U \cup X$. We have that 
$|W|\leq 5(i-1)-3+2(r-i+1)+|X|\leq 5k+|X|$ and, using Claim~\ref{cl:ratio-small}, obtain that 
\begin{equation*}
|W|\leq 5k+\frac{1}{12}n\leq 5k+\frac{1}{12}(d+k)< \frac{1}{2}d.
\end{equation*} 
Because $\dg_{G'}(y_{i-1})\geq \frac{1}{2}d$ and $\dg_{G'}(x_{i})\geq \frac{1}{2}d$, we have 
 that $y_{i-1}$ and $x_i$ have neighbors $u$ and $v$, respectively, such that $u,v\notin W$. If $uv\in E(G')$, we define $P_i=P_{i-1}y_{i-1}uvx_iy_i$. Otherwise, observe that $u,v\notin X$ and, therefore, $\dg_{G'}(u)\geq \frac{4}{5}d$ and $\dg_{G'}(v)\geq\frac{4}{5}d$. Because $n<d+k$, $u$ and $v$ have at least $\frac{3}{5}d-k$ common neighbors. Since $|U|\leq 5(i-1)-3+2(r-i+1)\leq 5k$ and $k\leq \frac{1}{60}d$, $u$ and $v$ have a common neighbor $w\notin U$. Hence, we can set $P_i=P_{i-1}y_{i-1}uwvx_iy_i$.   
 
 Observe that in all cases, we constructed $P_i$ from $P_{i-1}$ by appending to the end-vertex $y_{i-1}$ a path of length at  most 5. This means that the length of $P_i$ is at most $5i-4$. This completes the inductive step and the proof of the existence of $P$ with the desired properties.  
 
 Now we apply similar arguments to show that $P$ can be extended to include every vertex of $X$. More precisely, we prove the following. Let $X\setminus V(P)=Z=\{z_1,\ldots,z_s\}$ and let $z_0=y_r$.
 We show that there is an $(x_1,z)$-path $P'$ with $z\in\{z_0,\ldots,z_s\}$ containing $P$ as a subpath that includes every vertex of $Z$ and has length at most $5r-4+4s$.  

We prove by induction that for every $i\in\{0,\ldots,s\}$, $G'$ has an $(x_1,z)$-path $P_i$ containing $P$ as a subpath such that $z\in \{z_0,\ldots,z_i\}\subseteq V(P_i)$ and the length of $P_i$ is at most $5r-4+4i$.

For $i=0$, we set $P_0=P$ and obtain that the claim holds. Let $i\geq 1$ and assume that an $(x_1,z)$-path $P_{i-1}$ with the required properties exists. If $z_i\in V(P_{i-1})$, we take $P_i=P_{i-1}$. Assume that $z_i\notin V(P_{i-1})$. If $zz_i\in E(G')$, we set $P_i=P_{i-1}zz_i$. If $z$ and $z_i$ have a common neighbor $v\notin V(P_{i-1})$, we define $P_i=P_{i-1}zvz_i$. Assume that these are not the cases.

Let $W=V(P_{i-1})\cup X$.
Observe that 
\begin{equation}\label{eq:W}
|W|\leq |V(P)|+4|X|\leq 5k+4|X|\leq 5k+\frac{4}{12}n\leq 5k+\frac{1}{3}(d+k)\leq\frac{4}{9}d,
\end{equation}
by Claim~\ref{cl:ratio-small} and because $k\leq\frac{1}{60}d$. As $\dg_{G'}(z)\geq \frac{1}{2}d$ and $\dg_{G'}(z_i)\geq \frac{1}{2}d$, there are neighbors $u$ and $v$ of $z$ and $z_i$, respectively, such that $u,v\notin W$. 
 If $uv\in E(G')$, we let $P_i=P_{i-1}zuvz_i$.  If $u$ and $v$ are not adjacent, we use the property that $\dg_{G'}(u)\geq \frac{4}{5}d$ and $\dg_{G'}(v)\geq\frac{4}{5}d$, because $u,v\notin X$. 
Then  $u$ and $v$ have at least $\frac{3}{5}d-k$ common neighbors. Note that $|V(P_{i-1})|\leq |W|\leq\frac{4}{9}d$. Then $u$ and $v$ have at least $\frac{7}{45}d-k>0$ common neighbors that are not in $V(P_{i-1})$.  
Let $w$ be such a neighbor. Then we set $P_i=P_{i-1}zuwvz_i$.

 Since  $P_i$ is constructed from $P_{i-1}$ by appending to $z$  a path of length at  most 4,  the length of $P_i$ is at most $5i-4$. This completes the inductive step and we conclude that $P'$ exists.
 
 Now we have that $G'$ has an $(x_1,z)$-path $P'$ of total length length at most $5r-4+4s$ that contains every edge of $S$ and every vertex of $X$. To complete the proof, we show that we can connect the end-vertices of $P'$ to form a cycle. This is trivial if $x_iz\in E(G')$ or if $x_1$ and $z$ have a common neighbor $v\notin V(P')$. Suppose that these are not the cases.  Note that $|V(P')|\leq 5k+ 4|X|$ and, by the same arguments as in (\ref{eq:W}), 
 $|V(P')|\leq\frac{4}{9}d$. This means that $x_1$ and $z$ have neighbors $u$ and $v$, respectively, such that $u,v\notin V(P')$, because $\dg_{G'}(x_1)\geq\frac{1}{2}d$ and $\dg_{G'}(z)\geq\frac{1}{2}d$. If $uv\in E(G')$, we connect the end-vertices of $P'$ by the path $x_1uvz$. Otherwise, we again use the fact that $u,v\notin X$ and, therefore,  $\dg_{G'}(u)\geq \frac{4}{5}d$ and $\dg_{G'}(v)\geq\frac{4}{5}d$. In the same way as above,  
 $u$ and $v$ have at least $\frac{3}{5}d-k$ common neighbors and at least one common neighbor $w\notin V(P')$. Then $P'$ is completed to a cycle by adding the path $x_1uwvz$. This completes the proof.
\end{proof}

By Claim~\ref{cl:cycle},  $G'$ has a cycle $C$  containing every edge of $S$ and every vertex of $X$. Suppose that $C$ is a cycle of this type that has maximum length. We prove that $C$ is Hamiltonian. 

The proof is by contradiction. Assume that $C$ is not Hamiltonian. We consider two cases depending on the length of $C$.  

\medskip
\noindent
{\bf Case~1.} $|V(C)|\leq \frac{1}{2}d$. Consider an arbitrary edge $xy\in E(C)\setminus S$. Note that such an edge exists because $S$ forms a linear forest. We show that we always can extend $C$ by replacing $xy$ by a path. If $x$ and $y$ have a common neighbor $z\notin V(C)$, then we can replace $xy$ by $xzy$. Otherwise, because $\dg_{G'}(x)\geq \frac{1}{2}d$ and $\dg_{G'}(y)\geq \frac{1}{2}d$, $x$ and $y$ have neighbors $u$ and $v$, respectively, such that $u,v\notin V(C)$. If $uv\in E(G)$, then we replace $xy$ by $xuvy$  and extend $C$. If $uv\notin E(G')$,  then we use the fact that $X\subseteq V(C)$ and, therefore, $u,v\notin X$. Then $\dg_{G'}(u)\geq\frac{4}{5}d$ and $\dg_{G'}(v)\geq \frac{4}{5}d$.  Because $|V(G')|<d+k$, $u$ and $v$ have at least $\frac{3}{5}d-k$ common neighbors. Since $|V(C)|\leq \frac{1}{2}d$ and $k\leq\frac{1}{60}d$,  there is a common neighbor $w$ of $u$ and $v$ such that $w\notin V(C)$. Then we replace $xy$ by $xuwvy$ and again extend $C$. Note that the extended cycle contains the edges of $S$ and the vertices of $X$. However, this contradicts the choice of $C$ as a maximum length cycle with this property. 

\medskip
\noindent
{\bf Case~2.} $|V(C)|> \frac{1}{2}d$.  Since $C$ is not Hamiltonian, there is a vertex $v\notin V(C)$. We show that there is an edge $xy\in E(C)\setminus S$ such that both $x$ and $y$ are adjacent to $v$. Suppose that this is not the case and for every $xy\in E(C)\setminus S$, $v$ is not adjacent to at least one end-vertex.  Consider $R=E(C)\setminus S$. Since $1\leq |S|\leq k$,  the edges of $R$ form a linear forest with at least $\frac{1}{2}d-k$ edges.
Each vertex in $V(C)$ covers at most two edges in $R$.
Then our assumption that $v$ is not adjacent to at least one end-vertex of every edge of $R$ implies that $v$ is not adjacent to at least $\frac{1}{2}\big(\frac{1}{2}d-k\big)> \frac{1}{4}d-k$
vertices of $C$. Because $X\subseteq V(C)$, $v\notin X$ and $\dg_{G'}(v)\geq\frac{4}{5}d$. As  $|V(G')|<d+k$, $v$ can have at most $\frac{1}{5}d+k$ nonneighbors. However, $\frac{1}{4}d-k>\frac{1}{5}d+k$, as $k\leq\frac{1}{60}d$; a contradiction. This proves the existence of $xy\in E(C)\setminus S$ such that both and $x$ and $y$ are adjacent to $v$. But then we can extend $C$ by replacing $xy$ by $xvy$ contradicting the choice of $C$. This conclude the case analysis and the proof of the lemma.

\medskip
Let us remark that the proof is, in fact, constructive and can be turned to a polynomial-time procedure that first constructs a cycle $C$ containing every edge of $S$ and every vertex of $X$, and then extends $C$ until we obtain a Hamiltonian cycle.
 \end{proof}

Now we consider dense bipartite graphs. Similarly to Lemma~\ref{lem:dense}, we show that for a given set of pairs of vertices forming a linear forest there is a cycle containing all these pairs in the extended graph, and also each vertex of the ``high-degree'' part of the graph. For an example, see Figure~\ref{fig:L3}.

\begin{figure}[ht]
\centering
\scalebox{0.7}{
\input{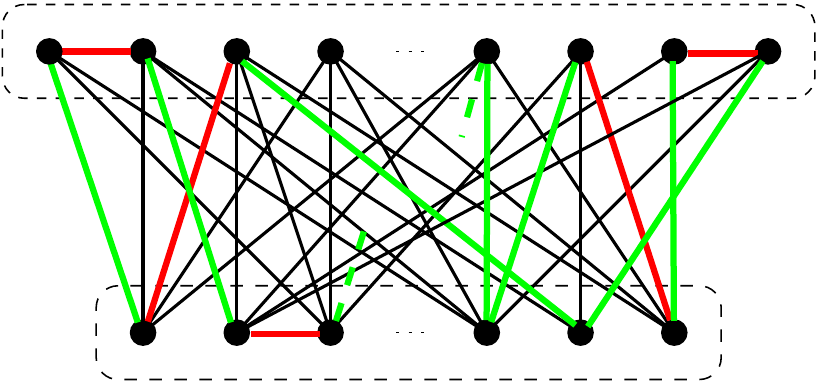_t}}
\caption{Structure of $G$ and $G'=G+S$. The set of pairs $S$ is shown by red lines and the edges of $C$ that are not in $S$ are green. Note that $G'$ is not required to be bipartite.}
\label{fig:L3}
\end{figure}

\begin{lemma}\label{lem:dense-bip}
Let $G$ be a bipartite graph, $\{A,B\}$ is a bipartition of $V(G)$ with $p=|A|$, and let $k$ be an integer such that  (i) $0< k \le \frac{1}{10}p$, (ii) for every $v\in A$, $\dg_G(v)\geq 2p$, and (iii) for every $v\in B$, $\dg_G(v)\geq p-k$. Let $S$ be a potentially cyclable set of at most $\frac{9}{4}k$ pairs of distinct vertices. Then $G'=G+S$ has a cycle $C$ containing every edge of $S$ and every vertex of $A$. Furthermore, $C$ is a  longest cycle in $G'$ containing the edges of $S$ and the length of $C$ is $2p-s+t$, where $s$ is the number of edges of $S$ with both end-vertices in $A$ and $t$ is the number of edges in $S$ with both end-vertices in $B$. 
 \end{lemma}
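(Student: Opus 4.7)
The plan is to mimic the three-phase construction of the proof of Lemma~\ref{lem:dense} (first thread all edges of $S$ into a path, then close it into a cycle, then insert every remaining vertex of $A$), while being careful that every non-$S$ edge used in the cycle runs between $A$ and $B$. A degree count on the $A$-side then forces the length: if a cycle $C$ of $G'$ contains every vertex of $A$ and every edge of $S$ and has all its other edges in the $A$-$B$ crossing, and $a,b,c$ denote the numbers of $A$-$A$, $B$-$B$, and $A$-$B$ edges of $C$, then $2a+c=2p$, $a=s$, $b=t$, so $|E(C)|=a+b+c=2p-s+t$. The same inequalities, applied without assuming $A\subseteq V(C)$, also give the upper bound $|E(C)|\le 2p-s+t$ for any cycle containing $S$, so maximality falls out of the same accounting.

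The backbone is a pair of short-connection claims coming from (ii) and (iii). Any two $b,b'\in B$ share at least $p-2k$ common neighbours in $A$ since each misses at most $k$ vertices of $A$, so a common neighbour still exists after forbidding any $O(k)$-sized set of already used vertices. Any two $a,a'\in A$ can be joined in $G$ by a length-$4$ path $a\,b_1\,a''\,b_2\,a'$: choose $b_i\in N(a_i)\cap B$ (size $\ge 2p$), then $a''$ in the common $A$-neighbourhood of $b_1,b_2$ (size $\ge p-2k$), again dodging any $O(k)$ forbidden vertices. Length-$1$ and length-$3$ variants handle mixed pairs analogously. The length-$4$ connector is what handles the otherwise problematic regime where $|B|$ is much larger than $p$, in which a common $B$-neighbour of two $A$-vertices need not exist at all.

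With these in hand, I would enumerate the pairs of $S$ as $x_1y_1,\dots,x_ry_r$ so that consecutive pairs belonging to the same component of $S$ appear consecutively (exactly as in Lemma~\ref{lem:dense}), and inductively build a path $P_i$ in $G'$ from $x_1$ to $y_i$ containing $x_1y_1,\dots,x_iy_i$ whose non-$S$ edges all cross $A$ and $B$. Each transition from $y_{i-1}$ to $x_i$ is realised by one of the short connectors above, picked to avoid the at most $O(k)$ vertices already used. A final connector closes $P_r$ into a cycle $C_0$. In the third phase, for each remaining $v\in A\setminus V(C_0)$, I locate an edge $ab$ on the current cycle with $a\in A$, $b\in B$ and $v\sim b$ (possible since $v$ has $\ge 2p$ neighbours in $B$ while the cycle uses at most $p+O(k)$ of them), then pick a free $b'\in N(a)\cap N(v)\cap B$ (at least $2p-O(k)$ candidates, because $a$ has $\ge 2p$ neighbours in $B$ of which $v$ misses at most $k$), and replace $ab$ by the three-edge sub-path $a\,b'\,v\,b$. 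Each such insertion adds exactly one $A$-vertex and one $B$-vertex, so after every vertex of $A$ has been inserted the cycle has the required length $2p-s+t$.

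The main obstacle I anticipate is the bookkeeping that ties the three phases together: one must verify that connectors and insertions never introduce an unintended $A$-$A$ or $B$-$B$ edge, that every connector avoids the $O(k)$ vertices used by previous connectors and by $S$, and that the total number of $B$-vertices occupied by the cycle stays bounded by $p+O(k)$ throughout so that the short-connection claims keep applying. All three points are ultimately controlled by the hypothesis $|S|\le \tfrac{9}{4}k$ together with $k\le\tfrac{1}{10}p$, which keep the number of ``forbidden'' vertices arising during the entire process well below the $p-2k$ and $2p$ thresholds used in the short-connection claims.
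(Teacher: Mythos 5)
Your Phases 1 and 2 (threading the edges of $S$ into a path via short bipartition-respecting connectors and closing it into a cycle) match the paper's first claim for this lemma, and your edge-count at the end correctly yields both the value $2p-s+t$ and its optimality among cycles containing $S$. The gap is in Phase 3, the insertion of the remaining vertices of $A$. Your gadget replaces a cycle edge $ab$ ($a\in A$, $b\in B$) by $a\,b'\,v\,b$ with $b'\in N(a)\cap N(v)\cap B$, and you justify the existence of $b'$ by saying that ``$a$ has $\geq 2p$ neighbours in $B$ of which $v$ misses at most $k$.'' That misreads hypothesis (iii): it bounds the number of $A$-vertices missed by each $B$-vertex, not the number of $B$-vertices missed by an $A$-vertex. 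In the regime where the lemma is applied one has $|B|>12p$, and it is consistent with (i)--(iii) that $N(a)\cap B$ and $N(v)\cap B$ are disjoint, so $b'$ need not exist. (You flagged exactly this phenomenon when designing your length-$4$ connector in Phase 1, but the Phase 3 gadget reintroduces it.) The preceding step is also unsupported: the observation that $v$ has $\geq 2p$ neighbours in $B$ while the cycle uses at most $p+\Oh(k)$ of them shows that $v$ has neighbours \emph{off} the cycle, not that some cycle edge $ab$ satisfies $v\sim b$.

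The repair is to route the insertion through common $A$-neighbours of two $B$-vertices rather than common $B$-neighbours of two $A$-vertices, which is what the paper does. Take $C$ to be a longest cycle of $G'$ containing $S$ (an extremal choice that also removes your bookkeeping worries) and suppose some $v\in A$ is missing. Then $v$ has two neighbours $b_1,b_2\in B\setminus V(C)$, and one looks for a length-two segment $xzy$ of $C$ with $x,y\in A$, $z\in B$, $xz,yz\notin S$, and $xb_1,yb_2\in E(G)$ (or with $b_1,b_2$ swapped); replacing $xzy$ by $xb_1vb_2y$ lengthens $C$, a contradiction. Finding such a segment needs a pigeonhole argument over more than $2k$ pairwise disjoint segments of this form, using that each of $b_1,b_2$ misses at most $k$ vertices of $A$; this in turn requires $|A\cap V(C)|\geq p-2k$, so the argument splits into the cases $|A\setminus V(C)|>2k$ (handled by a simpler detour $xy\to xuvy$ through a free $u\in B$ and a free common neighbour $v\in A$ of $u$ and $y$) and $|A\setminus V(C)|\leq 2k$. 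Without some version of this two-case analysis your Phase 3 does not go through.
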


\begin{proof}
The proof of the lemma follows the same strategy as the proof of Lemma~\ref{lem:dense}. Suppose that $G$, $k$, and $S$ satisfy the conditions of the lemma. Let also $G'=G+S$. Denote by $s$ the number of edges of $S$ with both end-vertices in $A$, and let $t$ be the number of edges in $S$ with both end-vertices in $B$.

\begin{claim}\label{cl:cycle-bip}
$G'$ has a cycle $C$  containing every edge of $S$. 
\end{claim}

\begin{proof}[Proof of Claim~\ref{cl:cycle-bip}]
Let  $S=\{x_1y_1,\ldots,x_ry_r\}$. We can assume without loss of generality that it may only happen that $y_{i-1}=x_i$ for some values $i\in\{2,\ldots,r\}$ and other end-vertices of the edges of $S$ are distinct. 
We prove  that $G'$ has an $(x_1,y_r)$-path $P$ of length at most $5r-4$ containing every edge of $S$.
 
We show inductively that for every $i\in\{1,\ldots,r\}$, $G'$ has an $(x_1,y_i)$-path $P_i$ containing $x_1y_1,\ldots,x_iy_i$ and avoiding the end-vertices of $x_{i+1}y_{i+1},\ldots,x_ry_r$ distinct from $y_{i}$ (it may happen that $x_{i+1}=y_i$) whose length is at most $5i-4$. 

If $i=1$, then we  set $P_1=x_1y_1$ and the claim holds. Assume that $i>1$ and $P_{i-1}$ exists. Consider $x_iy_i$. If $y_{i-1}=x_i$, then we just add $x_iy_i$ to the end of $P_{i-1}$, i.e., set $P_{i}=P_{i-1}x_iy_i$. 
Suppose that $y_{i-1}\neq x_i$. If $y_{i-1}x_i\in E(G')$, we set $P_i=P_{i-1}y_{i-1}x_iy_i$. 
Assume from now on that these are not the cases. Let $U=V(P_{i-1})\cup \{x_1,\ldots,x_r\}\cup\{y_1,\ldots,y_r\}$. Denote $U_A=U\cap A$ and $U_B=U\cap B$.
Observe that $|U_A|\leq \lceil\frac{1}{2}(5r-4)+s\rceil\leq \frac{63}{8}k<8k$. Symmetrically, $|U_B|<8k$.
We consider the following four cases depending on whether $y_{i-1}$ and $x_i$ belong to  $A$ or $B$. 

\medskip
\noindent 
{\bf Case~1.} $y_{i-1},x_i\in B$. Becase $\dg_G(y_{i-1})\geq p-k$ and $\dg_G(x_{i})\geq p-k$, 
$y_{i-1}$ and $x_i$ have at least $p-2k$ common neighbors in $A$.  Because $|U_A|<8k$ and $p\geq 10k$, we obtain that $y_{i-1}$ and $x_i$ have a common neighbor $v\notin U_A$.   Then we 
construct $P_i=P_{i-1}y_{i-1}vx_iy_i$.

\medskip
\noindent 
{\bf Case~2.} $y_{i-1}\in A$ and $x_i\in B$. Because $\dg_{G}(y_{i-1})\geq 2p$, $|U_B|<8k$ and $p\geq 10k$, $y_{i-1}$ has a neighbor $u\in B$ such that $u\notin U_B$. Then applying for $u$ and $x_i$ the same arguments as in Case~1, we obtain that $u$ and $x_i$ have a common neighbor $v\in A$ such that $v\notin U_A$. Then we set $P_i=P_{i-1}y_{i-1}uvx_iy_i$.

\medskip
\noindent 
{\bf Case~3.} $y_{i-1}\in B$ and $x_i\in A$. This case is symmetric to Case~2. Using the same arguments we obtain that $x_i$ has a neighbor $v\in B\setminus U_B$, and $y_{i-1}$ and $v$ have a common neighbor $u\in A\setminus U_A$. Then  $P_i=P_{i-1}y_{i-1}uvx_iy_i$.

\medskip
\noindent 
{\bf Case~4.} $y_{i-1}, x_i\in A$. Because $\dg_{G}(y_{i-1})\geq 2p$, $\dg_{G}(x_{i})\geq 2p$, $|U_B|<8k$ and $p\geq 10k$, $y_{i-1}$ and $x_i$ have neighbors in $B\setminus U_B$. If these vertices have a common neighbor $v$ of this type, then we set $P_i=P_{i-1}y_{i-1}vx_iy_i$. Otherwise, let $u$ and $v$ be neighbors of $y_{i-1}$ and $x_i$, respectively, in  $B\setminus U_B$. Using the arguments from Case~1, we have that $u$ and $v$ have a common neighbor $w\in A\setminus U_A$. Then we define $P_i=P_{i-1}y_{i-1}uwvx_iy_i$.

\medskip
In all cases, $P_i$ was constructed from $P_{i-1}$ by extending it by a path of length at most 5. This competes the inductive step and proves the existence of $P$. 

To complete the proof, we show that the end-vertices of $P$ can be connected by a path $Q$ to form a cycle. This is trivial if $x_1y_{r}\in E(G')$. Otherwise, we construct $Q$ using the same arguments as in above Cases~1--4. Let $U_A=V(P)\cap A$ and $U_B=V(P)$. Because $|V(P)|\leq 5r-3$, we have that $|U_A|<8k$ and $|U_B|<8k$. If $x_1,y_r\in B$, we find a common neighbor $v\in A\setminus U_A$ in the same way as in Case~1 and define $Q=x_1vy_{r}$. If $x_1\in A$ and $y_r\in B$, we find a neighbor $u$ of $x_1$ in $B\setminus U_B$ and then a common neighbor $v$ of $u$ and $y_r$ in $A\setminus U_A$  following the arguments from Case~2. Then $Q=x_1uvy_r$. Then case $x_1\in B$ and $y_r\in A$ is symmetric. Finally, if $x_1,y_r\in B$, we use the same arguments as in Case~4. We either find a common neighbor $v\in B\setminus U_B$ of $x_1$ and $y_r$ and define $Q=x_1vy_r$ or we find two distinct neighbors $u$ and $v$ of $x_1$ and $y_r$, respectively, where $u,v\in B\setminus U_B$. In the last case, we find a common neighbor $w$ of $u$ and $v$ in $A\setminus U_A$, and set $Q=x_1uwvy_r$. This completes the proof. 
\end{proof}

By Claim~\ref{cl:cycle-bip},  $G'$ has a cycle $C$  containing every edge of $S$. Let $C$ be a cycle in $G'$ containing the edges of $S$ that has maximum length. We show that $C$ contains every vertex of $A$. 

The proof is by contradiction. Assume that $A \setminus V(C)\neq\emptyset$. Because $|S|\leq\frac{9}{4}k$, $|V(C)\cap B|\leq p+\frac{9}{4}k$. Then because $k \leq \frac{1}{10}p$ and $\dg_G(v)\geq 2p$ for every $v\in A$, 
$W=B\setminus V(C)\neq\emptyset$ and, moreover,
every vertex $v\in A$ has a neighbor $u\in W$. In fact, every vertex $v\in A$ has at least two distinct neighbors $u\in W$.
We consider two cases depending on the number of vertices of $A$ outside $C$.  

\medskip
\noindent
{\bf Case~1.} $|A\setminus V(C)|>2k$.  Let $xy\in E(C)\setminus S$. We assume without loss of generality that $x\in A$ and $y\in B$. 
 We show that $C$ can be extended by replacing $xy$ by a path. We have that $x$ has a neighbor $u\in W$. 
 Because $\dg_G(u)\geq p-k$ and $\dg_G(y)\geq p-k$, $u$ and $y$ have at least $p-2k$ common neighbors in $A$. Since $|A\setminus V(C)|>2k$, $u$ and $y$ have a common neighbor $v\in A\setminus V(C)$. 
This means that we can replace $xy$ by $xuvy$ and extend $C$. 

\medskip
\noindent
{\bf Case~2.} $|A\setminus V(C)|\leq 2k$. Denote by $R$ the set of pairs $\{x,y\}$ of distinct vertices of $V(C)\cap A$ such that $C$ contains a segment $xvy$ for some $v\in B$ and $xv,yv\notin S$. Note that because
$|A\cap V(C)|\geq p-2k$, $|R|\geq p-2k-|S|\geq \frac{23}{4}k$. Observe also that the pairs of $R$ form a linear forest. Then there is a subset $R'\subseteq R$ of disjoint pairs with $|R'|\geq\frac{1}{2}|R|\geq \frac{23}{8}k>2k$.  

Let $u\in A\setminus V(C)$. Recall that $u$ has two distinct neighbors $v,w\in W$. We claim that there is a pair $\{x,y\}\in R'$ such that $xv,yw\in E(G)$ or $xw,yv\in E(G)$. Because $\dg_G(v)\geq p-k$, by the pigeonhole principle, there are at most $k$ pairs $\{x,y\}\in R'$ such that $xu\notin E(G)$ or $yu\notin E(G)$. Thus, there is $R''\subseteq R'$ of size at least $|R'|-k>k$ such that $xu,yu\in E(G)$ for every $\{x,y\}\in R''$. Since 
$\dg_G(w)\geq p-k$, there are at most $\frac{1}{2}k$ pairs $\{x,y\}\in R''$ such that $xw,yw\notin E(G)$. As $|R''|>k$, we conclude that there is a pair $\{x,y\}\in R''$ such that $xw\in E(G)$ or $yw\in E(G)$. Thus, 
 $xv,yw\in E(G)$ or $xw,yv\in E(G)$. Let $xzy$ be the segment of $C$. 
If   $xv,yw\in E(G)$, we replace $xzy$ by $xvuwy$, and $xzy$ is replaced by $xwuvy$ if $xw,yv\in E(G)$. In both cases, we extend $C$ contradicting its choice. This completes the proof of the first claim of the lemma.

\medskip
To see that $C$ is a longest cycle containing every edge of $S$, it is sufficient to recall that $A\subseteq V(C)$. Then $C$ contains $2(p-s)$ edges $xy$ with $x\in A$ and $y\in B$. Hence, the total number of edges is $2p-s+t$.

We remark that the proof of the lemma can be used to construct the required cycle $C$ in polynomial time. 
\end{proof}

\section{Rerouting Long Cycles to Dense Subgraphs}\label{sec:ext}
In this section, we show that a dense induced subgraph can be used to find a long cycle in a 2-connected graph. Specifically, we show that one can always assume that a long cycle is an extension of a longest cycle in a dense subgraph. To state this more precisely, we need some additional terminology that we introduce next. 

Let $T\subseteq V(G)$ for a graph $G$. A path $P$  is called a \emph{$T$-segment} if $P$ has length at least two, the end-vertices of $P$ lie in $T$, and $v\notin T$ for any internal vertex $v$ of $P$.   A set of internally disjoint paths $\mathcal{P}=\{P_1,\ldots,P_r\}$ is a \emph{system of $T$-segments} if (i) $P_i$ is a $T$-segment for every $i\in\{1,\ldots,r\}$, and (ii) the union of the paths in $\mathcal{P}$ is a linear forest. 
 Let $A,B\subseteq V(G)$ be disjoint sets of vertices in $G$. 
 For a pair $\{x,y\}$ of distinct vertices in $G$, we say that $\{x,y\}$ is an \emph{$A$-pair} (\emph{$B$-pair}, respectively) if $x,y\in A$ ($x,y\in B$, respectively), and we say that $\{x,y\}$ is an \emph{$(A,B)$-pair} if either $x\in A$, $y\in B$ or, symmetrically,  $y\in A$, $x\in B$. 
 If $\{A,B\}$ is a partition of $T\subseteq V(G)$, then for a $T$-segment $P$ with end-vertices $x$ and $y$,  
 $P$ is an \emph{$A$-segment} if $\{x,y\}$ is an $A$-pair, $P$ is a \emph{$B$-segment} if $x,y\in B$, and $P$ is an \emph{$(A,B)$-segment} if $\{x,y\}$ is an $\{A,B\}$-pair.

First, we consider the case when there is a dense subgraph $H$ with the property that for every  potentially cyclable  set  $S$  of at most $k$ pairs of distinct vertices, $H+S$ has a Hamiltonian cycle containing every edge of $S$. We show the following lemma whose proof is almost identical to the proof of Lemma~3 in~\cite{FominGLPSZ20}. Nevertheless, we provide the proof here, as we are proving a slightly different statement, and the proof is useful as a warm-up before the proof of the next more technical lemma.

\begin{lemma}\label{lem:ext-dense}
Let $G$ be a $2$-connected graph and let $k$ be a positive integer. Suppose that $H$ is an induced subgraph of $G$ such that $|V(H)|\geq 2k$ and for every  potentially cyclable  set  $S$  of at most $k$ pairs of distinct vertices of $H$, $H+S$ has a Hamiltonian cycle containing every edge of $S$. Then $G$ has a cycle of length at least $|V(H)|+k$ if and only if one of the following holds:
\begin{itemize}
\item[(i)] There are two distinct vertices $s,t\in V(H)$ such there is an $(s,t)$-path $P$ in $G$ of length at least $k+1$ whose internal vertices lie in $V(G)\setminus V(H)$.
\item[(ii)] There is a system of $T$-segments $\mathcal{P}=\{P_1,\ldots,P_r\}$ for $T=V(H)$ such that $r\leq k$ and the total number of vertices on the paths in $\mathcal{P}$ outside $T$ is at least $k$ and at most $2k-2$.
\end{itemize}
\end{lemma}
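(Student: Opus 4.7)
My plan is to handle the two directions of the equivalence separately. The forward direction is a direct substitution argument using the Hamiltonicity hypothesis on $H+S$, while the reverse direction is done by a case analysis on $|V(C) \cap V(H)|$, with the main obstacle being a linear-forest condition that appears in the large-intersection case.

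For the forward direction, if case (i) holds I will set $S = \{\{s,t\}\}$, a potentially cyclable set of one pair; applying the hypothesis gives a Hamiltonian cycle $C'$ of $H+S$ containing the edge $st$, and replacing $st$ by the path $P$ produces a cycle in $G$ of length $|V(H)| + (|E(P)|-1) \geq |V(H)|+k$. If case (ii) holds, I will let $S = \{\{s_i,t_i\} : i=1,\ldots,r\}$ be the set of endpoint pairs of the $P_i$'s. Since the union of $\mathcal{P}$ is a linear forest, $S$ is potentially cyclable with $|S|\leq k$, so the hypothesis yields a Hamiltonian cycle $C'$ of $H+S$ containing every edge of $S$; replacing each $\{s_i,t_i\}$ by $P_i$ (possible because the $P_i$'s are pairwise internally disjoint with internal vertices outside $V(H)$) produces a cycle in $G$ of length $|V(H)|$ plus the total internal-vertex count of $\mathcal{P}$, hence at least $|V(H)|+k$.

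For the reverse direction, let $C$ be a cycle of length at least $|V(H)|+k$, set $T = V(H)$ and $s = |V(C) \cap T|$; note $|V(C) \setminus T| \geq k$ since $s \leq |V(H)|$. If $s \leq 1$, I will use $2$-connectivity of $G$ to verify (i). When $s = 0$, Menger's theorem applied to the disjoint sets $T$ and $V(C)$ gives two internally disjoint paths $Q_1, Q_2$ from distinct vertices of $T$ to distinct vertices of $V(C)$; taking them shortest ensures that their internal vertices avoid $T \cup V(C)$, and concatenating $Q_1$, the longer $V(C)$-arc between the two endpoints on $V(C)$, and the reverse of $Q_2$ yields a $T$-to-$T$ path of length at least $2+\lceil |V(C)|/2\rceil \geq k+1$ with internal vertices outside $V(H)$. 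When $s=1$ and $V(C) \cap T = \{v\}$, the graph $G-v$ is connected, so a shortest path $Q$ in $G-v$ from $V(C)\setminus\{v\}$ to $T \setminus \{v\}$ has internal vertices outside $T \cup V(C)$; concatenating $Q$ with the longer $C$-arc from the $V(C)$-endpoint of $Q$ to $v$ gives a $T$-to-$T$ path of length at least $1 + \lceil |V(C)|/2\rceil \geq k+1$.

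The case $s \geq 2$ is where the main work lies. I will list the arcs $A_1, \ldots, A_s$ of $C$ between consecutive $T$-vertices on $C$ and let $\mathcal{P}$ be the sub-collection of arcs of length $\geq 2$; each such arc is a $T$-segment, and $\mathcal{P}$'s total internal-vertex count equals $|V(C)| - s \geq k$. If some $A_i \in \mathcal{P}$ already has at least $k$ internal vertices, it alone verifies (i). Otherwise every arc in $\mathcal{P}$ has at most $k-1$ internal vertices, and I will add arcs of $\mathcal{P}$ greedily one by one, stopping the first time the running total reaches $k$; the selected sub-collection $\mathcal{P}'$ has size at most $k$ (each arc contributes $\geq 1$) and total internal-vertex count at most $(k-1)+(k-1) = 2k-2$. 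The main obstacle is then checking that $\mathcal{P}'$ is a \emph{system} of $T$-segments, i.e., that its union is a linear forest. Any proper subset of $\{A_1,\ldots,A_s\}$ is a linear forest, so it suffices to show $\mathcal{P}' \subsetneq \{A_1,\ldots,A_s\}$. If at least one arc of $C$ is a direct edge between consecutive $T$-vertices (hence not in $\mathcal{P}$) this is immediate; otherwise $\mathcal{P} = \{A_1,\ldots,A_s\}$, and if the greedy process exhausted $\mathcal{P}$ then $|V(C)| - s \leq 2k-2$, forcing $|V(H)| \leq s+k-2 \leq 2k-2$ (because $s = |\mathcal{P}'| \leq k$), contradicting the assumption $|V(H)| \geq 2k$. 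This contradiction gives $\mathcal{P}' \subsetneq \{A_1,\ldots,A_s\}$ and completes the verification of (ii).
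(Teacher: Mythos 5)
Your proposal is correct and follows essentially the same route as the paper: the forward direction by substituting the segments for the corresponding edges of the Hamiltonian cycle of $H+S$, and the reverse direction by a case analysis on $|V(C)\cap V(H)|$, using $2$-connectivity in the small-intersection cases and, in the main case, a minimal/greedy selection of arcs whose linear-forest property is secured by the same cardinality contradiction ($|V(C)|\le 3k-2<|V(H)|+k$) that the paper uses to rule out the selected pairs forming a cycle. No gaps.
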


\begin{proof}
Let  $T=V(H)$. 
We start with the easier part, where we show that if either (i) or (ii) is fulfilled, then $G$ has a cycle of length at least $|V(H)|+k$.  

Suppose that there are distinct $s,t\in T$ and an $(s,t)$-path $P$ in $G$ with all internal vertices outside $T$ such that the length of $P$ is at least $k+1$.  Let $S=\{st\}$. We have that $H+S$ has a Hamiltonian cycle $C$ containing $st$.  We replace the edge $st$ in $C$ by the path $P$. Then the length of the obtained cycle $C'$ is at least $|V(H)|+k$ as required. 

Suppose that $G$ has a system of  $T$-segments $\mathcal{P}=\{P_1,\ldots,P_r\}$  and the total number of vertices on the paths outside $T$ is at least $k$. 
Let $s_i$ and $t_i$ be the end-vertices of $P_i$ for $i\in\{1,\ldots,r\}$ and define $S=\{s_1t_1,\ldots,s_rt_r\}$. Observe that $S$ is a potentially cyclable set for $H$ and $|S|\leq k$. Then $H+S$ has a Hamiltonian cycle $C$ that contains every edge of $S$. We construct the cycle $C'$ from $C$ by replacing $s_it_i$ by the path $P_i$ for every $i\in\{1,\ldots,r\}$. Because the total number of vertices in the paths of $\mathcal{P}$ outside $T$ is at least $k$, the length of $C'$ is at least $|V(H)|+k$.

\medskip
To show the implication in the other direction, assume that $G$ has a cycle $C$ of length at least $|V(H)|+k$. We consider the following three cases depending on the structure of $C$.

\medskip
\noindent 
{\bf Case~1.} $V(C)\cap T=\emptyset$.
Since $G$ is a 2-connected graph, there are pairwise distinct vertices $s,t\in T$ and $x,y\in V(C)$, and vertex-disjoint  $(s,x)$ and $(y,t)$-paths $P_1$ and $P_2$ such that the internal vertices of the paths are outside $T\cup V(C)$. The cycle $C$ has length at least $|V(H)|+k\geq 3k$. Therefore, $C$
contains an $(x,y)$-path $P$ with at least $k$ vertices. The concatenation of $P_1$, $P$ and $P_2$ is an $(s,t)$-path in $G$ of length at least  $k+1$ whose internal verices  are outside $T$. Hence, (i) holds.

\medskip
\noindent 
{\bf Case~2.} $|V(C)\cap T|=1$.
Let $V(C)\cap T=\{s\}$ for some vertex $s$. Since $G$ is 2-connected, there is an $(x,t)$-path $P$ in $G-s$ such that $x\in V(C)$, $t\in T$ and the internal vertices of $P$ are outside $T\cup V(C)$.
Because the length of $C$ is at least $3k$,  $C$ contains an $(s,x)$-path $P'$ with at least $k+1$ vertices. The concatenation of $P'$ and $P$ is an $(s,t)$-path in $G$ of length at least  $k+1$ whose internal verices  are outside $T$. Hence, (i) holds.

\medskip
\noindent
{\bf Case~3.}  $|V(C)\cap T|\geq 2$. Since $k>0$ and $|V(C)|\geq |V(H)|+k$,  $V(C)\setminus T\neq \emptyset$. 
Then there are pairs of distinct vertices $\{s_1,t_1\}\ldots,\{s_\ell,t_\ell\}$ in $T\cap V(C)$ and paths $P_1,\ldots,P_\ell$ on $C$ such that
(a) $P_i$ is an $(s_i,t_i)$-path for $i\in\{1,\ldots,\ell\}$ with at least one internal vertex and the internal vertices of $P_i$ are outside $T$,  and 
(b) $\bigcup_{i=1}^\ell V(P_i)\setminus T=V(C)\setminus T$. In words, $P_1,\ldots,P_\ell$ form the ``outside'' part of $C$ with respect to $T$. Note that the total number of internal vertices on these paths is at least $k$. 

If there is  $i\in\{1,\ldots,\ell\}$ such that $P_i$ has length at least $k+1$, then  (i) is fulfilled. Assume that this is not the case and the length of each $P_i$ is at most $k$. 
Let $r\in\{1,\ldots,\ell\}$ be the minimum integer such that the total number of internal vertices in $P_1,\ldots,P_r$ is at least $k$. Because each path has at least one internal vertex, $r\leq k$. 
Let $S=\{s_1t_1,\ldots,s_rt_r\}$. By the definition of $S$, these pairs of vertices compose either a linear forest or a cycle. 

Suppose that the pairs in $S$ form a cycle. Then every edge of $C$ is outside $H$, and we have that $r=\ell$ and $C$ is the concatenation of $P_1,\ldots,P_r$. Observe that $r\geq 2$ in this case.  
By the choice of $r$, the total number of internal vertices in $P_1,\ldots,P_{r-1}$ is at most $k-1$. We also have that $P_r$ has at most $k-1$ internal vertices. Because $r\leq k$, $|V(C)|\leq 3k-2$. However, this is a contradiction with $|V(C)|\geq |V(H)|+k\geq 3k$. Therefore, $S$ forms a linear forest. This means that $\mathcal{P}=\{P_1,\dots,P_r\}$ is a system of   $T$-segments for $T=V(H)$ and it holds that $r\leq k$, and the total number of vertices on the paths in $\mathcal{P}$ outside $T$ is at least $k$. To show that (ii) is fulfilled, it remains to prove that the total number of internal vertices on the paths in $\mathcal{P}$ is at most $2k-2$. For this, recall that by the choice of $r$, the total number of internal vertices on $P_1,\ldots,P_{r-1}$ is at most $k-1$. Since the number of internal vertices on $P_r$ is at most $k-1$, the total number of the internal vertices on all paths is at most $2k-2$ as required. This completes the proof.
\end{proof}

Now we show a related result for dense induced subgraphs of another type.
See Figure~\ref{fig:L5} for an illustration.

\begin{figure}[ht]
\centering
\scalebox{0.7}{
\input{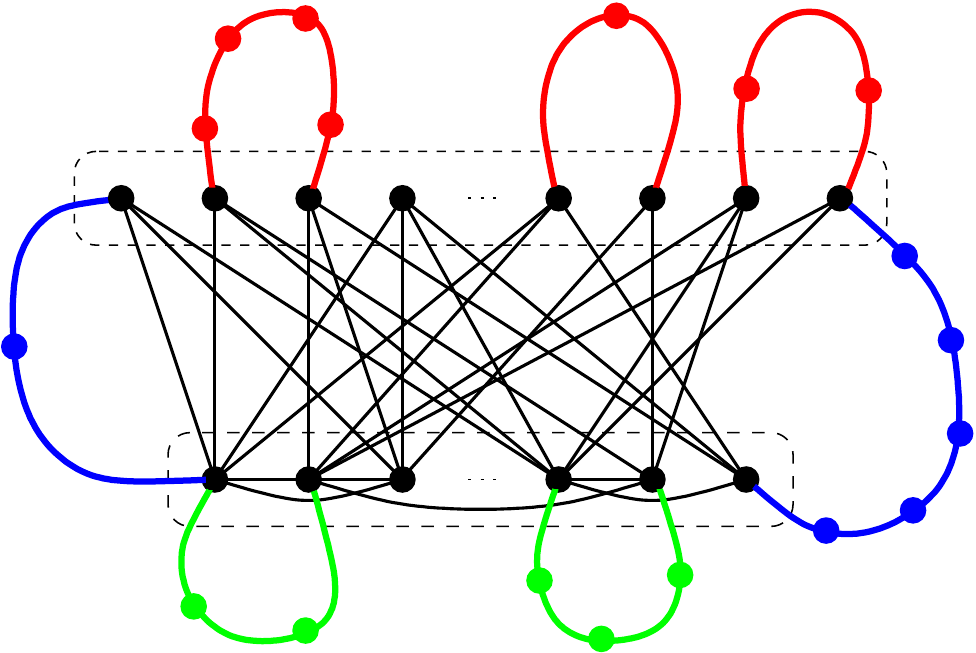_t}}
\caption{Structure of segments in Case (ii) of Lemma~\ref{lem:ext-dense-bip}. The $A$-segments are shown by green lines, the $B$-segments are red, and the $(A,B)$-segments are blue.}
\label{fig:L5}
\end{figure}

\begin{lemma}\label{lem:ext-dense-bip}
Let $G$ be a $2$-connected graph and let $k$ be a positive integer. Suppose that $H$ is an induced subgraph of $G$ whose set of vertices has a partition $\{A,B\}$ with $|A|\geq \frac{3}{2}k$ and $B$ being an independent set. Suppose also that for every potentially cyclable  set  $S$ in $H$ of at most $k$ pairs of distinct vertices in $H$, with $s$ $A$-pairs and $t$ $B$-pairs, $H+S$ has a cycle of length at least $2|A|-s+t$. 
 Then $G$ has a cycle of length at least $2|A|+k$ if and only if one of the following holds:
\begin{itemize}
\item[(i)] There are two distinct vertices $x,y\in V(H)$ such that $H$ has an $(x,y)$-path $P$ of length at least $k+2$ whose internal vertices lie in $V(G)\setminus V(H)$.
\item[(ii)] There is a system of $T$-segments $\mathcal{P}=\{P_1,\ldots,P_r\}$ for $T=V(H)$ with $s$ $A$-segments and $t$ $B$-segments  such that 
\begin{itemize}
\item[(a)] $r\leq k$,
\item[(b)] every $A$-segment has at least two internal vertices,
\item[(c)] the total number of internal vertices on the paths in $\mathcal{P}$ is at least $k+s-t$ and at most $3k-2$.
\end{itemize}
\end{itemize}
 \end{lemma}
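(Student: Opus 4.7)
The plan mirrors the proof of Lemma~\ref{lem:ext-dense}, with additional care for the type-sensitive bookkeeping ($A$-, $B$-, and $(A,B)$-pairs) that the density hypothesis on $H$ requires. For the easy direction, given~(i), take $S=\{\{x,y\}\}$ and apply the hypothesis to obtain in $H+S$ a cycle of length at least $2|A|-s+t$ containing the edge $xy$; substituting $P$ (of length at least $k+2$) for $xy$ yields a cycle of length at least $2|A|+k$, as a quick case analysis on the type of $\{x,y\}$ confirms. Given~(ii), let $S$ be the set of endpoint-pairs of $P_1,\dots,P_r$; this $S$ is potentially cyclable with $|S|\le k$, so $H+S$ has a cycle of length at least $2|A|-s+t$ containing every edge of $S$, and substituting each $P_i$ for its corresponding edge of $S$ contributes exactly $N=\sum n_i$ additional vertices, giving total length at least $2|A|-s+t+N\ge 2|A|+k$ by condition~(c).

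For the hard direction, assume $G$ contains a cycle $C$ of length at least $2|A|+k$ and split on $|V(C)\cap V(H)|$. When the intersection has size at most one, a standard $2$-connectivity argument on $G$ (essentially identical to Cases~1 and~2 of Lemma~\ref{lem:ext-dense}) produces a path with both endpoints in $V(H)$, internal vertices outside $V(H)$, and length at least $k+2$, establishing~(i). When $|V(C)\cap V(H)|\ge 2$, I first observe that $V(C)\not\subseteq V(H)$: since $B$ is independent in the induced subgraph $H$, no two $B$-vertices are adjacent in $G$, so any cycle contained in $V(H)$ has at most $2|V(C)\cap A|\le 2|A|$ vertices, which is strictly less than $|V(C)|$. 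Hence $C$ visits outside vertices, and I decompose $C$ at the boundary of $V(H)$ into $T$-segments $P_1,\dots,P_\ell$ with $T=V(H)$; if any $P_i$ has at least $k+1$ internal vertices, then~(i) holds, so assume each segment has at most $k$ internal vertices.

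Classify the segments by endpoint type: let $s$, $t$, $u$ count the $A$-, $B$-, and $(A,B)$-segments, and call an $A$-segment \emph{useless} if it has exactly one internal vertex. Define the \emph{net value} of a segment with $n_i$ internal vertices to be $n_i-1$, $n_i$, or $n_i+1$ in the $A$-, $(A,B)$-, and $B$-cases respectively, so that every useful segment contributes at least $1$. A double-counting of endpoint instances gives $2s+u\le 2a$ and $2t+u\le 2b$, where $a=|V(C)\cap A|$ and $b=|V(C)\cap B|$, hence $s-t\le a-b$; combined with $a+b+N=|V(C)|\ge 2|A|+k$ and $a\le|A|$, the total net value satisfies $N-s+t\ge 2|A|-2a+k\ge k$. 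Useless segments contribute zero, so the useful ones alone sum to at least $k$. Selecting useful segments one by one until the cumulative net value first reaches $k$ gives $r\le k$ segments; the standard partial-sum estimate (the first $r-1$ selected segments have cumulative internal at most $k-1+s_{r-1}\le 2k-2$, and $n_r\le k$) bounds $N\le 3k-2$, matching conditions~(a)--(c).

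The main obstacle is the linear-forest condition on the chosen endpoint-pairs: it can fail only if the selected segments form a cycle in the endpoint-pair multigraph, which requires every inside-run of $C$ between consecutive selected segments to be a single $V(H)$-vertex with no unselected segment in between. If $r<\ell$, any unselected segment lies in some such gap and separates the flanking endpoints, ruling out a cycle. If $r=\ell$, all inside-runs must be single vertices, so $|V(C)|=\ell+N\le k+(3k-2)=4k-2$; but the hypothesis $|A|\ge\tfrac{3}{2}k$ forces $|V(C)|\ge 2|A|+k\ge 4k$, a contradiction. Hence the linear-forest condition always holds and~(ii) follows.
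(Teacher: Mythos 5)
Your overall architecture is the same as the paper's: decompose $C$ into $T$-segments, extract a subfamily whose ``net contribution'' reaches $k$, and kill the only obstruction to the linear-forest condition by showing that a cyclic family would force $|V(C)|\le 4k-2<2|A|+k$. Your bookkeeping via a per-segment net value ($n_i-1$, $n_i$, $n_i+1$ for $A$-, $(A,B)$-, $B$-segments) and a single greedy selection over the ``useful'' segments is in fact a cleaner packaging than the paper's three-stage selection (first $I_B$, then $I_{AB}$, then the $A$-segments with $\ge 2$ internal vertices, where each later stage must carry along all segments of the earlier stages); your stopping rule delivers (a), (b), the lower and upper bounds in (c), and the $r<\ell$ versus $r=\ell$ dichotomy for the forest condition, all correctly. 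The easy direction and Cases 1--2 are fine.

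The one step that does not hold up as written is the derivation of $s-t\le a-b$. From $2s+u\le 2a$ and $2t+u\le 2b$ you cannot subtract one inequality from the other; both are one-sided bounds in the same direction, so they yield no bound on $s-t$ in terms of $a-b$ (formally, $s=a$, $t=u=0$, $b$ large satisfies both displayed inequalities while violating the conclusion). The inequality is nevertheless true, and it is exactly where the independence of $B$ must enter (it is the analogue of the paper's unproved-in-detail claim $|V(C)|\le r+t+2|A|-|I_A'|$): count the edges of $C$ lying inside $H$. There are $|V(C)|-(N+\ell)=a+b-\ell$ of them, none is a $B$--$B$ edge since $H$ is induced and $B$ is independent, and the number of incidences between these edges and $A$-vertices is at most $2a-(2s+u)$ because every vertex of $V(C)\cap A$ lies on exactly two edges of $C$, of which those consumed by segment ends are not inside $H$. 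Hence $a+b-(s+t+u)\le 2a-2s-u$, which simplifies to $s-t\le a-b$; equivalently, the count of $A$--$A$ edges of $C$ equals $a-b-s+t\ge 0$. With this substitution your chain $N-s+t\ge 2|A|+k-2a\ge k$ is valid and the rest of the proof goes through.
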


\begin{proof}
The proof follows the same lines as the proof of Lemma~\ref{lem:ext-dense} but is more technical.  Let  $T=V(H)$. 
First, we show  that if either (i) or (ii) is fulfilled, then $G$ has a cycle of length at least $|V(H)|+k$. This part is almost identical to the corresponding part of the proof of Lemma~\ref{lem:ext-dense}.
 
Suppose that there are distinct $x,y\in T$ and an $(x,y)$-path $P$ in $G$ with all internal vertices outside $T$ such that the length of $P$ is at least $k+2$.  Let $S=\{xy\}$. We have that $H+S$ has a  cycle $C$ containing $xy$
of length at least $2|A|-1$.  We replace the edge $xy$ in $C$ by the path $P$. Then the length of the obtained cycle $C'$ is at least $2|A|+k$ as required. 

Assume that there is a system of $T$-segments $\mathcal{P}=\{P_1,\ldots,P_r\}$ for $T=V(H)$ with $s$ $A$-segments and $t$ $B$-segments  such that (a)--(b) are fulfilled. 
Let $x_i$ and $y_i$ be the end-vertices of $P_i$ for $i\in\{1,\ldots,r\}$ and define $S=\{x_1y_1,\ldots,x_ry_r\}$. Observe that $S$ is a potentially cyclable set for $H$ and $|S|\leq k$. Then $H+S$ has a cycle $C$ 
of length at least $2|A| $
that contains every edge of $S$. We construct the cycle $C'$ from $C$ by replacing $x_iy_i$ by the path $P_i$ for every $i\in\{1,\ldots,r\}$. Because the total number of internal vertices in the paths of $\mathcal{P}$  is at least $k+s-t$, the length of $C'$ is at least $|V(H)|+k$.

\medskip
For the opposite direction, assume that $G$ has a cycle $C$ of length at least $2|A|+k$. We consider the following three cases. The arguments in the first two cases repeat the arguments in the proof of Lemma~\ref{lem:ext-dense}.

\medskip
\noindent 
{\bf Case~1.} $V(C)\cap T=\emptyset$.
Since $G$ is a 2-connected graph, there are pairwise distinct vertices $x,y\in T$ and $x',y'\in V(C)$, and vertex disjoint  $(x,x')$ and $(y,y')$-paths $P_1$ and $P_2$ such that the internal vertices of the paths are outside $T\cup V(C)$. The cycle $C$ has length at least $2|A|+k\geq 3k$. Therefore, $C$
contains an $(x',y')$-path $P$ with at least $k+1$ vertices. The concatenation of $P_1$, $P$ and $P_2$ is an $(x,y)$-path in $G$ of length at least  $k+2$ whose internal verices  are outside $T$. Hence, (i) is fulfilled.

\medskip
\noindent 
{\bf Case~2.} $|V(C)\cap T|=1$.
Let $V(C)\cap T=\{x\}$ for some vertex $x$. Since $G$ is 2-connected, there is an $(y,y')$-path $P$ in $G-x$ such that $y'\in V(C)$, $y\in T$, and the internal vertices are outside $T\cup V(C)$.
Because the length of $C$ is at least $3k$,  $C$ contains an $(x',y')$-path $P'$ with at least $k+2$ vertices. The concatenation of $P'$ and $P$ is an $(s,t)$-path in $G$ of length at least  $k+2$ whose internal verices  are outside $T$. Hence, (i) holds.

\medskip
\noindent
{\bf Case~3.}  $|V(C)\cap T|\geq 2$. Observe that because $B$ is an independent set, $H$ has no cycle of length greater that $2|A|$. Therefore, as $k>0$ and $|V(C)|\geq 2|A|+k$,  $V(C)\setminus T\neq \emptyset$. 
Let $P_1,\ldots,P_\ell$ be the ``outside'' segments of $C$ with respect to $H$, that is, $P_1,\ldots,P_\ell$ are paths on $C$ such that ($*$) for every $i\in\{1,\ldots,\ell\}$, $P_i$ is an $(x_i,y_i)$-path with at least one internal vertex for some distinct $x_i,y_i\in T$ and the internal verices of $P_i$ are outside $T$, and  ($**$) $\bigcup_{i=1}^\ell V(P_i)\setminus T=V(C)\setminus T$.
If $P_i$ has length at least $k+2$ for some $i\in\{1,\ldots,\ell\}$, then (i) holds. Assume that this is not the case, that is, the length of each $P_i$ is at most $k+1$. 
Let  $I_A,I_B,I_{AB}\subseteq \{1,\ldots,\ell\}$ be the subsets of indices such that $P_i$ is an $B$-segment for $i\in I_B$, a $B$-segment for $i\in I_B$, and an $(A,B)$-segment for $i\in I_{AB}$; note that some of these sets may be empty. 

First, we consider $I_B$. 
Suppose that the paths $P_i$ for $i\in I_B$ have at least $k-|I_B|$ internal vertices. Consider an inclusion minimal subset of indices $J\subseteq I_B$ such that  the paths $P_i$ for $i\in J$ have at least $k-|J|$ internal vertices and let $S=\{x_iy_i\mid i\in J\}$. Observe that the pairs of $S$ compose either a linear forest or a cycle. 
Suppose that the pairs in $S$ form a cycle. Then every edge of $C$ is outside $H$, and we have that  $C$ is the concatenation of the paths $P_i\in J$. Note that $|J|\geq 2$  in this case.
Let $j\in J$. By the choice of $J$, the total number of internal vertices on the paths $P_i$ for $i\in J\setminus\{j\}$ is at most $k-|J|-1$. Because the length of $P_j$ is at most $k+1$, we have that $|V(C)|\leq (k-|J|-1)+|J|+k=2k-1<2|A|+k$; a contradiction. Therefore, $S$ forms a linear forest. We obtain that $\mathcal{P}=\{P_i\mid i\in J\}$ is a system of $T$ segments  and $|\mathcal{P}|\leq k$. To see that the total number of internal vertices on the paths in $\mathcal{P}$ is at most $2k$, let $j\in J$. Because the total number of internal vertices on the paths $P_i$ for $i\in J\setminus\{j\}$ is at most $k-|J|-1$ and the length of $P_j$ is at most $k+1$, the number of internal vertices on the paths in $\mathcal{P}$ is at most $(k-|J|-1)+k\leq 3k-2$. We conclude that (ii) is fulfilled. 

 Assume from now on that the paths $P_i$ for $i\in I_B$ have at most $k-|I_B|-1$ internal vertices. Then we analyse $I_{AB}$ in a similar way.   Let $t=|I_B|$. 
  Suppose that the paths $P_i$ for $i\in I_{AB}\cup I_B$ have at least $k-t$ internal vertices. 
  Consider an inclusion minimal subset of indices $J\subseteq I_{AB}$ such that  the paths $P_i$ for $i\in J\cup I_B$ have at least $k-t$ internal vertices and let $S=\{x_iy_i\mid i\in J\cup I_B\}$. Notice that $|S|\leq k$. 
  Again, we have that  the pairs of $S$ compose either a linear forest or a cycle. Then we exclude the possibility that $S$ forms a cycle. If we have a cycle, 
  then $C$ is the concatenation of the paths $P_i\in J\cup I_B$. Pick an arbitrary  $j\in J$. We have that the total number of internal vertices on the paths $P_i$ for $i\in (J\setminus\{j\})\cup I_B$ is at most $k-t-1$. Because the length of $P_j$ is at most $k+1$,  $|V(C)|\leq (k-t-1)+(|J|+t)+k= 2k+|J|-1<2|A|+k$  and we get a contradiction.  
 Hence, $S$ forms a linear forest and  $\mathcal{P}=\{P_i\mid i\in J\cup I_B\}$ is a system of $T$ segments  and $|\mathcal{P}|\leq k$. To upper bound the total number of internal vertices on the paths in $\mathcal{P}$, let $j\in J$. Because the total number of internal vertices on the paths $P_i$ for $i\in (J\setminus\{j\})\cup I_B$ is at most $k-t-1$ and the length of $P_j$ is at most $k+1$, the number of internal vertices on the paths in $\mathcal{P}$ is at most $2k-t-1\leq 3k-2$. We obtain that (ii) holds.    
 
It remains to consider the case where the paths $P_i$ for $i\in I_{AB}\cup I_B$ have at most $k-t-1$ internal vertices.  For this we analyse $I_A$. Let $I_A'\subseteq I_A$ be the set of indices $i\in I_A$ such that $P_i$ has at least two internal vertices.  Let $r$ be the number of internal vertices on the paths $P_i$ with $i\in I_A'\cup I_B\cup I_{AB}$. Observe that because $B$ is an independent set, $|V(C)|\leq r+t+2|A|-|I_A'|$. Hence, $r+t-|I_A'|\geq k$. We select an inclusion minimal set of indices $J\subseteq I_A'$  such that  the paths $P_i$ for $i\in J\cup I_B\cup I_{AB}$ have at least $k-t+|J|$ internal vertices and let $S=\{x_iy_i\mid i\in J\cup I_B\cup I_{AB}\}$. Let also $s=|J|$.
Observe that because $P_i$ has at least two internal vertices for every $i\in I_A'$, $|S|\leq k$. In the same way as above, the pairs of $S$ compose either a linear forest or a cycle, and we show that it should be a linear forest. If the pairs of $S$ form a  cycle,  then $C$ is the concatenation of the paths $P_i\in J$.  
Let $j\in J$. By the minimality of $J$, the total number of internal vertices on the paths $P_i$ for $i\in (J\setminus\{j\})\cup I_B\cup I_{AB}$ is at most $k+(s-1)-t-1$. Because the length of $P_j$ is at most $k+1$,  $|V(C)|\leq (k+(s-1)-t-1)+(s+t+|I_{AB}|)+k= 2k+|I_{AB}|+2s-2$. Observe that $t+s+|I_{AB}|\leq k$, because if $t+s+|I_{AB}|\geq k+1$, the total number of the internal vertices on the paths $P_i$ for $i\in (J\setminus\{j\})\cup I_B\cup I_{AB}$ would be at least $k+s$. 
Therefore, $|V(C)|\leq 2k+|I_{AB}|+2s-2\leq 4k-2<2|A|+k$; a contradiction. We obtain that  $S$ forms a linear forest and  $\mathcal{P}=\{P_i\mid i\in J\cup I_B\cup I_{AB}\}$ is a system of $T$ segments,
 and $|\mathcal{P}|\leq k$. To get the upper bound for the total number  of internal vertices on the paths in $\mathcal{P}$, let $j\in J$. Because the total number of internal vertices on the paths $P_i$ for $i\in (J\setminus\{j\})\cup I_B$ is at most $k+(s-1)-t-1$ and the length of $P_j$ is at most $k+1$, the number of internal vertices on the paths in $\mathcal{P}$ is at most $2k+s-t-2\leq 3k-2$. We conclude that  (ii) is fulfilled. This concludes the analysis of Case~3 and the proof of the lemma.    
\end{proof}

Fomin et al.~\cite{FominGLPSZ20} proved the following algorithmic result about systems of $T$-segments. 

\begin{proposition}[{\cite[Lemma~4]{FominGLPSZ20}}]\label{prop:syst}
Let $G$ be a graph, $T\subseteq V(G)$, and let $p$ and $r$ be positive integers.  Then it can be decided in $2^{\Oh(p)}\cdot n^{\Oh(1)}$ time whether there is a system of $T$-segments $\mathcal{P}$ with $r$ paths having $p$ internal vertices in total. 
\end{proposition}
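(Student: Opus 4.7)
The plan is to apply the color-coding technique of Alon, Yuster and Zwick. Observe first that any hypothetical system $\mathcal{P}=\{P_1,\ldots,P_r\}$ with $p$ internal (non-$T$) vertices satisfies $r\leq p$, since each $T$-segment contributes at least one internal vertex. Its union $F=\bigcup_i P_i$ is a linear forest whose connected components, call them \emph{super-paths}, are simple paths in $G$ starting and ending in $T$, each obtained by concatenating some of the $T$-segments at common $T$-endpoints. A super-path built from $t$ $T$-segments uses $t-1$ internal $T$-vertices plus $2$ $T$-endpoints, so $F$ contains at most $2r\leq 2p$ vertices of $T$ and exactly $p$ vertices outside $T$, i.e., at most $3p$ vertices in total.

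Next I would color $V(G)$ with a palette of $3p$ colors using an $(|V(G)|,3p)$-perfect hash family of size $2^{\Oh(p)}\log n$ (e.g., Naor--Schulman--Srinivasan). For at least one coloring $\chi$ in the family, all vertices in the sought $F$ receive pairwise distinct colors; we call such an $F$ \emph{colorful} for $\chi$. It therefore suffices to design, for fixed $\chi$, a $2^{\Oh(p)} n^{\Oh(1)}$-time algorithm that decides whether a colorful system exists.

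For fixed $\chi$, I would run a two-phase dynamic program. The first phase is the standard color-set path DP: for each $u,v\in T$ and each $S\subseteq[3p]$, compute whether $G$ has a simple $(u,v)$-path whose vertex set is colored exactly with $S$, by repeatedly extending partial paths from $u$ with neighbors whose color is not yet in the used set. From $S$ one reads off the corresponding $T$-segment count, namely $|\{c\in S : \chi^{-1}(c)\in T\}|-1$. In the second phase, set $B[S,j]$ to be true iff some collection of vertex-disjoint super-paths has combined color set exactly $S$ and total $T$-segment count exactly $j$; the transition peels off one super-path with color set $S''\subseteq S$ and segment count $j''$ (checked via the first-phase table) and recurses on $B[S\setminus S'', j-j'']$. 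Because all chosen vertices---including those in $T$---receive pairwise distinct colors, vertex-disjointness among super-paths is automatic and no further state is needed. The algorithm accepts iff $\bigvee_{S\subseteq[3p]} B[S,r]$ is true.

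Both phases run in $2^{\Oh(p)}n^{\Oh(1)}$ time, and summing over the $2^{\Oh(p)}\log n$ colorings from the perfect hash family yields the claimed bound. The subtlety I expect to be most delicate is the accounting of $T$-segments across a super-path, since a single super-path contributes more than one $T$-segment whenever it visits $T$ internally; however, because the coloring individually distinguishes each vertex of $F$ and records which colors map to $T$, the $T$-segment count of any super-path is fully determined by its color set, and the gluing recurrence collapses to a clean subset-sum style DP.
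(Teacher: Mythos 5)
A preliminary remark: the paper does not prove this proposition itself---it is imported verbatim from \cite{FominGLPSZ20} (Lemma~4)---but the paper's proof of the refined variant, Lemma~\ref{lem:find-syst}, is exactly the kind of color-coding argument you outline, so a direct comparison is possible. Your global architecture is the same as there: bound the solution size by $p+2r\le 3p$, color with $3p$ colors, derandomize via a perfect hash family, and run a subset dynamic program over color sets. The difference, and the source of a genuine gap, is your choice of DP unit: you take whole ``super-paths'' found by an unconstrained colorful-path DP, whereas the paper's unit is a single $T$-segment.

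The gap is one of soundness. You claim that the $T$-segment count of a super-path ``is fully determined by its color set,'' and this fails for two reasons. First, a color class of $\chi$ may contain both vertices of $T$ and vertices of $V(G)\setminus T$, so the set $S$ does not determine how many vertices of the path lie in $T$ (the test ``$\chi^{-1}(c)\in T$'' is not even well defined); this also means the parameter $p$ is never actually verified, since you accept on $\bigvee_{S} B[S,r]$ without constraining the number of non-$T$ colors in $S$. This part is repairable by adding to the DP state the number of $T$-vertices used (or the subset of colors occupied by $T$-vertices). Second, and more seriously, an arbitrary colorful simple path between two $T$-vertices need not decompose into valid $T$-segments: a $T$-segment must have length at least two, i.e., at least one internal vertex outside $T$, whereas your first-phase path DP will happily traverse an edge joining two $T$-vertices. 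Concretely, if $u,w,v\in T$ with $uw\in E(G)$ and $wz,zv\in E(G)$ for a single $z\notin T$, your algorithm certifies ``two segments, one internal vertex,'' yet no system of two $T$-segments with one internal vertex in total can exist, since each segment needs its own internal vertex---a false positive. The fix is what Lemma~\ref{lem:find-syst} does: compute the first-phase table only for genuine $T$-segments, i.e., run the colorful-path DP in $G-(T\setminus\{x,y\})$ and discard color sets of size at most two, and then glue individual segments (not super-paths) in the second stage while carrying the running counts of segments and internal vertices explicitly in the state. With those repairs your argument goes through and essentially coincides with the paper's.
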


However, we need an algorithm for constructing a system of $T$-segments with additional properties described in Lemma~\ref{lem:ext-dense-bip}. For this, we modify the algorithm from Proposition~\ref{prop:syst} (see~\cite[Lemma~4]{FominGLPSZ20}). For simplicity, we show how to solve the decision problem but the algorithm can be easily modified to produce a required system of $T$-segments.   

\begin{lemma}\label{lem:find-syst}
Let $G$ be a graph, $T\subseteq V(G)$, and let $\{A,B\}$ be a partition of $T$. Let also $p$ and $r$ be positive integers, and suppose that $s$ and $t$ are nonnegative integers with $s+t\leq r$.  Then it can be decided in $2^{\Oh(p)}\cdot n^{\Oh(1)}$ time whether there is a system of $T$-segments $\mathcal{P}$ with $r$ paths having $p$ internal vertices in total such that (i) $\mathcal{P}$ contains $s$ $A$-segments, (ii) $t$ $B$-segments, and (iii) every $A$-segment has at least two internal vertices. 
\end{lemma}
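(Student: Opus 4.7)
The plan is to adapt the algorithm underlying Proposition~\ref{prop:syst} by enriching its dynamic-programming state with the segment-type counts and by enforcing the length lower bound on $A$-segments locally. The starting point is the standard color-coding framework: color the vertices in $V(G)\setminus T$ with a palette of $p$ colors and, in addition, color the vertices of $T$ with a second palette of $2r$ colors. After derandomization via perfect hash families of size $2^{\Oh(p)}\cdot\log n$, it suffices to find, for any fixed coloring, a \emph{colorful} system---one in which every internal non-$T$ vertex uses a distinct color from $[p]$ and every used $T$-vertex uses a distinct color from the second palette. Since a system of $T$-segments uses at most $2r$ distinct $T$-vertices in total (endpoints plus internal breakpoints of the corresponding linear forest), and distinct chains of that forest are vertex-disjoint in $T$, the two palettes together capture any feasible solution, and $r\le p$ since each segment has at least one internal vertex.

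I would then precompute a \emph{segment table}. For every ordered pair $u,v\in T$ with $u\ne v$ and every $C\subseteq[p]$, set $\mathrm{seg}(u,v,C)=1$ iff $G$ has a $(u,v)$-path whose internal vertices lie in $V(G)\setminus T$ and are colored bijectively by $C$. This is the standard color-coding DP, with $\mathrm{seg}(u,v,\emptyset)=[uv\in E(G)]$ and $\mathrm{seg}(u,v,C)=\bigvee_{c\in C,\,w\in N_G(v)\setminus T,\,\chi(w)=c}\mathrm{seg}(u,w,C\setminus\{c\})$, computable in $2^{\Oh(p)}\cdot n^{\Oh(1)}$ time. The type of each candidate segment---$A$, $B$, or $(A,B)$---is read off from whether the endpoints lie in $A$ or in $B$.

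The main DP then assembles the system chain by chain, where a chain is a maximal path of the linear forest obtained by concatenating $T$-segments that share breakpoints in $T$. The state is a tuple $(C_1,C_2,a',b',c',\mathrm{status})$, with $C_1\subseteq[p]$ and $C_2\subseteq[2r]$ tracking the used internal and $T$-colors, the counts $(a',b',c')$ recording how many $A$-, $B$-, $(A,B)$-segments have been finalized, and $\mathrm{status}$ being either ``idle'' or a pair $(u_0,v)\in T\times T$ giving the start $u_0$ and the current endpoint $v$ of the chain under construction. Transitions either (a) open a new chain by choosing a fresh first segment from some $u_0$ to some $v$, (b) extend the current chain by appending a segment $(v,w)$, or (c) close the current chain and return to idle; in all cases the new segment's color set $C'$ must be disjoint from $C_1$, any newly introduced $T$-endpoint must carry a fresh $T$-color, $\mathrm{seg}=1$ must hold, and, if the segment is an $A$-segment, $|C'|\ge 2$ must hold. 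We accept iff the DP reaches $(C_1,C_2,s,t,r-s-t,\mathrm{idle})$ with $|C_1|=p$. The state space has size $2^p\cdot 2^{2r}\cdot p^3\cdot n^2$ and each transition costs $2^{\Oh(p)}\cdot n^{\Oh(1)}$, giving a total running time of $2^{\Oh(p)}\cdot n^{\Oh(1)}$.

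The main technical obstacle is that the linear-forest condition is inherently global: it bans cycles formed by the segments viewed as edges on $T$ and caps every $T$-vertex at degree two in the union. The chain-by-chain ordering is exactly what converts this global constraint into a local one. Building a single chain automatically yields a path in the forest; the auxiliary $T$-coloring, by forcing distinct $T$-colors across all used $T$-vertices, rules out vertex sharing between different chains (and hence all forbidden cycles); and within a chain, appending a segment raises the degree of the previous endpoint to exactly two, as required for a breakpoint. All remaining constraints---the segment-type distribution and the $|C'|\ge 2$ condition for $A$-segments---are local and are checked at the moment a segment is incorporated.
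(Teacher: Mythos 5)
Your proposal is correct and follows essentially the same route as the paper: color-coding (derandomized via perfect hash families), a precomputed table of colorful $T$-segments, and a dynamic program that assembles the linear forest while counting $A$-, $B$-, and $(A,B)$-segments and enforcing the two-internal-vertex requirement on $A$-segments locally at the moment a segment is added; the paper uses a single palette of $p+2r$ colors and a DP that peels off one segment at a time from a designated endpoint rather than your two-palette, chain-by-chain organization, but these are cosmetic differences. One small detail to add: since a $T$-segment must have length at least two, every segment you incorporate must satisfy $|C'|\geq 1$ (the paper enforces this by setting its segment predicate to false when the color set is too small), not only the $A$-segments with $|C'|\geq 2$.
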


\begin{proof}
As Lemma~4 in~\cite{FominGLPSZ20}, our algorithm is based on the \emph{color coding} technique introduced by Alon, Yuster and Zwick
in~\cite{AlonYZ95} (see also~\cite[Chapter~5]{cygan2015parameterized} for the introduction to the technique). Following~\cite{FominGLPSZ20}, we first describe a randomized Monte-Carlo algorithm and then explain how it could be derandomized. 

We say that a system of $T$-segments is \emph{feasible} if it satisfies the conditions of the lemma. 
Notice that if $\mathcal{P}=\{P_1,\ldots,P_r\}$ is a feasible system of $T$-segments, then for the total number of vertices in the paths, we have that $|\cup_{i=1}^rV(P_i)|\leq p+2r$. 
If $r>p$, then  a feasible system of $T$-segments does not exist, because each path in a solution should have at least one internal vertex. 
Hence, we assume without loss of generality that $r\leq p$. Let $q=p+2r\leq 3p$. We color the vertices of $G$ with $q$ colors uniformly at random. Denote by $c\colon V(G)\rightarrow\{1,\ldots,q\}$ the constructed coloring.  
We say that a feasible system of $T$-segments $\mathcal{P}=\{P_1,\ldots,P_r\}$ is \emph{colorful} if the vertices of $\bigcup_{i=1}^rV(P_i)$ are colored by distinct colors. 
We show the following claim.

\begin{claim}\label{cl:alg}
The existence of a colorful feasible system of $T$-segments $\mathcal{P}$ can be verified in $2^{\Oh(p)}\cdot n^{\Oh(1)}$ time.
\end{claim}

\begin{proof}[Proof of Claim~\ref{cl:alg}]
We design a dynamic programming algorithm that decides whether there is a colorful  feasible system of $T$-segments. 

The algorithm works in two stages.  In the first stage, for every two distinct vertices $x,y\in T$ and every set of colors $X\subseteq \{1,\ldots,q\}$ of size at least three, we compute the Boolean function $\alpha(x,y,X)$ such that
$\alpha(x,y,X)=\true$ if and only if there is a $T$-segment $P$ whose end-vertices are $x$ and $y$, $V(P)\cap T=\{x,y\}$, $|V(P)|=|X|$, and the vertices of $P$ are colored by distinct colors from $X$. 
Computing  $\alpha(x,y,X)$ is standard (see~\cite[Chapter~5]{cygan2015parameterized}), because we just find an $(x,y)$-path in $G-(T\setminus\{x,y\})]$ whose vertices are colored by the colors from $X$, and the table of values of the function can be computed in $2^{\Oh(p)}\cdot n^{\Oh(1)}$ time. 

To simplify further computations, we define 
\begin{equation*}
\alpha^*(x,y,X)=
\begin{cases}
\false&\mbox{if }x,y\in A\text{ and }|X|=3,\\
\false&\mbox{if }|X|\leq 2,\\
\alpha(x,y,X)&\mbox{otherwise}. 
\end{cases}
\end{equation*}

In the second stage, for every $x\in T$, all integers $p'$, $r'$, $s'$ and $t'$ such that 
$r'\leq p'\leq p$, $1\leq r'\leq r$, $0\leq s'\leq s$, $0\leq t'\leq t$ and $s'+t'\leq r'$, 
 and every set of colors $X\subseteq \{1,\ldots,q\}$ with $3\leq |X|\leq p'+2r'$, the algorithm computes the value of the Boolean function $\beta(x,p',r',s',t',X)$, where  $\beta(x,p',r',s',t',X)=\true$
 if and only if there is a system of $T$-segments $\mathcal{P}'=\{P_1',\ldots,P_{r'}'\}$ with $r'$ paths having $p'$ internal vertices in total such that 
 \begin{itemize}
 \item[(i)] $\mathcal{P}'$ contains $s'$ $A$-segments and $t'$ $B$-segments, 
 \item[(ii)] every $A$-segment has at least two internal vertices,
 \item[(iii)] for $U=\bigcup_{i=1}^{r'}V(P_i')$, $|U|=|X|$ and the vertices of $U$ are colored by distinct colors from $X$ by the coloring $c$,
 \item[(iv)] $x$  is an end-vertex of exactly one path of $\mathcal{P}'$.  
 \end{itemize}
We are interested in the values of  $\beta(x,p',r',s',t',X)$ for $r'\leq p'\leq p$, $1\leq r'\leq r$, $0\leq s'\leq s$, $0\leq t'\leq t$ and $s'+t'\leq r'$, and $3\leq |X|\leq p'+2r'$, but to simplify computations, we extend the domain and assume that  $\beta(x,p',r',s',t',X)=\false$ if one of these constraints is broken.  Observe that a colorful feasible system of $T$ segments exists if and only if $\beta(x,p,r,s,t,X)=\true$ for some $x\in T$ and $X\subseteq\{1,\ldots,q\}$.

We consecutively compute the tables of values of $\beta(x,p',r',s',t',X)$ for $r'=1,2,\ldots,r$ starting with $r'=1$. For this, we use the computed tables of values of $\alpha(x,y,X)$.

For $r'=1$, by the definition of $\beta(x,p',r',s',t',X)$,
we have that 
\begin{equation}\label{eq:beta-one}
\beta(x,p',r',s',t',X)=
\begin{cases}
\bigvee_{y\in A\setminus\{x\}}\alpha^*(x,y,X)&\mbox{if }x\in A,~s'=1,~t'=0,~p'=|X|-2,\\
\bigvee_{y\in B}\alpha^*(x,y,X)&\mbox{if }x\in A,~s'=0,~t'=0,~p'=|X|-2,\\
\bigvee_{y\in B\setminus\{x\}}\alpha^*(x,y,X)&\mbox{if }x\in B,~s'=0,~t'=1,~p'=|X|-2,\\
\bigvee_{y\in A}\alpha^*(x,y,X)&\mbox{if }x\in B,~s'=0,~t'=0,~p'=|X|-2,\\
\false&\mbox{otherwise};
\end{cases}
\end{equation}
here and further we assume that 
$\bigvee_{z\in Z}\varphi(z)=\false$ for any Boolean function $\varphi(z)$ if $Z=\emptyset$.

For $r'\geq 2$, we use the following recurrences. If $x\in A$, we have 
\begin{align}\label{eq:beta-two-A}
 \beta(x,p',r',s',t',X)=&
 \bigvee_{y\in A\setminus\{x\},Y\subset X}\big(\alpha^*(x,y,Y)\wedge\beta(y,p'-|Y|+2,s'-1,t',(X\setminus Y)\cup\{c(y)\})\big)\nonumber\\
 \vee&\bigvee_{y\in B,Y\subset X}\big(\alpha^*(x,y,Y)\wedge\beta(y,p'-|Y|+2,s',t',(X\setminus Y)\cup\{c(y)\})\big)\nonumber\\
 \vee&\bigvee_{y\in A\setminus\{x\},z\in T\setminus\{x,y\},Y\subset X}\big(\alpha^*(x,y,Y)\wedge\beta(z,p'-|Y|+2,s'-1,t',X\setminus Y)\big)\nonumber\\ 
  \vee&\bigvee_{y\in B,z\in T\setminus\{x,y\},Y\subset X}\big(\alpha^*(x,y,Y)\wedge\beta(z,p'-|Y|+2,s',t',X\setminus Y)\big).
 \end{align}
Symmetrically, if $x\in B$, 
\begin{align}\label{eq:beta-two-B}
 \beta(x,p',r',s',t',X)=&
 \bigvee_{y\in B\setminus\{x\},Y\subset X}\big(\alpha^*(x,y,Y)\wedge\beta(y,p'-|Y|+2,s',t'-1,(X\setminus Y)\cup\{c(y)\})\big)\nonumber\\
 \vee&\bigvee_{y\in A,Y\subset X}\big(\alpha^*(x,y,Y)\wedge\beta(y,p'-|Y|+2,s',t',(X\setminus Y)\cup\{c(y)\})\big)\nonumber\\
 \vee&\bigvee_{y\in B\setminus\{x\},z\in T\setminus\{x,y\},Y\subset X}\big(\alpha^*(x,y,Y)\wedge\beta(z,p'-|Y|+2,s',t'-1,X\setminus Y)\big)\nonumber\\ 
  \vee&\bigvee_{y\in A,z\in T\setminus\{x,y\},Y\subset X}\big(\alpha^*(x,y,Y)\wedge\beta(z,p'-|Y|+2,s',t',X\setminus Y)\big).
 \end{align}

Correctness of (\ref{eq:beta-two-A}) and (\ref{eq:beta-two-B}) is proved by standard arguments. Hence, we only sketch the correctness proof for (\ref{eq:beta-two-A}) (the proof for (\ref{eq:beta-two-B}) is done by the same arguments).

Suppose that $x\in A$ and $\beta(x,p',r',s',t',X)=\true$. By the definition, there is a system of $T$-segments $\mathcal{P}'=\{P_1',\ldots,P_{r'}'\}$ with $r'$ paths having $p'$ internal vertices in total that satisfies conditions (i)--(iv). We assume without loss of generality that $x$ is an end-vertex of $P_1$. Let $y$ be the other end-vertex of $P_1$. Then $\alpha^*(x,y,Y)=\true$. Let also $Y=c^{-1}(V(P_1))$ and 
$\mathcal{P}''=\{P_2',\ldots,P_{r'}'\}$. 
We have four cases depending on whether $y\in A$ or $y\in B$ and on whether $y$ is a shared end-vertex or not. 
Suppose that $y\in A$ and $y$ is an and vertex of another path, say, $P_2$. Then 
$\beta(y,p'-|Y|+2,s'-1,t',(X\setminus Y)\cup\{c(y)\})=\true$. Therefore, the value of the right part of (\ref{eq:beta-two-A}) is $\true$.  The other cases are similar. 
If  $y\in B$ and $y$ is an and vertex of another path, then $\beta(y,p'-|Y|+2,s',t',(X\setminus Y)\cup\{c(y)\})=\true$. 
If  $y\in A$ and $y$ not an end-vertex of $P_2',\ldots,P_{r}'$, then $\beta(z,p'-|Y|+2,s'-1,t',X\setminus Y)=\true$.
If   $y\in B$ and $y$ not an end-vertex of $P_2',\ldots,P_{r}'$, then $\beta(z,p'-|Y|+2,s',t',X\setminus Y)=\true$. 
In all these cases, the value of the right part of (\ref{eq:beta-two-A}) is $\true$.  

For the opposite direction, assume that the value of the right part of (\ref{eq:beta-two-A}) is $\true$. Then either there is $y\in A\setminus\{x\}$ and $Y\subseteq Y$ such that 
$\alpha^*(x,y,Y)\wedge\beta(y,p'-|Y|+2,s'-1,t',(X\setminus Y)\cup\{c(y)\})=\true$, or   there is $y\in B$ and $Y\subseteq Y$ such that
$\alpha^*(x,y,Y)\wedge\beta(y,p'-|Y|+2,s',t',(X\setminus Y)\cup\{c(y)\})=\true$, or 
there  are $y\in A\setminus\{x\}$, $z\in T\setminus \{x,y\}$, and $Y\subseteq Y$ such that $\alpha^*(x,y,Y)\wedge\beta(z,p'-|Y|+2,s'-1,t',X\setminus Y)=\true$, or 
there  are $y\in B$, $z\in T\setminus \{x,y\}$, and $Y\subseteq Y$ such that $\alpha^*(x,y,Y)\wedge\beta(z,p'-|Y|+2,s',t',X\setminus Y)=\true$. 
The arguments for these four cases are very similar. Therefore, we consider only the first case when there is $y\in A\setminus\{x\}$ and $Y\subseteq Y$ such that 
$\alpha^*(x,y,Y)\wedge\beta(y,p'-|Y|+2,s'-1,t',(X\setminus Y)\cup\{c(y)\})=\true$. 

Because $\alpha^*(x,y,Y)=\true$, $G$ has an $(x,y)$-path $P$ with $|Y|-2\geq 2$ internal vertices and the vertices of $P$ are colored by distinct colors from $X$.  Note that $P$ is an $A$-segment. Since $\beta(y,p'-|Y|+2,s'-1,t',(X\setminus Y)\cup\{c(y)\})=\true$, we have that is a system of $T$-segments $\mathcal{P}''=\{P_1',\ldots,P_{r'-1}'\}$ with $r'-1$ paths having $p'-|Y|+2$ internal vertices such that 
(i$^*$)$\mathcal{P}''$ contains $s'-1$ $A$-segments and $t'$ $B$-segments, 
 (ii$^*$) every $A$-segment has at least two internal vertices,
 (iii$^*$) for $U=\bigcup_{i=1}^{r'-1}V(P_i')$, $|U|=|(X\setminus Y)\cup \{c(y)\}|$ and the vertices of $U$ are colored by distinct colors from $(X\setminus Y)\cup \{c(y)\}$ by the coloring $c$, and 
(iv$^*$) $y$  is an end-vertex of exactly one path of $\mathcal{P}''$.   Let $\mathcal{P'}=\{P,P_1',\ldots,P_{r'-1}'\}$. Then $\mathcal{P}'$ is a system of $T$-segments with $r'$ path having $p'$ internal vertices and conditions (i)--(iv) are fulfilled. Therefore, $\beta(x,p',r',s',t',X)=\true$. This completes the correction proof.

To evaluate the running time, note that the values of $\beta(x,p',r',s',t',X)$ are computed for at most $n$ vertices $x$, at most $n^4$ $4$-tuples of integers $p',r',s',t'$, and at most $2^q=2^{\Oh(p)}$ sets $X$.  
Because the table of values of $\alpha(x,y,X)$ can be computed in $2^{\Oh(p)}\cdot n^{\Oh(1)}$ time, the initial table of values of $\beta(x,p',r',s',t',X)$ for $r'=1$ is computed in $2^{\Oh(p)}\cdot n^{\Oh(1)}$ time. To compute the value of $\beta(x,p',r',s',t',X)$, we use either (\ref{eq:beta-two-A}) or  (\ref{eq:beta-two-A}). In these recurrences, we go through at most $n$ choices of $y$ and $z$, and consider ay most $2^q$ subsets $Y$. This means, that for each $r'\geq 2$, $\beta(x,p',r',s',t',X)$ is computed in $2^{\Oh(p)}\cdot n^{\Oh(1)}$ from the previously computed tables. We conclude that overall running time is  
$2^{\Oh(p)}\cdot n^{\Oh(1)}$. This concludes the proof of the claim.
\end{proof}

Assume that there is a feasible system of $T$-segments. We upper bound the probability that there is no colorful system. Let $\mathcal{P}=\{P_1,\ldots,P_r\}$ be a  feasible system of $T$-segments. Since the paths of $\mathcal{P}$ have at most $q$ vertices in total, the probability that the vertices of paths are colored by distinct colors if we assign the colors uniformly at random is at least $\frac{q!}{q^q}\leq e^{-q}\leq e^{-3p}$. Then the probability that there are two vertices with the same colors is at  most $1-e^{-3p}$. 

This observation leads us to a Monte-Carlo algorithm. We consequently construct at most $e^{3p}$ random colorings $c\colon V(G)\rightarrow\{1,\ldots,q\} $. For each coloring, we use Claim~\ref{cl:alg}
to verify whether there is  a colorful feasible system of $T$-segments $\mathcal{P}$. If we find such a system, we return the yes answer and stop. Otherwise, if we fail to find a colorful system for $e^{3p}$ random colorings, we return the no answer. The probability that this negative answer is false is at most $(1-e^{-3e})^{3e}\leq e^{-1}<1$. This means, that the probability of the false negative answer is upper bounded by a constant $e^{-1}<1$, 
The running time of the algorithm is $2^{\Oh(p)}\cdot n^{\Oh(1)}$.  

This algorithm can be derandomized using standard tools (see~\cite{AlonYZ95} and \cite[Chapter~5]{cygan2015parameterized}). This is done by using \emph{perfect hash functions} (we refer to~\cite[Chapter~5]{cygan2015parameterized} for the definition) instead of random colorings.  The currently best explicit construction of such families was done by Naor, Schulman and Srinivasan in~\cite{NaorSS95}. 
The family of perfect hash function in our case has size $e^{3p}\cdot p^{O(\log p)}\cdot \log n$ and can be constructed in time $e^{3p}\cdot p^{O(\log p)}\cdot n\log n$~\cite{NaorSS95}. 
This allows to obtain a deterministic algorithm that runs in $2^{\Oh(p)}\cdot n^{\Oh(1)}$ time. 
\end{proof}

\section{Proof of the Main Result}\label{sec:main}
Now we have all ingredients to prove our main result. We restate it here for the reader's convenience.

\main*

\begin{proof}
Let $(G,k)$ be an instance of  \probLCMAD, where $G$ is a 2-connected graph. We use the algorithm from Proposition~\ref{prop:densest} and compute $\mad(G)$ in polynomial time. If $k=0$, the problem is trivial, because a cycle of length at least $\mad(G)$ exists by Theorem~1. Hence, we can assume that $k\geq 1$. If $k>\frac{1}{88}\mad(G)-1$, 
we use  Proposition~\ref{prop:LC-best} and solve the problem in $2^{\Oh(k)}\cdot n^{\Oh(1)}$ time. From now, we assume that $0<k\leq \frac{1}{88}\mad(G)-1$. In particular,  $k\leq \frac{1}{80}\mad(G)-1$.
We apply Lemma~\ref{lem:fid-dense}, and in polynomial time either
\begin{itemize}
 \item[(i)]  find a cycle of length at least $\mad(G)+k$ in $G$, or
 \item[(ii)] find an induced subgraph $H$ of $G$ with $\ad(H)\geq \mad(G)-1$ such that  $\delta(H)\geq \frac{1}{2}\ad(H)$ and $|V(H)|< \ad(H)+k+1$, or
 \item[(iii)] find an induced subgraph $H$ of $G$ 
 such that there is a partition $\{A,B\}$ of $V(H)$ with the following properties:
 \begin{itemize}
 \item $B$ is an independent set,
 \item $\frac{1}{2}\mad(G)-4k\leq |A|$,
 \item for every $v\in A$, $|N_H(v)\cap B|\geq 2|A|$,
 \item for every $v\in B$, $\dg_H(v)\geq |A|-2k-2$.
 \end{itemize}
  \end{itemize}

If the algorithm finds a cycle of length at least $\mad(G)+k$, then we return it and stop. In Cases~(ii) and (iii), we get a dense induced subgraph $H$ that can be used to find a solution if it exists.

\medskip
\noindent
{\bf Case~(ii).} The algorithm from Lemma~\ref{lem:fid-dense} returns an induced subgraph $H$ of $G$ with $\ad(H)\geq \mad(G)-1$ such that  $\delta(H)\geq \frac{1}{2}\ad(H)$ and $|V(H)|< \ad(H)+k+1$. 
Let $k'=\lceil\mad(G)\rceil+k-|V(H)|$. We have that $G$ has a cycle of length at least $\mad(G)+k$ if and only if $G$ has a cycle of length at least $|V(H)|+k'$. 
Note that $k'\leq k+1 \leq \frac{1}{88}\mad(G)\leq \frac{1}{60}\ad(H)$. By Lemma~\ref{lem:dense}, for every potentially cyclable set $S$ of at most $k+1$ pairs of distinct vertices of $H$,  $H+S$ has a Hamiltonian cycle containing every edge of $S$. 

Suppose that $k'\leq 0$. Observe that $H$ has a Hamiltonian cycle as we can use Lemma~\ref{lem:dense} for $S=\{e\}$, where $e$ is an arbitrary edge $e\in E(H)$. Then we conclude that $H$ has a cycle of length at least $\mad(G)+k$ and stop. Assume that $k'>0$. Note that $|V(H)|\geq\ad(H)\geq k'$. Then by Lemma~\ref{lem:ext-dense},  $G$ has a cycle of length at least $|V(H)|+k'$ if and only if one of the following holds:
\begin{itemize}
\item[(a)] There are two distinct vertices $s,t\in V(H)$ such that $H$ has an $(s,t)$-path $P$ of length at least $k'+1$ whose internal vertices  lie in $V(G)\setminus V(H)$.
\item[(b)] There is a system of $T$-segments $\mathcal{P}=\{P_1,\ldots,P_r\}$ for $T=V(H)$ such that $r\leq k'$ and the total number of vertices on the paths in $\mathcal{P}$ outside $T$ is at least $k'$ and at most $2k'-2$.
\end{itemize}

First, we check if (a) can be satisfied. For this, we consider all pairs of distinct vertices $s$ and $t$ of $H$. For every pair, we construct $G'=G[(V(G)\setminus V(H))\cup\{s,t\}]$ and use Proposition~\ref{prop:st} to find an $(s,t)$-path of length at least $k'+1$ in $G$ in $2^{\Oh(k)}\cdot n^{\Oh(1)}$ time. If we find such a path for some pair, we report the existence of a cycle of length at least $\mad(G)+k$ and stop. Otherwise, we verify (b) using Proposition~\ref{prop:syst}. We use the algorithm from Proposition~\ref{prop:syst} for $r\in\{1,\ldots,k'\}$ and for $p\in\{k',\ldots,2k'-2\}$. If we find a required system of $T$-segments, then we return that $G$ has a cycle of of length at least $\mad(G)+k$ and stop. If we fail to find such a system for every $r$ and $p$, we conclude that $G$ has no cycle of length at least $\mad(G)+k$. Note that this can be done in $2^{\Oh(k)}\cdot n^{\Oh(1)}$ time. This concludes Case~(ii).

\medskip
\noindent
{\bf Case~(iii).} The algorithm from Lemma~\ref{lem:fid-dense} returns an induced subgraph $H$ of $G$  such that there is a partition $\{A,B\}$ of $V(H)$ with the properties:
 \begin{itemize}
 \item $B$ is an independent set,
 \item $\frac{1}{2}\mad(G)-4k\leq |A|$,
 \item for every $v\in A$, $|N_H(v)\cap B|\geq 2|A|$,
 \item for every $v\in B$, $\dg_H(v)\geq |A|-2k-2$.
\end{itemize}
Let $k'=\lceil\mad(G)\rceil+k-2|A|$. Observe that $G$ has a cycle of length at least $\mad(G)+k$ if and only if $G$ has a cycle of length at least $2|A|+k'$. 
We have that $2|A|\geq \lceil\mad(G)\rceil-8k$ and, therefore, $k'\leq 9k$. 

Note that $|A|\geq \frac{1}{2}\mad(G)-4k\geq 40k$, since $k\leq \frac{1}{88}\mad(G)-1$. Also, we have that for every $v\in B$, $\dg_H(v)\geq |A|-4k$.   Therefore, by Lemma~\ref{lem:dense-bip}, for every potentially cyclable set $S$ of at most $9k$ pairs of distinct vertices,  $G'=G+S$ has a cycle $C$ containing every edge of $S$ and the length of $C$ is $2|A|-s+t$, where $s$ in the number of edges of $S$ with both end-vertices in $A$ and $t$ is the number of edges in $S$ with both end-vertices in $B$. 
 
Suppose that $k'\leq 0$. Then we observe that $H$ has a cycle of length $2|A|$ because we can set $S=\{xy\}$, where $xy\in E(H)$ with $x\in A$ and $y\in B$. Then $H$ has a cycle of length at least $2|A|+k'$ and we conclude that $G$ has a cycle of length at least $\mad(G)+k$. Assume that $k'>0$. Since  $|A|\geq 40k\geq \frac{3}{2}k'$, we can apply Lemma~\ref{lem:ext-dense-bip}. We obtain that  
  $G$ has a cycle of length at least $2|A|+k'$ if and only if one of the following holds:
\begin{itemize}
\item[(a)] There are two distinct vertices $x,y\in V(H)$ such that $H$ has an $(x,y)$-path $P$ of length at least $k'+2$ whose internal vertices  are in $V(G)\setminus V(H)$.
\item[(b)] There is a system of $T$-segments $\mathcal{P}=\{P_1,\ldots,P_r\}$ for $T=V(H)$ with $s$ $A$-segments and $t$ $B$-segments  such that 
\begin{itemize}
\item $r\leq k'\leq 9k$,
\item every $A$-segment has at least two internal vertices,
\item the total number of internal vertices vertices on the paths in $\mathcal{P}$ is at least $k'+s-t$ and at most $3k'-2\leq 27k-2$.
\end{itemize}
\end{itemize}

To verify (a), we use the same approach as in Case~(ii), that is,  we consider all pairs of distinct vertices $x$ and $y$ of $H$. For every pair, we construct $G'=G[(V(G)\setminus V(H))\cup\{x,y\}]$ and use Proposition~\ref{prop:st} to find an $(x,y)$-path of length at least $k'+2$ in $G$ in $2^{\Oh(k)}\cdot n^{\Oh(1)}$ time. If we find such a path for some pair, we report the existence of a cycle of length at least $\mad(G)+k$ and stop. Otherwise, we verify (b) using Lemma~\ref{lem:find-syst}. We use the algorithm from this lemma for $r\in\{1,\ldots,k'\}$,  $s\in\{0,\ldots,k'\}$ and $t\in\{0,\ldots,k'\}$ such that $s+t\leq r$, and for $p\in\{k'+s-t,3k'-2\}$. 
If we find a system of $T$-segments $\mathcal{P}=\{P_1,\ldots,P_r\}$ for $T=V(H)$ with $s$ $A$-segments and $t$ $B$-segments with the required properties, then we conclude that $G$ has a cycle of of length at least $2|A|+k'$ and stop. If such a system does not exist for every choice of $r$, $s$, $t$, and $p$, we have that $G$ has no cycle of length at least $\mad(G)+k$. 
By Lemma~\ref{lem:find-syst}, this can be done in $2^{\Oh(k)}\cdot n^{\Oh(1)}$ time, because $k'\leq 9k$. 
This concludes Case~(iii).

Because the algorithm from Lemma~\ref{lem:fid-dense} is polynomial and the other subroutines used in our algorithm for \probLCMAD run in $2^{\Oh(k)}\cdot n^{\Oh(1)}$, the overall running time is  $2^{\Oh(k)}\cdot n^{\Oh(1)}$ and this concludes the proof.

Let us remark that since the algorithms for paths in Propositions~\ref{prop:syst} and \ref{prop:st} and Lemma~\ref{lem:find-syst} are, in fact, constructive, and the same holds for the algorithms for cycles 
in Lemmas ~\ref{lem:ext-dense}  and \ref{lem:ext-dense-bip} and Proposition~\ref{prop:LC-best}, our algorithm is not only able to solve the decision problem, but also can find a cycle of length at least $\mad(G)+k$ if it exists.
\end{proof}  
  
 We observe that the $2$-connectivity condition in Theorem~\ref{thm:mad-fpt}  is crucial for tractability and we cannot drop it even if we consider the problem of finding a cycle whose length exceeds the bound $\eg(G)$ of  Erd{\H{o}}s and Gallai by one.
   
 \lowerbound*  
 
 \begin{proof} 
  We demonstrate an easy reduction from the \HamCycle problem that is \classNP-complete (see~\cite{GareyJ79}). Let $G$ be a graph with $n\geq 3$ vertices and $m$ edges.
  We also assume without loss of generality that $\eg(G)\leq n-1$; otherwise, $G$ is Hamiltonian by Theorem~\ref{thm:EG}. For every vertex $v\in V(G)$, we construct a clique $X_v$ with $n-2$ vertices and then make the vertices of $X_v$ adjacent to $v$. Denote by $G'$ the obtained graph. Then $n'=|V(G')|=n(n-1)$ and $m'=|E(G')|=n\binom{n-1}{2}+m$. We have that
  \begin{equation*} 
  \eg(G')=\frac{2m'}{n'-1}=\frac{n(n-1)(n-2)+m}{n(n-1)-1}>n-2.
  \end{equation*}
  Becase $\eg(G)\leq n-1$, $2m\leq (n-1)^2$ and 
    \begin{equation*} 
  \eg(G')=\frac{2m'}{n'-1}=\frac{n(n-1)(n-2)+m}{n(n-1)-1}\leq n-1.
  \end{equation*}  
  Then $G'$ has a cycle of length at least $\eg(G')+1$ if and only if it has a cycle of length at least  $n$. By the construction of $G'$, $G'$ has a cycle of length at least $n$ if and only if $G$ has such a cycle, that is, if and only if $G$ is Hamiltonian.
 \end{proof}

\end{document}